\newcommand{\dbmij}[3]{\ensuremath{\mathbf{#1}_{#2,#3}}}
\newcommand{\dbm}[1]{\ensuremath{\mathbf{#1}}}
\newcommand{\mine}{Min\'{e}}
\newcommand{\dbmstrong}[1]{\ensuremath{\dbm{m^{\textrm{\textbullet}}}}}
\newcommand{\gammadiff}{\ensuremath{\gamma_{\text{diff}}}}
\newcommand{\gammaoct}{\ensuremath{\gamma_{\text{oct}}}}
\newcommand{\floor}[1]{\ensuremath{\lfloor #1 \rfloor}}
\newcommand{\bari}{\ensuremath{\bar\imath}}
\newcommand{\bark}{\ensuremath{\bar{k}}}
\newcommand{\barj}{\ensuremath{\bar\jmath}}
\newcommand{\barb}{\ensuremath{\bar{b}}}
\newcommand{\bara}{\ensuremath{\bar{a}}}
\newcommand{\floorfrac}[2]{\ensuremath{\left \lfloor \frac{#1}{#2} \right \rfloor}}
\newcommand{\CheckConsistent}{\textsc{CheckConsistent}}
\newcommand{\CheckIntegerConsistent}{\textsc{CheckZConsistent}}
\newcommand{\NonIncrementalClosure}{\textsc{Close}}
\newcommand{\IncrementalClosureHoisting}{\textsc{IncCloseHoist}}
\newcommand{\IncrementalClosureInSitu}{\textsc{InplaceIncClose}}
\newcommand{\IncrementalClosure}{\textsc{IncClose}}
\newcommand{\IncrementalStrongClosure}{\textsc{IncStrongClose}}
\newcommand{\IncrementalStrongClosureReduce}{\textsc{IncStrongCloseMotion}}
\newcommand{\IncrementalIntegerClosure}{\textsc{IncZClose}}
\newcommand{\IncrementalIntegerClosureInSitu}{\textsc{InplaceIncZClose}}
\newcommand{\IncrementalStrongClosureInSitu}{\textsc{InplaceIncStrongClose}}
\newcommand{\IncrementalClosureMine}{\textsc{Min\'{e}IncClose}}
\newcommand{\Strengthen}{\textsc{Str}}
\newcommand{\Tighten}{\textsc{Tighten}}
\newcommand{\TightClosure}{\textsc{TightClose}}
\newcommand\sIf[2]{ \If{#1}#2\EndIf}
\numberwithin{theorem}{section}
\numberwithin{lemma}{section}
\numberwithin{corollary}{section}
\numberwithin{proposition}{section}
\numberwithin{definition}{section}
\begin{document}

\title{Incrementally Closing Octagons\thanks{This work was supported by EP/K031929/1 and EP/K032585/1
funded by GCHQ in
association with EPSRC and the EPSRC/NRF UK/Singaporean research grant EP/N020243/1}}

%\titlerunning{Short form of title}        % if too long for running head

\author{Aziem Chawdhary \and Ed Robbins \and \mbox{Andy King}}

%\authorrunning{Short form of author list} % if too long for running head

\institute{Aziem Chawdhary,
              School of Computing, University of Kent, Canterbury, CT2 7NF, UK.
              Tel.: +44-1227-827911
              Fax: +44-1227-762811
              \email{a.a.chawdhary@kent.ac.uk}
}

\date{Received: date / Accepted: date}
% The correct dates will be entered by the editor

\maketitle

%\begin{abstract}
%Insert your abstract here. Include keywords, PACS and mathematical
%subject classification numbers as needed.
  % !TEX root = main.tex

\begin{abstract}
  The octagon abstract domain is a widely used numeric abstract domain
  expressing relational information between variables whilst being
  both computationally efficient and simple to implement. Each element
  of the domain is a system of constraints where each constraint takes
  the restricted form $\pm x_i \pm x_j \leq c$. A key family of
  operations for the octagon domain are closure algorithms, which
  check satisfiability and provide a normal form for octagonal
  constraint systems. We present new quadratic incremental algorithms
  for closure, strong closure and integer closure and proofs of their
  correctness. We highlight the benefits and measure the performance
  of these new algorithms.
\end{abstract}

%%% Local Variables:
%%% mode: latex
%%% TeX-master: "main"
%%% End:

\keywords{Abstract Interpretation \and Octagons \and Incremental Closure}
% \PACS{PACS code1 \and PACS code2 \and more}
% \subclass{MSC code1 \and MSC code2 \and more}
%\end{abstract}

% !TEX root = main.tex

\section{Introduction}
\label{sec:intro}

The view that simplicity is a virtue in competing scientific theories
and that, other things being equal, simpler theories should be
preferred to more complex ones, is widely advocated by scientists and
engineers. Preferences for simpler theories are thought to have played
a role in many episodes in science, and the field of abstract domain
design is no exception. Abstract domains that have enduring appeal are
typically those that are conceptually simple. Of all the weakly
relational domains, for example, octagons \cite{mine_octagon_2006} are
arguably the most popular. One might claim that octagons are more
elegant than, say, the two variable per inequality (TVPI) domain
\cite{tvpi}, and certainly they are easier to understand and
implement. Yet one important operation for this popular domain has
remained elusive: incremental closure.

Inequalities in the octagon domain take the restricted form of
$\pm x_i \pm x_j \leq c$, where $x_i$ and $x_j$ are variables and $c$
is a numerical constant. Difference bound matrices (DBMs) can be
adapted to represent systems of octagonal constraints, for which a key
family of operations is closure. Closure, in its various guises,
provides normal forms for DBMs, allowing satisfiability to be observed
and equality to be checked. Closure also underpins operations such as
join and projection (the forget operator), hence the concept of closure
is central to the design of the whole domain. Closure uses shortest
path algorithms, such as Floyd-Warshall
\cite{Floyd:1962,Warshall:1962}, to check for satisfiability. However,
octagons can encode unary constraints, which require a stronger notion
of closure, known as strong closure, to derive a normal form.
Moreover, a refinement to strong closure, called integer closure, is
required to detect whether octagonal constraints have an integral solution.

A frequent use-case in program analysis is adding a single new
octagonal constraint to a closed DBM and then closing the augmented
system. This is incremental closure. Incremental closure not only
arises when an octagon for one line is adjusted to obtain an octagon
for the next: incremental closure also occurs in integer wrapping
\cite{simon_taming_2007} which involves repeatedly partitioning a
space into two (by adding a single constraint), closing and then
performing translation. Incremental closure is useful in access-based localisation \cite{oh11access}, which
analyses each procedure using abstractions defined over only those
variables it accesses. One way to adapt localisation to octagons
\cite{beckschulze12access} is to introduce fresh variables, called
anchors, that maintain the relationships which hold when a procedure is
entered. One anchor is introduced for each variable that is accessed
within the procedure. The body of the callee is analysed to capture
how a variable changes relative to its anchor, and then this change is
propagated into the caller. The abstraction of the callee is
amalgamated with that of the caller by replacing the variables in the
caller abstraction with their anchors, imposing the constraints from
the callee abstraction, and then eliminating the anchors. If there are
only a few non-redundant constraints in the callee
\cite{bagnara_weakly-relational_2009} then incremental closure
is attractive for combining caller and callee abstractions. Nevertheless, the
experimental results focus on the use-case of adding a single constraint
encountered on one line to an octagon that summaries the previous line.

In SMT solving, difference logic
\cite{nieuwenhuis05dpll} is widely supported, suggesting that an incremental
solver for the theory of octagons \cite{robbins15scp} would also be useful.
Furthermore afield in constraint solving, relational and mixed integer-real abstract domains show promise for
enhancing constraint solvers \cite{pelleau13constraint} and octagons
have been deployed for solving continuous constraints
\cite{pelleau14octagon}. In this context, a split operator is used to
divide the solution space into two sub-spaces by adding opposing
constraints such as $x_i - x_j \leq c$ and $x_j - x_i \leq -c$.
Splitting is repeatedly applied until a set of octagons is derived
that cover the entire solution space, within a given precision
tolerance. Propagation is applied after every split, which suggests
incremental closure, and a scheme in which incremental closure is
applied whenever a propagator updates a variable.  This use-case is also
examined experimentally.

Closing an augmented DBM is less general than closing an arbitrary DBM
and therefore one would expect incremental closure to be both
efficient and conceptually simple. However the running time of the
algorithm originally proposed for incremental closure \cite[Section
4.3.4]{mine-PhD04} is cubic in the number of variables (see Section~\ref{subsec:original} for an
explanation of the impact of row and column swaps).
The algorithms presented in this paper stem from the desire to understand
incremental closure by providing correctness proofs that would, in turn, provide a pathway to mechanisation. 
Yet the act of restructuring the proofs for \cite{ChawdharyRKAPLAS14}, exposed a degenerate form of
propagation and revealed fresh algorithmic insights. 
The resulting family of closure algorithms includes: 
a new algorithm for increment closure;
a new algorithm for strong closure that
performs strengthening on-the-fly, rather
than a separate pass over the whole DBM;
a further refinement to strong closure applicable when the input DBM is strongly closed;
and finally a new incremental closure algorithm for integer DBMs.
All algorithms
significantly outperform the incremental algorithm of \mine\/
\cite[Section 4.3.4]{mine-PhD04}, whilst entirely recovering closure, as
is demonstrated from their deployment in an off-the-shelf abstract interpretation
and a continuous constraint solver.  The dramatic speedups underscore
the importance of this domain operation.

\subsection{Contributions}
\label{sec:contributions}

We summarise the contributions of our work as follows:
\begin{itemize}

\item Using new insights, we present new incremental algorithms for
  closure, strong closure and integer closure
  (Section~\ref{sec:incrementalclosure},
  Section~\ref{sec:strongclosuremain} and
  Section~\ref{sec:integerclosure} respectively). We show how
  code hoisting can be applied to incremental closure and how strength
  reduction can be applied to strong incremental closure.
  
\item We prove our algorithms correct and show how proofs for
  existing closure algorithms can be simplified, paving the way for
  mechanised formalisation. (To keep the length of the paper manageable, the proofs
  are relegated to on-line appendix \cite{chawdhary16incrementing}.  The exception 
  is Lemma~\ref{lemma:closed} since the argument is itself a significant conceptual advance, hence is included in the body of the paper.)

\item We give detailed proofs for in-place versions of our algorithms
  (Section~\ref{sec:inplace}).

\item We implement these new algorithms which show significant
  performance improvements over existing closure algorithms in real-world setting
  (Section~\ref{sec:experiments}).

\end{itemize}
The paper is structured as follows: Section~\ref{sec:related}
contextualises this study and Section~\ref{sec:prelim} provides the
necessary preliminaries. Section~\ref{sec:incrementalclosure}
critiques the incremental algorithm of \mine\/, introduces a new
incremental quadratic algorithm.
Section~\ref{sec:strongclosuremain} shows how to recover strong
closure incrementally and do so, again, in a single DBM pass.
Section~\ref{sec:integerclosure} explains how to extend incrementally
to integer closure. Section~\ref{sec:inplace}
suggest various optimisations to the incremental algorithms including
in-place update. Experimental results are presented in
Section~\ref{sec:experiments} and Section~\ref{sec:conclusion}
concludes.

%%% Local Variables:
%%% mode: latex
%%% TeX-master: "main"
%%% End:

%  LocalWords:  TVPI DBMs Warshall SMT

% !TEX root = main.tex

\section{Related Work}\label{sec:related}

Since the thesis of \mine\/ \cite{mine-PhD04} and his subsequent
magnum opus \cite{mine_octagon_2006}, algorithms for manipulating
octagons, and even their representations, have continued to evolve.
Early improvements showed how strengthening, the act of combining
pairs of unary octagon constraints to improve binary octagon
constraints, need not be applied repeatedly, but instead can be left
to a single post-processing step
\cite{bagnara_weakly-relational_2009}. This result, which was
justified by an inventive correctness argument, led to a 
performance improvement of approximately 20\%
\cite{bagnara_weakly-relational_2009}. Showing that integer octagonal
constraints admit polynomial satisfiability represented another
significant advance \cite{bagnara_improved_2008}, especially since
dropping either the two variable or unary coefficient property makes
the problem NP-complete \cite{lagarias95computational}.

Octagonal representations have come under recent scrutiny
\cite[Chapter 8]{jourdan16verasco}. In Coq, it is natural to realise
DBMs as map from pairs of indices (represented as bit sequences) to
matrix entries. Look-up becomes logarithmic in the dimension of the
DBM, but the DBM itself can be sparse. Strengthening, which combine
bounds on different variables, can populate a DBM with entries for
binary constraints. Dropping strengthening thus improves sparsity,
albeit at the cost of sacrificing a canonical representation. Join can
be recovered by combining bounds during join itself, in effect,
strengthening on-the-fly. Quite independently, sparse representations
have recently been developed for differences \cite{gange16exploiting}.
Further field, $O(mn)$ decision procedures have been proposed for unit
two variable per inequality (UTVPI) constraints
\cite{lahiri05efficient} where $m$ and $n$ are the number of
constraints and variables respectively. Subsequently an incremental
version was proposed for UTVPI \cite{schutt10incremental} with time
complexity $O(m + n\log(n) + p)$ where $p$ is the number of
constraints tightened by the additional inequality. Certifying
algorithms have also been devised for UTVPI constraints
\cite{subramani15graphical}, supported by a graphical representation
of these constraints, which aids the extraction of a certificate for
validating unsatisfiability. DBMs, however, offer additional support
for other operations that arise in program analysis such as join and
projection. Moreover, there is no reason why each DBM entry could not be
augmented with a pair of row and column coordinates which records how
it was updated, allowing a proof for unsatisfiability to be extracted
from a negative diagonal entry.

Other recent work \cite{singh15making} has proposed factoring octagons into
independent sub-systems, which reduces the size of the DBM. Domain
operations are applied point-wise to the independent sub-matrices of the
DBM, echoing \cite{halbwachs06some}. The work also shows how the
regular access patterns of DBMs enable vectorisation, the step beyond
which is harnessing general purpose GPUs \cite{banterle07fast}.
Packs \cite{blanchet03static} have also been proposed as a factoring device in which
the set of programs variables is covered by a sets of variables called packs (or clusters).  
An octagon is computed for each pack
to abstract the DBM as a set of low-dimensional DBMs.  Recent work
has even explored how packs can be introduced automatically using preanalysis and machine
learning \cite{heo16learning}.

The alternative to simplifying the DBM representation is to assume
that the DBM satisfies some prerequisites so that a domain operation
need not be applied in full generality. \mine\/ \cite{mine-PhD04}
showed that an incremental version of the closure could be derived by
observing that a new constraint is independent of the first $c$
variables of the DBM. This paper stems from an earlier work
\cite{ChawdharyRKAPLAS14} that extends an incremental algorithm for
disjunctive spatial constraints which originates in planning \cite{baykan_spatial_1997}.
The work was motivated by the desire to augment
\cite{ChawdharyRKAPLAS14} with conceptually simple correctness proofs,
that revealed a deficiency in the propagation algorithm of \cite{ChawdharyRKAPLAS14} which prompted
a more thorough study of incrementality.

Further afield, closure of octagons echos path consistency in temporal constraint
networks \cite{dechter89temporal}, which also uses the Floyd-Warshall
algorithm to tighten constraints. 
Furthermore, \IncrementalStrongClosure, which processes key entries (staggered diagonal entries) first,
tallies with how
extremal values are first processed in constraint propagation \cite{rossi06handbook}.
Difference constraints can be generalised to Allen constraints \cite{roy16enforcing} to express
set theoretic properties, such as overlap.
Solving Allen constraints is also polynomial, but each variable can be updated many times
when calculating the fixpoint. By way of contrast, the restricted form of octagons means that
each element in the DBM is updated at most once, which is key to the efficiency of incremental closure. 

%There has been fruitful work \cite{pelleau13constraint,pelleau14octagon}
%exploring the connections between
%abstract interpretation and constraint programming in recent years

%%% Local Variables:
%%% mode: latex
%%% TeX-master: "main"
%%% End

%  LocalWords:  Warshall DBMs mn UTVPI unsatisfiability vectorisation
%  LocalWords:  unsatisfiability GPUs preanalysis incrementality

% !TEX root = main.tex

\section{Preliminaries}
\label{sec:prelim}
This section serves as a self-contained introduction to the
definitions and concepts required in subsequent sections. For more
details, we invite the reader to consult both the seminal
\cite{mine-PhD04,mine_octagon_2006} and subsequent
\cite{bagnara_weakly-relational_2009,ChawdharyRKAPLAS14} works on the octagon abstract domain.

\subsection{The Octagon Domain and its Representation}
\label{subsec:domainrep}

An octagonal constraint is a two variable inequality of the form
$ \pm x_i \pm x_j \leq d $ where $x_i$ and $x_j$ are variables and $d$
is a constant.  An octagon is a set of points satisfying a system of
octagonal constraints.  The octagon domain is the set of all
octagons that can be defined over the variables $x_0, \ldots, x_{n-1}$.

Implementations of the octagon domain reuse the machinery developed
for solving difference constraints of the form $x_i - x_j \leq
d$. \mine\/ \cite{mine_octagon_2006} showed how to translate octagonal
constraints to difference constraints over an extended set of
variables $x'_0, \ldots, x'_{2n-1}$.  A single octagonal constraint
translates into a conjunction of one or more difference constraints as
follows:
\[
  \begin{array}{rcrllrl}
    x_i - x_j \leq d & \rightsquigarrow & x'_{2i} - x'_{2j} & \leq
                                                              d & \wedge & x'_{2j+1} - x'_{2i+1} & \leq d \\
    x_i + x_j \leq d & \rightsquigarrow & x'_{2i} - x'_{2j+1} & \leq
                                                                d & \wedge & x'_{2j} - x'_{2i+1} & \leq d \\
    -x_i - x_j \leq d & \rightsquigarrow & x'_{2i+1} - x'_{2j} & \leq
                                                                 d & \wedge & x'_{2j+1} - x'_{2i} & \leq d \\
    x_i \leq d & \rightsquigarrow & x'_{2i} - x'_{2i+1} & \leq 2d \\
    -x_i \leq d & \rightsquigarrow & x'_{2i+1} - x'_{2i} & \leq 2d \\
  \end{array}
  \label{def:octtranslation}
\]
A common representation for difference constraints is a difference
bound matrix (DBM) which is a square matrix of dimension $n \times n$,
where $n$ is the number of variables in the difference system. The
value of the entry $d = \dbmij{m}{i}{j}$ represents the constant $d$
of the inequality $x_i - x_j \leq d$ where the indices
$i,j \in \{ 0, \ldots, n - 1 \}$. An octagonal constraint system over
$n$ variables translates to a difference constraint system over $2n$
variables, hence a DBM representing an octagon has dimension
$2n \times 2n$.

\begin{figure}[t]
\begin{tabular}{@{}c@{\qquad}c@{\quad}c@{\quad}c@{}}
$
\begin{array}{@{}rl@{}}
  x_0 \leq 3 \\
  x_1 \leq 2 \\
  x_0 + x_1 \leq 6 \\
  -x_0 - x_1 \leq 5 \\
  -x_0 \leq 3 \\
\\
\\
\end{array}
$
&
$
\begin{array}{@{}rl@{}}
  x'_0 - x'_1 \leq 6 \\
  x'_2 - x'_3 \leq 4 \\
  x'_0 - x'_3 \leq 6 \\
  x'_2 - x'_1 \leq 6 \\
  x'_1 - x'_2 \leq 5 \\
  x'_3 - x'_0 \leq 5 \\
  x'_1 - x'_0 \leq 6 \\
\end{array}
$
&
\begin{tabular}{@{}c@{}}
\begin{tikzpicture}[->,auto,thick,main node/.style={circle,draw,font=\sffamily\small}]

  \node[main node] (0) at (3,3) {$x'_0$};
  \node[main node] (1) at (3,0) {$x'_1$};
  \node[main node] (2) at (0,0) {$x'_2$};
  \node[main node] (3) at (0,3) {$x'_3$};

  \path[every node/.style={font=\sffamily\small}]
(0) edge node [near end] {6} (1)
(0) edge node [near end] {11} (2)
(0) edge node [near end] {6} (3)

(1) edge node [near end] {9} (3)
(1) edge node [near end] {5} (2)
(1) edge node [near end] {6} (0)

(2) edge node [near end] {9} (0)
(2) edge node [near end] {6} (1)
(2) edge node [near end] {4} (3)

(3) edge node [near end] {5} (0)
(3) edge node [near end] {11} (1)
(3) edge node [near end] {16} (2);
\end{tikzpicture}
\end{tabular}
&
\begin{tabular}{c}
$\kbordermatrix{
             & x'_0   & x'_1   & x'_2   & x'_3 \cr
        x'_0 & \infty & 6      & \infty & 6      \cr
        x'_1 & 6      & \infty & 5      & \infty      \cr
        x'_2 & \infty & 6      & \infty & 4      \cr
        x'_3 & 5      & \infty & \infty & \infty      \cr
      }
$
\\
$\kbordermatrix{
             & x'_0 & x'_1 & x'_2 & x'_3 \cr
        x'_0 & 0    & 6    & 11   & 6      \cr
        x'_1 & 6    & 0    & 5    & 9      \cr
        x'_2 & 9    & 6    & 0    & 4      \cr
        x'_3 & 5    & 11   & 16   & 0      \cr
      }
$
\end{tabular}
\end{tabular}
   \caption{Example of an octagonal system and its DBM representation}
  \label{fig:oct_examples}
\end{figure}

\begin{example} 
  Figure~\ref{fig:oct_examples} serves as an example of how an octagon
  translates to a system of differences.  The entries of the upper DBM
  correspond to the constants in the difference constraints.  Note how
  differences which are (syntactically) absent from the system lead to
  entries which take a symbolic value of $\infty$.  Observe too how
  that DBM represents an adjacency matrix for the illustrated graph
  where the weight of a directed edge abuts its arrow.
\end{example}
The interpretation of a DBM representing an octagon is different to a DBM
representing difference constraints. Consequently there are two
concretisations for DBMs: one for interpreting differences and another
for interpreting octagons, although the latter is defined in terms
of the former:

\begin{definition}\label{def:octconcretization} Concretisation for rational $(\mathbb{Q}^n)$ solutions:
 \[
 \begin{array}{@{}r@{\;}c@{\;}l@{}}
\gammadiff(\dbm{m}) & = & \{ \langle v_0, \ldots, v_{n-1} \rangle \in \mathbb{Q}^n \;\mid\;\forall
  i,j . v_i - v_j \leq \dbmij{m}{i}{j}\}
\\[1ex]
  \gammaoct(\dbm{m}) & = & \{ \langle v_0, \ldots, v_{n-1} \rangle \in \mathbb{Q}^n \;\mid\; \langle
  v_0 , -v_0, \ldots, v_{n-1}, -v_{n-1} \rangle \in \gammadiff(\dbm{m})\}
\end{array}
\]
where the concretisation
for integer $(\mathbb{Z}^n)$ solutions can be defined analogously. 
\end{definition}

\begin{example}
  Since octagonal inequalities are modelled as two related
  differences, the upper DBM contains duplicated entries, for
  instance, $\dbmij{m}{1}{2} = \dbmij{m}{3}{0}$.
\end{example}

\noindent
Operations on a DBM representing an octagon must maintain equality
between the two entries that share the same constant of an octagonal
inequality. This requirement leads to the definition of coherence:

\begin{definition}[Coherence]
\label{def:coherence} 
  A DBM $\dbm{m}$ is coherent iff
  $\forall i.j. \dbmij{m}{i}{j} = \dbmij{m}{\barj}{\bari}$ where
  $\bari = i+1$ if $i$ is even and $i-1$ otherwise.
\end{definition}

\begin{example}
  For the upper DBM observe
  $\dbmij{m}{0}{3} = 6 = \dbmij{m}{2}{1} =
  \dbmij{m}{\bar{3}}{\bar{0}}$.  Coherence holds in a degenerate way for
  unary inequalities, note
  $\dbmij{m}{2}{3} = 4 = \dbmij{m}{2}{3} =
  \dbmij{m}{\bar{3}}{\bar{2}}$.
\end{example}

\noindent
The bar operation can be realised without a branch using
$\bari = i\;\textbf{xor}\;1$ \cite[Section~4.2.2]{mine-PhD04}.
Care should be taken to
preserve coherence when manipulating DBMs, either by carefully
designing algorithms or by using a data structure that enforces
coherence \cite[Section~4.5]{mine-PhD04}. For clarity, we abstract
away from the question of how to represent a DBM by presenting all
algorithms for square matrices, rather than triangular matrices as
introduced
in \cite[Section~4.5]{mine-PhD04}. One final property is necessary for
satisfiability: 

\begin{definition}[Consistency] 
  A DBM $\dbm{m}$ is consistent iff
  $\forall i . \dbmij{m}{i}{i} \geq 0$.
\end{definition}
Intuitively, consistency means that there is not negative cycle in the
DBM, which corresponds to unsatisfiability \cite{citeulike:3116035}.

\subsection{Definitions of Closure}
\label{sec:defnclosure}

Closure properties define canonical representations of DBMs, and can
decide satisfiability and support operations such as join and
projection. Bellman \cite{citeulike:3116035} showed that the
satisfiability of a difference system can be decided using shortest
path algorithms on a graph representing the differences. If the graph
contains a negative cycle (a cycle whose edge weights sum to a negative value)
then the difference system is unsatisfiable.
The same applies for DBMs representing octagons. Closure propagates
all the implicit (entailed) constraints in a system, leaving each
entry in the DBM with the sharpest possible constraint entailed
between the variables. Closure is formally defined below:

\begin{definition}[Closure]
A DBM $\dbm{m}$ is closed iff \begin{itemize}

\item $\forall i. \dbmij{m}{i}{i} = 0$

\item $\forall i,j,k. \dbmij{m}{i}{j} \leq \dbmij{m}{i}{k} + \dbmij{m}{k}{j}$

\end{itemize}
\label{def:closure}
\end{definition}

\begin{example}
  The top right DBM in Figure~\ref{fig:oct_examples} is not closed. By
  running an all-pairs shortest path algorithm we get the following DBM:
  \[
    \kbordermatrix{
             & x'_0 & x'_1 & x'_2 & x'_3 \cr
        x'_0 & 11   & 6    & 11   & 6      \cr
        x'_1 & 6    & 11   & 5    & 9      \cr
        x'_2 & 9    & 6    & 11   & 4      \cr
        x'_3 & 5    & 11   & 16   & 11      \cr
        }
      \]
      Notice that the diagonal values have non-negative elements
      implying that the constraint system is satisfiable. Running
      shortest path closure algorithms propagates all constraints and
      makes every explicit all constraints implied by the original
      system. Once satisfiability has been established, we can set the
      diagonal values to zero to satisfy the definition of closure.
\end{example}

\begin{figure}[t!]
  \centering
  \begin{tikzpicture}
    \draw (0,0) node [circle, style={draw}] (xminus) {$x^{-}$};
    \draw (0,2) node [circle, style={draw}] (xplus) {$x^{+}$};
    
    \draw (2,0) node [circle, style={draw}] (yplus) {$y^{+}$};
    \draw (2,2) node [circle, style={draw}] (yminus) {$y^{-}$};

    \draw (5,0) node [circle, style={draw}] (xminus') {$x^{-}$};
    \draw (5,2) node [circle, style={draw}] (xplus') {$x^{+}$};
    
    \draw (7,0) node [circle, style={draw}] (yplus') {$y^{+}$};
    \draw (7,2) node [circle, style={draw}] (yminus') {$y^{-}$};

    \draw[->] (xminus) to node[left] {$4$} (xplus);
    \draw[->] (yminus) to node[right] {$8$} (yplus);

    \draw[->] (xminus') to node[left] {$4$} (xplus');
    \draw[->] (yminus') to node[right] {$8$} (yplus');
    \draw[->] (xminus') to node[below] {$3$} (yplus');
    \draw[->] (yminus') to node[above] {$3$} (xplus');
  \end{tikzpicture}
  \caption{Intuition behind strong closure: Two closed graphs representing the same octagon: $x \leq 2 \wedge y \leq 4$}
  \label{fig:intuitionstrongclosure}
\end{figure}
Closure is not enough to provide a canonical form for DBMs
representing octagons. \mine\/ defined the notion of strong closure in
\cite{mine-PhD04,mine_octagon_2006} to do so:
\begin{definition}[Strong closure]
\label{def:strongclosure}
\setcounter{equation}{1}
A DBM $\dbm{m}$ is strongly closed iff 
\begin{itemize}

%\item $\dbm{m}$ is consistent
%implied by closure as then have zero diagonal

\item $\dbm{m}$ is closed

\item
$\forall i,j .\dbmij{m}{i}{j} \leq \dbmij{m}{i}{\bari}/2 + \dbmij{m}{\barj}{j}/2$

\end{itemize}
\end{definition}
\noindent
The strong closure of DBM $\dbm{m}$ can be computed by
$\Strengthen(\dbm{m})$, the code for which is given in
Figure~\ref{fig:standardclosure}. The algorithm propagates the property
that if $x'_j - x'_{\barj} \leq c_1$ and $x'_{\bari} - x'_i \leq c_2$
both hold then $x'_j - x'_i \leq (c_1 +c_2)/2$ also holds. This
sharpens the bound on the difference $x'_j - x'_i$ using the two unary
constraints encoded by $x'_j - x'_{\barj} \leq c_1$ and
$x'_{\bari} - x'_i \leq c_1$, namely, $2x'_j \leq c_1$ and
$-2x'_i \leq c_2$. Note that this constraint propagation is not
guaranteed to occur with a shortest path algorithm since there is not
necessarily a path from a $\dbmij{m}{i}{\bari}$ and
$\dbmij{m}{\barj}{j}$. An example in
Figure~\ref{fig:intuitionstrongclosure} shows such a situation: the two
graphs represent the octagon, but a shortest path
algorithm will not propagate constraints on the left graph; hence
strengthening is needed to bring the two graphs to the same normal form.
Strong closure yields a canonical representation: there is a unique
strongly closed DBM for any (non-empty) octagon
\cite{mine_octagon_2006}. Thus any semantically equivalent octagonal
constraint systems are represented by the same strongly closed DBM.
Strengthening is the act of computing strong closure.

\begin{example}
  The lower right DBM in Figure~\ref{fig:oct_examples} gives the
  strong closure of the upper right DBM. Strengthening is
performed after the shortest path algorithm.
\end{example}

\noindent For octagonal constraints over integers, the strong closure
property may result in non-integer values due to the division by two.
The definition of strong closure for integer octagonal constraints
thus needs to be refined. If $x_i$ is integral then $x_i \leq c$
tightens to $x_i \leq \lfloor c \rfloor$.  Since $x_i \leq c$
translates to the difference \mbox{$x'_{2i} - x'_{2i + 1} \leq 2c$},
tightening the unary constraint is achieved by tightening the
difference to \mbox{$x'_{2i} - x'_{2i + 1} \leq 2\lfloor c/2 \rfloor$}.

\begin{definition}[Tight closure]
\label{def:tightclosure}
A DBM $\dbm{m}$ is tightly closed iff
\begin{itemize}
\item $\dbm{m}$ is strongly closed

\item
$\forall i. \dbmij{m}{i}{\bari}\;\textrm{is even}$

\end{itemize}
\end{definition}

\noindent 
For the integer case, a tightening step is required before
strengthening. Tightening a closed DBM results in a weaker form of
closure, called weak closure. Strong closure can be recovered from
weak closure by strengthening \cite{bagnara_improved_2008}. Note,
however, that we introduce the property for completeness of exposition
because our formalisation and proofs do not make use of this notion.

\begin{definition}[Weak closure]  \label{def:weakclosure}
  A DBM $\dbm{m}$ is weakly closed iff
\begin{itemize}

\item
$\forall i . \dbmij{m}{i}{i} = 0$

\item
$\forall i,j,k. \dbmij{m}{i}{k} + \dbmij{m}{k}{j} \geq
  \min(\dbmij{m}{i}{j}, \dbmij{m}{i}{\bari}/2 + \dbmij{m}{\barj}{j}/2)$

\end{itemize}
\end{definition}

% !TEX root = main.tex

 \begin{figure}[t!]
    \centering
    \begin{tikzpicture}[scale=0.75]
      \node (dbm) at (0,0) [] {DBM $\dbm{m}$};
      %\node (incrdb) at (2,3) [align=center] { Strongly Closed DBM $\dbm{m}$ \\ New Octagonal Constraint : $O$};
     \node (shortestpath) at (3,0) [draw,align=center,thin,minimum width=1cm,minimum
      height=1cm] {Closure \\ \textbf{(Sec.~\ref{sec:prelim})}};
      \node (consistency) at (6,0) [draw,align=center,thin,minimum width=1cm,minimum
      height=1cm] {Consistent \\\textbf{(Fig.~\ref{fig:standardclosure})}};

      \node (unsat) at (6,-2) [] {UNSAT};

      \node (tightening) at (6,3) [draw,align=center,thin,minimum width=1cm,minimum
      height=1cm] {Tighten \\ \textbf{(Sec.~\ref{sec:integerclosure})}};

      \node (strengthen) at (12,0) [draw,align=center,thin,minimum width=1cm,minimum
      height=1cm] {Strengthen \\ \textbf{(Sec.~\ref{sec:strongclosuremain})}};

      \node (z-consistent) at (12,3) [draw,align=center,thin,minimum width=1cm,minimum
      height=1cm] {ZConsistent \\ \textbf{(Sec.~\ref{sec:integerclosure})}};
      \node (unsat2) at (12,5) [] {UNSAT};
      \node[text width=3.4cm, align=center] (sat) at (12,-2) [] {Strongly Closed DBM/ Tightly Closed DBM};
        
      \draw[->] (dbm) -- (shortestpath);
      \draw[->] (consistency) to node[left] {} (unsat);
      \draw[->] (shortestpath) to node[above] {} (consistency);
      \draw[->] (consistency) to node[above] {consistent} node[below] {closed DBM} (strengthen);
      \draw[->] (strengthen) -- (sat);

      \draw[dashed,->] (consistency) to node[left] {consistent closed
        DBM} (tightening);
      \draw[dashed,->] (tightening) to node[text width=2.5cm,
      align=center, above] {weakly closed DBM} (z-consistent);
      \draw[dashed,->] (z-consistent) to node[text width=2.5cm, align=center, left] {weakly closed
        consistent DBM} (strengthen);
      \draw[dashed,->] (z-consistent) -- (unsat2);
    \end{tikzpicture}

    \caption{High-Level Overview of Closure Algorithms for Octagons}
    \label{fig:closurealgsoverview}
  \end{figure}

%%% Local Variables:
%%% mode: latex
%%% TeX-master: "main"
%%% End:

\subsection{High-level Overview} \label{subsec:nonincremental}

Figure~\ref{fig:closurealgsoverview} gives a high-level overview of
closure calculation.  First a closure algorithm is applied to a
DBM. Next, consistency is checked by observing the diagonal has
non-negative entries indicating the octagon is satisfiable.  If
satisfiable, then the DBM is strengthened, resulting in a strongly
closed DBM. Note that consistency does not need to be checked again after
strengthening. The dashed lines in the figure show the alternative
path taken for integer problems: to ensure that the DBM entries are
integral, a tightening step is applied which is then followed by an
integer consistency check and strengthening.

Figure~\ref{fig:standardclosure} shows how this architecture can be
instantiated with algorithms for non-incremental strong closure.  A
Floyd-Warshall all-pairs shortest path algorithm
\cite{Floyd:1962,Warshall:1962} can be applied to a DBM to compute
closure, which is cubic in $n$.  The check for consistency involves a
pass over the matrix diagonal to check for a strictly negative entry,
as illustrated in the figure. (Note that \CheckConsistent\/
resets a strictly positive diagonal entry to zero as in
\cite{bagnara_weakly-relational_2009,mine_octagon_2006}, but the
incremental algorithms presented in this paper never relax a zero
diagonal entry to a strictly positive value. Hence the reset is actually redundant for
the incremental algorithms that follow.)  The consistency check is linear in $n$.
Strong closure can be additionally obtained by following closure with
a single call to \Strengthen\/, the code for which is also listed in the
figure.  This is quadratic in $n$.

\begin{figure}[t]
\begin{tabular}{@{}cc@{}}
\begin{tabular}{@{}p{7cm}@{}}
    \begin{algorithmic}[1]
    \Function{\NonIncrementalClosure}{$\dbm{m}$} 
    \For{$k \in \{ 0, \ldots, 2n-1 \}$} 
      \For{$i \in \{ 0, \ldots, 2n-1 \}$} 
        \For{$j \in \{ 0, \ldots, 2n-1 \}$} 
           \State $\dbmij{m'}{i}{j} \gets \min( \dbmij{m}{i}{j}, \dbmij{m}{i}{k} + \dbmij{m}{k}{j})$
    \EndFor
    \EndFor
    \EndFor
%    \If{$\CheckConsistent(\dbm{m})$} 
%	\State $\textbf{return} \; \Strengthen(\dbm{m})$
%    \Else
%	\State $\textbf{return} \; false$
%    \EndIf
    \State $\textbf{return} \; \dbm{m'}$
    \EndFunction
  \end{algorithmic}
\\[-3ex]
  \begin{algorithmic}[1]
    \Function{\Strengthen}{$\dbm{m}$} \For{$i \in \{ 0, \ldots, 2n-1 \}$}
    \For{$j \in \{ 0, \ldots, 2n-1 \}$}
	\State $\dbmij{m'}{i}{j}\; \gets \min(\dbmij{m}{i}{j}, (\dbmij{m}{i}{\bari} + \dbmij{m}{\barj}{j})/2)$
    \EndFor
    \EndFor
    \State $\textbf{return} \; \dbm{m'}$
    \EndFunction
  \end{algorithmic}
\end{tabular}
&
\begin{tabular}{@{}p{6cm}@{}}
  \begin{algorithmic}[1]
    \label{alg:checkconsistent}
      \Function{\CheckConsistent}{$\dbm{m}$}
      \For{$i \in \{ 0, \ldots, 2n-1 \}$}
      \If{$\dbmij{m}{i}{i} < 0$}
	\State $\textrm{\textbf{return} false}$
      \Else 
        \State $\dbmij{m}{i}{i} = 0$ 
      \EndIf
      \EndFor
      \State $\textrm{\textbf{return} true}$
      \EndFunction
    \end{algorithmic}
\\[25ex]
\end{tabular}
\end{tabular}
  \caption{Non-incremental closure and strengthening}
  \label{fig:standardclosure}
\end{figure}

%%% Local Variables:
%%% mode: latex
%%% TeX-master: "main"
%%% End:

% !TEX root = main.tex

\section{Incremental Closure}\label{sec:incrementalclosure}

We are interested in the specific use case of adding a new octagonal
constraint to an existing closed octagon. \mine\/ designed an incremental
algorithm for this very task, which can be refactored into computing
closure and then separately strengthening, as depicted in
Figure~\ref{fig:closurealgsoverview}. His incremental algorithm, and a
refinement, are discussed in Section~\ref{subsec:original}.
Section~\ref{subsec:new} presents our new incremental algorithm with
improved performance.

% !TEX root = main.tex

\subsection{Classical Incremental Closure}
\label{subsec:original}

\mine\/ designed an incremental algorithm based on the observation
that a new constraint will not affect all the variables of the octagon
\cite[Section 4.3.4]{mine-PhD04}. Without loss of generality, suppose
the inequality $x'_a - x'_b \leq d$ is added to the DBM (unary
constraints are supported by putting $b = \bar{a}$). Adding
$x'_a - x'_b \leq d$ implies that the equivalent constraint
$x'_{\bar{b}} - x'_{\bar{a}} \leq d$ is added too, and the entries
$\dbmij{m}{a}{b}$ and $\dbmij{m}{\bar{b}}{\bar{a}}$ are updated to $d$
to reflect this. 
Figure~\ref{fig:mine_incremental_extended} presents a version of the incremental algorithm of \mine\/, specialised
for adding $x'_a - x'_b \leq d$ to a closed DBM.
The algorithm
relies on the observation that updating
$\dbmij{m}{a}{b}$ and $\dbmij{m}{\bar{b}}{\bar{a}}$
will only (initially) mutate the rows
and columns for the $x'_a,x'_b,x'_{\bara}, x'_{\barb}$ variables.
Put $v = \min(a, b, \bara, \barb)$.
Since $\dbm{m}$ was closed, despite the updates, it still follows that
if $k < v$ then
$\dbmij{m}{i}{j} \leq \dbmij{m}{i}{k} + \dbmij{m}{k}{j}$
for all $0 \leq i < 2n$ and $0 \leq j < 2n$.
This is the inductive property which is established after
the first $v$ iterations of the outermost for loop of 
the standard Floyd-Warshall algorithm.
Therefore, to restore closure
it only necessary to apply the remaining $2n - v$ iterations of Floyd-Warshall, which leads
to the algorithm of Figure~\ref{fig:mine_incremental_extended}.

The incremental closure of Figure~\ref{fig:mine_incremental_extended}
reduces the number of $\min$ operations from $8n^3$ to
$(2n - v)4n^2$ (notwithstanding those in \Strengthen\/).  
Prior to the updates,
one could conceivably 
reconfigure the DBM by swapping rows and columns
so that, say, $a = 2n - 4, \bara = 2n - 3, b = 2n - 2, \barb = 2n - 1$.
Then $v = 2n - 4$ reducing incremental closure
to $16n^2$.  However, after closure, the rows and columns would need
to be swapped back to maintain a consistent representation.
Observe too that $x'_{a} - x'_{b} \leq d$ and $x'_{e} - x'_{f} \leq d$ 
can be added to the DBM simultaneously by putting
$v = \min(a, b, \bar{a}, \bar{b}, e, f, \bar{e}, \bar{f})$ and then applying
incremental closure once.  

%Although $\min$ can be
%calculated without branching for machine integers \cite{warren02hackers}, $\min$ is
%typically realised using a guard.
%Thus the incremental algorithm
%in quadratic in the number of $\min$ operations (which reduce to checks) but
%cubic in the number of additional checks (even with code hoisting).

\begin{figure}[t]
\begin{tabular}{@{}cc@{}}
  \begin{tabular}{@{}p{12cm}@{}}
        \begin{algorithmic}[1]
          \Function{\IncrementalClosureMine}{$\dbm{m}$, $x'_a - x'_b \leq d$}
          \State{$\dbmij{m}{a}{b} \gets \min(\dbmij{m}{a}{b}, d)$}
          \State{$\dbmij{m}{\barb}{\bara} \gets \min(\dbmij{m}{\barb}{\bara}, d)$}
          \State{$v \gets \min(a, b, \bara, \barb)$};
          \For{$k \in \{ v, \ldots, 2n-1 \}$}
          \For{$i \in \{ 0, \ldots, 2n-1 \}$}
          \For{$j \in \{ 0, \ldots, 2n-1 \}$}
          \State{$\dbmij{m}{i}{j} \gets \min(\dbmij{m}{i}{j}, \dbmij{m}{i}{k}+\dbmij{m}{k}{j})$}
          \EndFor
          \EndFor
          \EndFor
          \State \Return{$\dbm{m}$}
          \EndFunction
        \end{algorithmic}
\end{tabular}
\end{tabular}
  \caption{Incremental Closure of \mine\/}
 \label{fig:mine_incremental_extended}
\end{figure}

%\mine\ mentions that this approach works ``by virtually exchanging
%columns and lines'', which suggests the code given in
%Figure~\ref{fig:mine_incremental_extended}. This refinement reduces the
%$\min$ count to \mbox{$(2n)^3 - {c'}^3$} where $c'$ is the number of
%entries of $\dbm{c}$ which are set to true, which again is cubic in
%$n$. \mine\ also mentions that an in-place version of this algorithm
%can be designed, but does not present such an algorithm.

\begin{figure}[t]
\end{figure}

%%% Local Variables:
%%% mode: latex
%%% TeX-master: "main"
%%% End:

% !TEX root = main.tex

\subsection{Improved Incremental Closure}
\label{subsec:new}

\begin{figure}[t]\centering
    \begin{minipage}[b]{0.45\linewidth}
      \[
      \begin{array}{c}
    \begin{tikzpicture}[baseline={(b.base)}]
      % [every circle node/.style={draw}]
      \draw (0,2) node [circle, style={draw}] (i) {$x'_i$}; \draw
      (1,1) node [circle, style={draw}] (a) {$x'_a$}; \draw (3,1) node
      [circle, style={draw}] (b) {$x'_b$}; \draw (4,2) node [circle,
      style={draw}] (j) {$x'_j$}; \draw (1,-0.5) node [circle,
      style={draw}] (a') {$x'_{\bar{b}}$}; \draw (3,-0.5) node [circle,
      style={draw}] (b') {$x'_{\bar{a}}$};

      \draw[->] (i) to node[above] {$c$} (j); \draw[line
      width=0.5mm,->] (i) to node[left, pos=0.8] {$c_1$}(a);
      \draw[line width=0.5mm,->] (b) to node[right, pos=0.2] {$c_2$}
      (j); \draw[->] (b') to node[right, pos=0.2] {$c'_2$} (j);

      \draw[->] (i) to node[left, pos=0.8] {$c'_1$} (a');

      \draw[-stealth, decoration={snake,amplitude=.4mm,segment
        length=2mm,post length=0.9mm},decorate] (a') to node[below]
      {$d$} (b'); \draw[line width=0.5mm,-stealth,
      decoration={snake,amplitude=.4mm,segment length=2mm,post
        length=0.9mm},decorate] (a) to node[below] {$d$} (b);
    \end{tikzpicture}  
        \\
        \\
        \textrm{$(i,a)$ and $(b,j)$ are not} 
        \\
        \textrm{affected by new constraints}
      \end{array}
      \]
  \end{minipage} 
    \hspace{0.5cm}
    \begin{minipage}[b]{0.45\linewidth}
      \[
      \begin{array}{c}
      \begin{tikzpicture}[baseline={(b.base)}]
        % [every circle node/.style={draw}]
      \draw (0,2) node [circle, style={draw}] (i) {$x'_i$}; \draw
      (1,1) node [circle, style={draw}] (a) {$x'_a$}; \draw (3,1) node
      [circle, style={draw}] (b) {$x'_b$}; \draw (4,2) node [circle,
      style={draw}] (j) {$x'_j$}; \draw (1,-0.5) node [circle,
      style={draw}] (a') {$x'_{\bar{b}}$}; \draw (3,-0.5) node [circle,
      style={draw}] (b') {$x'_{\bar{a}}$};

      \draw[->] (i) to node[above] {$c$} (j); \draw[->] (i) to
      node[left, pos=0.8] {$c_1$}(a); \draw[->] (b) to node[right,
      pos=0.2] {$c_2$} (j); \draw[line width=0.5mm,->] (b') to
      node[right, pos=0.2] {$c'_2$} (j);

      \draw[line width=0.5mm,->] (i) to node[left, pos=0.8] {$c'_1$}
      (a');

      \draw[line width=0.5mm,-stealth,
      decoration={snake,amplitude=.4mm,segment length=2mm,post
        length=0.9mm},decorate] (a') to node[below] {$d$} (b');
      \draw[-stealth, decoration={snake,amplitude=.4mm,segment
        length=2mm,post length=0.9mm},decorate] (a) to node[below]
      {$d$} (b);
    \end{tikzpicture}  
        \\
        \\
        \textrm{$(i,\bar{b})$ and $(\bar{a},j)$ are not}
        \\
        \textrm{affected by new constraints}
      \end{array}
      \]
    %\subcaption{\begin{center}$(i,\bar{b})$ and $(\bar{a},j)$ are not affected by new constraints\end{center}}
    \end{minipage}
   % \\
  %   \begin{tabular}{cl}
    \begin{minipage}[b]{0.45\linewidth}
      \[
      \begin{array}{c}
    \begin{tikzpicture}[baseline={(b.base)}]
      % [every circle node/.style={draw}]
      \draw (0,2) node [circle, style={draw}] (i) {$x'_i$}; \draw
      (1,1) node [circle, style={draw}] (a) {$x'_a$}; \draw (3,1) node
      [circle, style={draw}] (b) {$x'_b$}; \draw (4,2) node [circle,
      style={draw}] (j) {$x'_j$}; \draw (1,-0.5) node [circle,
      style={draw}] (a') {$x'_{\bar{b}}$}; \draw (3,-0.5) node [circle,
      style={draw}] (b') {$x'_{\bar{a}}$};

      \draw[->] (i) to node[above] {$c$} (j); \draw[->] (i) to
      node[left, pos=0.8] {$c_1$}(a); \draw[line width=0.5mm,->] (b)
      to node[right, pos=0.2] {$c_2$} (j); \draw[->] (b') to
      node[right, pos=0.2] {$c'_2$} (j); \draw[line width=0.5mm, ->]
      (b') to node[right,pos=0.2] {} (a);

      \draw[line width=0.5mm, ->] (i) to node[left, pos=0.8] {$c'_1$}
      (a');

      \draw[line width=0.5mm,-stealth,
      decoration={snake,amplitude=.4mm,segment length=2mm,post
        length=0.9mm},decorate] (a') to node[below] {$d$} (b');
      \draw[line width=0.5mm,-stealth,
      decoration={snake,amplitude=.4mm,segment length=2mm,post
        length=0.9mm},decorate] (a) to node[below] {$d$} (b);
    \end{tikzpicture} 
        \\
        \\
        \textrm{$(i,a)$ shortened by $(i,\bar{b}) + d + (\bar{a},a)$}
        \\
        \textrm{or $(\bar{a},j)$ shortened by $(\bar{a},a) + d + (b,j)$}
        \end{array}
        \]
    \end{minipage}
    \hspace{0.5cm}
    \begin{minipage}[b]{0.45\linewidth}
      \[
      \begin{array}{c}
    \begin{tikzpicture}[baseline={(b.base)}]
      % [every circle node/.style={draw}]
      \draw (0,2) node [circle,style={draw}] (i)
      {$x'_i$}; \draw (1,1) node [circle,style={draw}] (a)
      {$x'_a$}; \draw (3,1) node [circle,style={draw}] (b)
      {$x'_b$}; \draw (4,2) node [circle,style={draw}] (j)
      {$x'_j$}; \draw (1,-0.5) node [circle,style={draw}] (a')
      {$x'_{\bar{b}}$}; \draw (3,-0.5) node [circle,style={draw}] (b')
      {$x'_{\bar{a}}$};

      \draw[->] (i) to node[above]
      {$c$} (j); \draw[line width=0.5mm,->] (i) to node[left, pos=0.8]
      {$c_1$}(a); \draw[->] (b) to node[right, pos=0.2]
      {$c_2$} (j); \draw[line width=0.5mm,->] (b') to node[right,
      pos=0.2]
      {$c'_2$} (j); \draw[line width=0.5mm,->] (b) to
      node[right,pos=0.2] {} (a');

      \draw[->] (i) to node[left, pos=0.8] {$c'_1$} (a');

      \draw[line width=0.5mm,-stealth,
      decoration={snake,amplitude=.4mm,segment length=2mm,post
        length=0.9mm},decorate] (a') to node[below] {$d$} (b');
      \draw[line width=0.5mm, -stealth,
      decoration={snake,amplitude=.4mm,segment length=2mm,post
        length=0.9mm},decorate] (a) to node[below] {$d$} (b);
    \end{tikzpicture}
        \\
        \\
        \textrm{$(i,\bar{b})$ shortened by $(i,a) + d + (b,\bar{b})$}
        \\
        \textrm{$(b,j)$ shortened by $(b,\bar{b}) + d + (\bar{a},j)$}
      \end{array}
      \]
    \end{minipage}
  \caption{Four ways to reduce the distance between $x'_i$ and $x'_j$}
  \label{fig:casesmin}
\end{figure}

To give the intuition behind our new incremental closure algorithm,
consider adding the constraint $x'_a - x'_b \leq d$ and
$x'_{\bar{b}} - x'_{\bar{a}} \leq d$ to the closed DBM $\dbm{m}$. The
four diagrams given in Figure~\ref{fig:casesmin} illustrate how the path
between variables $x'_i$ and $x'_j$ can be shortened. The distance
between $x'_i$ and $x'_j$ is $c$ ($\dbmij{m}{i}{j} = c$), the distance
between $x'_i$ and $x'_a$ is $c_1$ ($\dbmij{m}{i}{a} = c_1$), etc. The
wavy lines denote the new constraints $x'_a - x'_b \leq d$ and
$x'_{\bar{b}} - x'_{\bar{a}} \leq d$ and the heavy lines indicate
short-circuiting paths between $x'_i$ and $x'_j$.
%The topmost two paths suggest that we must
%update entry $\dbmij{m}{i}{j}$ by at least the following:
%\marginpar{remove reference to updating $\dbm{m}$: DONE : CHECK}
%\[
%\dbmij{m}{i}{j} \gets \min(\dbmij{m}{i}{j}, \dbmij{m}{i}{a} + d + \dbmij{m}{b}{j}, \dbmij{m}{i}{\bar{b}} + d + \dbmij{m}{\bar{a}}{j})
%\]
%This by itself, however, is not enough. The above
%update strategy implicitly assumes that the distance between
%$x'_i$ and $x'_a$ (and likewise between $x'_b$ and $x'_j$)
%remain unchanged. 
The bottom left path of the figure illustrates how the distance
between $x'_i$ and $x'_a$ can be reduced from $c_1$ by the
$x'_{\bar{b}} - x'_{\bar{a}} \leq d$ constraint. The same path
illustrates how to shorten the distance between $x'_{\bar{a}}$ and
$x'_j$ from $c'_2$ using the $x'_a - x'_b \leq d$ constraint. The
bottom right path of the figure gives two symmetric cases in which
$c'_1$ and $c_2$ are sharpened by the addition of $x'_a - x'_b \leq d$
and $x'_{\bar{b}} - x'_{\bar{a}} \leq d$ respectively. Note that we
cannot have the two paths from $x'_i$ to $x'_a$ and from $x'_b$ to
$x'_j$ both shortened: at most one of them
can change. The same holds for the two paths from $x'_i$ to
$x'_{\barb}$ and $x'_{\bara}$ to $x'_j$. These extra paths lead to the
following strategy for updating $\dbmij{m'}{i}{j}$:
\[
\dbmij{m'}{i}{j} \gets \min \left( \begin{array}{l}
                             \dbmij{m}{i}{j},\\
                             \dbmij{m}{i}{a} + d + \dbmij{m}{b}{j},\\
                             \dbmij{m}{i}{\bar{b}} + d + \dbmij{m}{\bar{a}}{j},\\
                             \dbmij{m}{i}{\bar{b}} + d + \dbmij{m}{\bar{a}}{a} + d + \dbmij{m}{b}{j} \\
                             \dbmij{m}{i}{a} + d + \dbmij{m}{b}{\bar{b}} + d + \dbmij{m}{\bar{a}}{j} \\
                           \end{array} \right)
\]
This leads to the incremental closure algorithm listed in top of
Figure~\ref{fig:newincrementalclosurewithoutqueue}. Quintic $\min$ can be
realised as four binary $\min$ operations, hence the total number
of binary $\min$ operations required for \IncrementalClosure\/ is
$16n^2$, which is quadratic in $n$.   The listing in the
bottom of the figure shows how commonality can be factored out so that
each iteration of the inner loop requires a single ternary $\min$ to
be computed. Factorisation reduces the number of binary $\min$ operations to $2n(2 + 4n)$ = $8n^2 + 4n$ 
in \IncrementalClosureHoisting\/.  Moreover, this form of code hoisting
is also applicable algorithms that follow (though this optimisation is not elaborated in the sequel).
Furthermore, like \IncrementalClosure\/,
\IncrementalClosureHoisting\/ is not sensitive to the specific traversal order
of the DBM, hence has potential for parallelisation.   In additional, both
\IncrementalClosure\/ and
\IncrementalClosureHoisting\/ do not incur any checks. 

\begin{figure}[t]
\begin{tabular}{p{12cm}}
  \begin{algorithmic}[1]
        \Function{\IncrementalClosure}{$\dbm{m}$, $x'_{a} - x'_{b} \leq d$}
    \For{$i \in \{ 0, \ldots, 2n-1 \}$} 
    \For{$j \in \{ 0, \ldots, 2n-1 \}$}
    \State $\dbmij{m'}{i}{j} \gets \min \left(
      \begin{array}{l}
        \dbmij{m}{i}{j}, \\
        \dbmij{m}{i}{a}+ d + \dbmij{m}{b}{j},\\
        \dbmij{m}{i}{\bar{b}} + d + \dbmij{m}{\bar{a}}{j},\\
        \dbmij{m}{i}{\bar{b}} + d + \dbmij{m}{\bar{a}}{a} + d + \dbmij{m}{b}{j}, \\
        \dbmij{m}{i}{a} + d + \dbmij{m}{b}{\bar{b}} + d + \dbmij{m}{\bar{a}}{j} \\
      \end{array}
    \right)
    $
    \EndFor
    \EndFor
        \If{\CheckConsistent($\dbm{m'}$)}
        \State \Return{$\dbm{m'}$}
	\Else
        \State \Return{$false$}
	\EndIf
    \EndFunction
  \end{algorithmic}
  \\
    \begin{algorithmic}[1]
        \Function{\IncrementalClosureHoisting}{$\dbm{m}$, $x'_{a} - x'_{b} \leq d$}
           \State $t_1 \gets d + \dbmij{m}{\bar{a}}{a} + d$;
           \State $t_2 \gets d + \dbmij{m}{b}{\bar{b}} + d$; 
        \For{$i \in \{ 0,  \ldots,  2n-1 \}$} 
           \State $t_3 \gets \min(\dbmij{m}{i}{a} + d, \dbmij{m}{i}{\bar{b}} + t_1)$;
           \State $t_4 \gets \min(\dbmij{m}{i}{\bar{b}} + d, \dbmij{m}{i}{a} + t_2)$; 
        \For{$j \in \{ 0, \ldots, 2n-1 \}$}
           \State $\dbmij{m'}{i}{j} \gets \min(\dbmij{m}{i}{j}, t_3 + \dbmij{m}{b}{j}, t_4 + \dbmij{m}{\bar{a}}{j})$
        \EndFor
         \If{$\dbmij{m'}{i}{i} < 0$}
        \State \Return{$false$}
	\EndIf        
        \EndFor
        \State \Return{$\dbm{m}'$}
        \EndFunction
      \end{algorithmic}
\end{tabular}
  \caption{Incremental Closure (without and with code hoisting)}
  \label{fig:newincrementalclosurewithoutqueue}
\end{figure}

\begin{figure}[t]
  \begin{tabular}{cc}
  \begin{tikzpicture}[scale=0.25, axis/.style={thick,->}]
    \draw[axis] (-4,0) -- (10,0) node(xline)[right] {$x_0$};
    \draw[axis] (0,-4) -- (0,10) node(yline)[above] {$x_1$};

    \draw[draw=none, pattern=north east lines, pattern color=black!40] (10,-3) -- (9,-3) -- (-4,10) -- (-3,10) --cycle;
    \draw[draw=none, pattern=north west lines, pattern color=black!40] (-4,0) -- (10,0) -- (10,-1) -- (-4,-1)-- cycle;
    \draw[draw=none, pattern=north west lines, pattern color=black!40] (10,3) -- (10,4) -- (2,-4) -- (3,-4) -- cycle;

    \draw[draw=none, pattern=north west lines, pattern color=black!40] (7,-4) -- (6,-4) -- (6,10) -- (7,10) -- cycle;
    
    % \node[] at (7,1.5) {$7$};
    % \node[] at (0.7,7) {$7$};
    \node[] at (13,-2) {$x_0 + x_1 \leq 7$};
    \node[] at (-6.5,0) {$ x_1 \leq 0$};
    \node[] at (13,4) {$x_0 - x_1 \leq 7$};
    \node[] at (7, 11) {$x_0 \leq 7$};
    \draw (10,-3) -- (-3,10) node(ypoint)[left] {};
    \draw (10,3) -- (3,-4) node(ypoint2)[left] {};
    \draw (7,-4) -- (7,10) node(x)[left] {};
  \end{tikzpicture} 
    &
 \begin{tikzpicture}[scale=0.25, axis/.style={thick,->}]
    \draw[axis] (-4,0) -- (10,0) node(xline)[right] {$x_0$};
    \draw[axis] (0,-4) -- (0,10) node(yline)[above] {};

    \draw[draw=none, pattern=north east lines, pattern color=black!40] (-4,4) -- (-4,3) -- (3,-4) -- (4,-4) -- cycle;

    \draw[draw=none, pattern=north west lines, pattern color=black!40] (-4,0) -- (10,0) -- (10,-1) -- (-4,-1)-- cycle;
    \draw[draw=none, pattern=north west lines, pattern color=black!40] (10,10) -- (9,10) -- (-4,-3) -- (-4,-4) -- cycle;

    \draw[draw=none, pattern=north west lines, pattern color=black!40] (0,-4) -- (-1,-4) -- (-1,10) -- (0,10) -- cycle;
    
    % \node[] at (7,1.5) {$7$};
    % \node[] at (0.7,7) {$7$};
    \node[] at (7.5,-4) {$x_0 + x_1 \leq 0$};
    \node[] at (-6,0) {$ x_1 \leq 0$};
    \node[] at (8,4) {$x_0 - x_1 \leq 0$};
    \node[] at (0, 11) {$x_0 \leq 0$};
    \draw (-4,4) -- (4,-4) node(ypoint)[left] {};
    \draw (10,10) -- (-4,-4) node(ypoint2)[left] {};
    \draw (0,-4) -- (0,10) node(x)[left] {};
  \end{tikzpicture} 
  \end{tabular}
  \caption{Before and after adding $x_0 - x_1 \leq 0$}
  \label{fig:before-and-after}
\end{figure}

\begin{example}
  \label{example:aplasiswrong}
To illustrate how the incremental closure algorithm of
  \cite{ChawdharyRKAPLAS14}, from which the above is derived, omits a form of propagation, consider adding
  $x_0 - x_1 \leq 0$, or equivalently $x'_0 - x'_2 \leq 0$, to the 
  system on the left
  \[
\begin{array}{l}
\\
    x_0 \leq 7, \\
    x_1 \leq 0,  \\
    x_0 - x_1 \leq 7, \\
    x_0 + x_1 \leq 0 
\end{array}
\qquad\qquad
\dbm{m} =
      \begin{array}{c}
        \kbordermatrix{
        & x'_0   & x'_1 & x'_2   & x'_3 \\
        x'_0 & 0      & 14   & 7      & 7 \\
        x'_1 & \infty & 0    & \infty & \infty \\
        x'_2 & \infty & 7    & 0      & 0 \\
        x'_3 & \infty & 7    & \infty & 0 \cr} 
      \end{array}
  \]
  whose DBM $\dbm{m}$ is given on right. The system is illustrated
  spatially on the left hand side of
  Figure~\ref{fig:before-and-after}; the right hand side of the same
  figure shows the effect of adding the constraint $x_0 - x_1 \leq 0$.
  Adding $x_0 - x_1 \leq 0$ using the incremental closure algorithm
  from \cite{ChawdharyRKAPLAS14} gives the DBM $\dbm{m'}$;
  \IncrementalClosure\/ gives the DBM $\dbm{m''}$:
\[
    \dbm{m'} = \begin{array}{@{}r@{\qquad\qquad}r@{}}
             \begin{array}{c}
               \kbordermatrix{
               & x'_0   & x'_1 & x'_2   & x'_3 \\
               x'_0 & 0      & 7   & 0      & 0 \\
               x'_1 & \infty & 0    & \infty & \infty \\
               x'_2 & \infty & 0    & 0      & 0 \\
               x'_3 & \infty & 0    & \infty & 0 
               }  
             \end{array}
           &
     \dbm{m''} =  \begin{array}{c}
               \kbordermatrix{
               & x'_0   & x'_1 & x'_2   & x'_3 \\
               x'_0 & 0      & 0   & 0      & 0 \\
               x'_1 & \infty & 0    & \infty & \infty \\
               x'_2 & \infty & 0    & 0      & 0 \\
               x'_3 & \infty & 0    & \infty & 0 
               }  
             \end{array}
    \end{array}
  \]  
  The DBM $\dbm{m'}$ represents the constraint $x \leq \frac{7}{2}$ but
  $\dbm{m''}$ encodes the tighter constraint $x \leq 0$. The reason
  for the discrepancy between entries $\dbmij{m'}{0}{1}$ and
  $\dbmij{m''}{0}{1}$ is shown by the following calculations:
  \[
    \dbmij{m'}{0}{1} = \min \left (
                      \begin{array}{l}
                        \dbmij{m}{0}{1} \\
                        \dbmij{m}{0}{0} + 0 + \dbmij{m}{2}{1} \\
                        \dbmij{m}{0}{\bar 2} + 0 + \dbmij{m}{\bar 0}{1} \\
                      \end{array} \right ) 
                    = \min \left (
                      \begin{array}{l}
                        14, \\
                        0 + 0 + 7 \\
                        7 + 0 + 0 \\
                      \end{array} \right) = 7 \\
  \]
  \[
    \dbmij{m''}{0}{1} = \min \left (
                      \begin{array}{l}
                        \dbmij{m}{0}{1} \\
                        \dbmij{m}{0}{0} + 0 + \dbmij{m}{2}{1} \\
                        \dbmij{m}{0}{\bar 2} + 0 + \dbmij{m}{\bar 0}{1} \\
                        \dbmij{m}{0}{0} + 0 + \dbmij{m}{2}{\bar 2} + 0 + \dbmij{m}{\bar 0}{1} \\
                        \dbmij{m}{0}{\bar 2} + 0 + \dbmij{m}{\bar 0}{0} + 0 + \dbmij{m}{2}{1} \\ 
                      \end{array} \right )
    = \min \left (
    \begin{array}{l}
      14 \\
      0 + 0 + 7 \\
      7 + 0 + 0 \\
      0 + 0 + 0 + 0 + 0 \\
      7 + 0 + \infty + 0 + 7 \\
    \end{array}
    \right ) = 0
    \]
    The entry at $\dbmij{m'}{0}{1}$ is calculated using
    $\dbmij{m}{2}{1}$, but this entry will itself reduce to 0;
    $\dbmij{m'}{0}{1}$ must take into account the change that occurs
    to $\dbmij{m}{2}{1}$. More generally, when calculating 
    $\dbmij{m'}{i}{j}$, the $\min$ expression of
    \cite{ChawdharyRKAPLAS14} overlooks how the added constraint can tighten 
$\dbmij{m}{i}{a}$, $\dbmij{m}{i}{b}$,
    $\dbmij{m}{i}{\barb}$ or $\dbmij{m}{\bara}{j}$. 
\qed
\end{example}
The new incremental algorithm is justified by
Theorem~\ref{thm:incrclosureclosed} which, in turn, is supported by
the following lemma:

\begin{lemma}
\label{lemma:nonzero}
Suppose $\dbm{m}$ is a closed DBM, $\dbm{m'}$ = $\IncrementalClosure(\dbm{m}, o)$
and  $o = (x'_a - x'_b \leq d)$.  Then $\dbm{m'}$ is consistent if and only if:
\begin{itemize}

\item $\dbmij{m}{b}{a} + d \geq 0$

\item $\dbmij{m}{\bar{a}}{\bar{b}} + d \geq 0$

\item $\dbmij{m}{\bar{a}}{a} + d + \dbmij{m}{b}{\bar{b}} + d \geq 0$

\end{itemize}
\end{lemma}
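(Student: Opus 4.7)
The plan is to compute $\dbmij{m'}{i}{i}$ directly from the definition of \IncrementalClosure{} by specialising $j = i$, and then to characterise when $\min_i \dbmij{m'}{i}{i} \geq 0$. Using $\dbmij{m}{i}{i} = 0$ by closure of $\dbm{m}$, the defining recurrence gives
\[
\dbmij{m'}{i}{i} = \min\bigl(0,\; A_i,\; B_i,\; C_i,\; D_i\bigr),
\]
where $A_i = \dbmij{m}{i}{a} + d + \dbmij{m}{b}{i}$, $B_i = \dbmij{m}{i}{\bar{b}} + d + \dbmij{m}{\bar{a}}{i}$, $C_i$ is the four-term quantity threading through $\dbmij{m}{\bar{a}}{a}$, and $D_i$ is its counterpart threading through $\dbmij{m}{b}{\bar{b}}$. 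Consistency of $\dbm{m'}$ is then equivalent to having $A_i, B_i, C_i, D_i \geq 0$ for every $i$, so the proof reduces to extracting closed-form bounds for these four families in terms of entries of the original $\dbm{m}$.

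For the ``if'' direction, I would use the triangle inequality guaranteed by closure of $\dbm{m}$ to pull $i$ out of each summand. Specifically, $\dbmij{m}{i}{a} + \dbmij{m}{b}{i} \geq \dbmij{m}{b}{a}$ yields $A_i \geq \dbmij{m}{b}{a} + d$, and symmetrically $B_i \geq \dbmij{m}{\bar{a}}{\bar{b}} + d$. For $C_i$ the triangle inequality applied to the $i$-indexed pair gives $\dbmij{m}{i}{\bar{b}} + \dbmij{m}{b}{i} \geq \dbmij{m}{b}{\bar{b}}$, whence $C_i \geq \dbmij{m}{\bar{a}}{a} + 2d + \dbmij{m}{b}{\bar{b}}$; an analogous argument produces the identical lower bound for $D_i$. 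The three hypothesised inequalities then force each summand to be non-negative at every index $i$.

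For the ``only if'' direction, I would exhibit specific indices that saturate each bound. Choosing $i = b$ collapses $A_b$ to $\dbmij{m}{b}{a} + d$ (since $\dbmij{m}{b}{b} = 0$); choosing $i = \bar{a}$ collapses $B_{\bar{a}}$ to $\dbmij{m}{\bar{a}}{\bar{b}} + d$; and $i = b$ again collapses $C_b$ to $\dbmij{m}{\bar{a}}{a} + 2d + \dbmij{m}{b}{\bar{b}}$. Consistency of $\dbm{m'}$ therefore forces $\dbmij{m'}{b}{b} \geq 0$ and $\dbmij{m'}{\bar{a}}{\bar{a}} \geq 0$, from which the three stated conditions fall out directly.

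The only real subtlety is recognising that the superficially asymmetric summands $C_i$ and $D_i$ collapse to the same closure-derived quantity: they describe paths with the same endpoints $a, b, \bar{a}, \bar{b}$ and differ only in which of $\dbmij{m}{\bar{a}}{a}$ or $\dbmij{m}{b}{\bar{b}}$ is fixed versus indexed by $i$, so the triangle inequality produces matching lower bounds. Once that symmetry is spotted, the remaining work is routine bookkeeping, and a minor technicality is confirming that the chosen witnesses $i \in \{b, \bar{a}\}$ are valid indices in $\{0, \ldots, 2n-1\}$, which is immediate from the hypotheses of the lemma.
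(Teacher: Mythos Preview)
Your proposal is correct and matches the paper's own proof essentially line for line: both directions are handled by specialising the five-way $\min$ defining $\dbmij{m'}{i}{i}$, using closure of $\dbm{m}$ to bound each branch below by one of the three quantities, and then instantiating $i \in \{b,\bar{a}\}$ (with $\dbmij{m}{b}{b}=\dbmij{m}{\bar a}{\bar a}=0$) to witness that those bounds are attained. The only cosmetic difference is that the paper reads off the third inequality from the $D$-branch at $i=\bar a$ whereas you read it off from the $C$-branch at $i=b$; by the symmetry you already noted, these give the same bound.
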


%\noindent Note that Lemma~\ref{lemma:nonzero} justifies a quick way to
%check if a given octagonal constraint $o$ will result in a DBM
%$\dbm{m}$ being inconsistent: we just check if one of the four
%conditions holds. For the sake of clarity we do not include this check
%for consistency in the presentation of our closure algorithms.

\begin{theorem}[\rm Correctness of $\IncrementalClosure$]
  \label{thm:incrclosureclosed}
  Suppose $\dbm{m}$ is a closed DBM, $\dbm{m'}$ =
  $\IncrementalClosure(\dbm{m}, o)$ and $o = (x'_a - x'_b \leq d)$.
  Then $\dbm{m'}$ is either closed or it is not consistent.
\end{theorem}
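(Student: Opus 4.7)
The plan is to verify the two closure conditions of Definition~\ref{def:closure} on $\dbm{m'}$ under the assumption that $\dbm{m'}$ is consistent; otherwise the theorem is trivially satisfied. For the diagonal condition, the call to \CheckConsistent\/ at the end of \IncrementalClosure\/ resets every non-negative diagonal entry to zero, so consistency immediately gives $\dbmij{m'}{i}{i} = 0$ for all $i$. The real content of the proof lies in the triangle inequality $\dbmij{m'}{i}{j} \leq \dbmij{m'}{i}{k} + \dbmij{m'}{k}{j}$ for all $i,j,k$.

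Each entry of $\dbm{m'}$ is the minimum of five candidate terms, namely: the old value $\dbmij{m}{\cdot}{\cdot}$ together with four paths that traverse the added constraint in the four ways depicted in Figure~\ref{fig:casesmin}. Label these five candidates $(A)$ through $(E)$, as in the text of Section~\ref{subsec:new}. The strategy is to show that for every choice of a candidate $X_{ik}$ realising $\dbmij{m'}{i}{k}$ and $Y_{kj}$ realising $\dbmij{m'}{k}{j}$, the sum $X_{ik}+Y_{kj}$ dominates at least one of the five candidates of $\dbmij{m'}{i}{j}$. This yields twenty-five sub-cases, each of which is discharged mechanically using two tools: the closure of $\dbm{m}$, which collapses sub-sums of the form $\dbmij{m}{p}{k}+\dbmij{m}{k}{q}$ into $\dbmij{m}{p}{q}$; and the consistency conditions of Lemma~\ref{lemma:nonzero}, which allow redundant traversals of the new constraint to be deleted.

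As a representative easy case, combining $(B)$ at $(i,k)$ with $(B)$ at $(k,j)$ gives $\dbmij{m}{i}{a}+d+\dbmij{m}{b}{k}+\dbmij{m}{k}{a}+d+\dbmij{m}{b}{j}$; closure collapses the middle two terms to $\dbmij{m}{b}{a}$, after which the inequality $\dbmij{m}{b}{a}+d\geq 0$ from Lemma~\ref{lemma:nonzero} discards one redundant $d$ and recovers $(B)$ at $(i,j)$. The symmetric identity $\dbmij{m}{\bar{a}}{\bar{b}}+d\geq 0$ dispatches the mirrored combinations built from $(C)$. Cases mixing $(B)$ with $(D)$ or $(E)$ produce three occurrences of $d$ after closure; here the third consistency condition $\dbmij{m}{\bar{a}}{a}+2d+\dbmij{m}{b}{\bar{b}}\geq 0$ eliminates the extra pair and reduces the sum to one of $(B)$, $(C)$, $(D)$ or $(E)$.

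The main obstacle is not any single inequality but the bookkeeping: enumerating the twenty-five combinations and, in each, selecting the right closure collapse together with the right consistency condition so that the remaining expression is literally one of the five candidates for $\dbmij{m'}{i}{j}$. The hardest sub-cases are those where both $X_{ik}$ and $Y_{kj}$ are drawn from $\{(D),(E)\}$, because the concatenated sum then contains four occurrences of $d$ and requires two applications of the non-negativity conditions together with several closure collapses; the asymmetry between the orientations $a \to b$ and $\bar b \to \bar a$ determines which of $(B)$–$(E)$ is eventually dominated. Once this case analysis is complete the triangle inequality holds everywhere, and combined with the zero diagonal this establishes that $\dbm{m'}$ is closed.
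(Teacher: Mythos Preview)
Your proposal is correct and follows essentially the same route as the paper: assume consistency, dispose of the diagonal, and then discharge the triangle inequality by a $5\times 5$ case split on which of the five candidates realise $\dbmij{m'}{i}{k}$ and $\dbmij{m'}{k}{j}$, using closure of $\dbm{m}$ to collapse intermediate vertices and the three inequalities of Lemma~\ref{lemma:nonzero} to discard surplus traversals of the new edge. The only cosmetic difference is in the diagonal step: the paper argues directly that $\dbmij{m'}{i}{i}\leq\dbmij{m}{i}{i}=0$ together with consistency forces $\dbmij{m'}{i}{i}=0$, whereas you appeal to the reset inside \CheckConsistent; both are valid, though the paper notes that this reset is in fact redundant for \IncrementalClosure.
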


% \noindent What is intriguing is that the incremental closure
% algorithms we previously proposed \cite{ChawdharyRKAPLAS14} omitted
% the last two cases and yet this was not exposed by fuzz testing
% against non-incremental closure. \marginnote{Actually this isn't true:
%   I found the two cases when testing with my OCaml implementation}
% This is because the order in which DBM entries are updated masks the
% problem: the bottom left path does not arise if either
% $\dbmij{m}{i}{a}$ or $\dbmij{m}{\bar{a}}{j}$ are updated prior to
% $\dbmij{m}{i}{j}$.  The problem was only exposed in attempt to justify
% \cite{ChawdharyRKAPLAS14} with simple proofs.

Note that unsatisfiability can be detected without applying any $\min$
operations at all, though for brevity this is omitted in the presentation of
the algorithms. Fast unsatisfiability checking is
justified by the following corollary of Lemma~\ref{lemma:nonzero}:

\begin{corollary}\label{cor:nonzero1}
Suppose $\dbm{m}$ is a closed DBM, $\dbm{m'}$ = $\IncrementalClosure(\dbm{m}, o)$
and  $o = (x'_a - x'_b \leq d)$.   If
\begin{itemize}

\item $\dbmij{m}{b}{a} + d < 0$ or

\item $\dbmij{m}{\bar{a}}{\bar{b}} + d < 0$ or

\item $\dbmij{m}{b}{\bar{b}} + d + \dbmij{m}{\bar{a}}{a} + d < 0$

\end{itemize}
then $\dbm{m'}$ is not consistent.
\end{corollary}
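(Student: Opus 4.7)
The plan is to observe that Corollary~\ref{cor:nonzero1} is essentially the contrapositive of one direction of the biconditional established in Lemma~\ref{lemma:nonzero}, so no new mathematical content is required beyond that lemma. Specifically, Lemma~\ref{lemma:nonzero} asserts that $\dbm{m'}$ is consistent \emph{if and only if} the three inequalities $\dbmij{m}{b}{a} + d \geq 0$, $\dbmij{m}{\bar{a}}{\bar{b}} + d \geq 0$, and $\dbmij{m}{\bar{a}}{a} + d + \dbmij{m}{b}{\bar{b}} + d \geq 0$ all hold simultaneously.

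The first step is to assume, by way of contrapositive, that $\dbm{m'}$ \emph{is} consistent, and derive that none of the three strict inequalities listed in the corollary can hold. From the ``only if'' direction of Lemma~\ref{lemma:nonzero}, the three non-strict inequalities $\dbmij{m}{b}{a} + d \geq 0$, $\dbmij{m}{\bar{a}}{\bar{b}} + d \geq 0$, and $\dbmij{m}{\bar{a}}{a} + d + \dbmij{m}{b}{\bar{b}} + d \geq 0$ must all be true. Negating each (and using commutativity of addition to match the form $\dbmij{m}{b}{\bar{b}} + d + \dbmij{m}{\bar{a}}{a} + d$ appearing in the corollary) yields that each of the three strict inequalities in the hypothesis of the corollary is false. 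Hence if any one of them is true, $\dbm{m'}$ cannot be consistent.

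The second step is simply to package the argument into the disjunctive form stated: the hypothesis of the corollary is the disjunction of the three strict-inequality conditions, and this disjunction is logically equivalent to the negation of the conjunction of the three non-strict conditions; by the biconditional of the lemma, that negation is equivalent to the non-consistency of $\dbm{m'}$. There is no real obstacle here; the only mild point to be careful about is the cosmetic reordering of the summands in the third bullet, which is justified by associativity and commutativity of $+$ on the extended rationals (with $\infty$ absorbing), and the observation that strict negation of ``$\geq 0$'' is precisely ``$< 0$''. Thus the corollary is immediate from Lemma~\ref{lemma:nonzero}, and its value lies not in new reasoning but in isolating a cheap, $\min$-free check that can be performed before any DBM updates are carried out.
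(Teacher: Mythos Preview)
Your proposal is correct and matches the paper's approach: the paper presents Corollary~\ref{cor:nonzero1} without a separate proof, simply calling it a corollary of Lemma~\ref{lemma:nonzero}, which is exactly the contrapositive argument you spell out. There is nothing to add.
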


%%% Local Variables:
%%% mode: latex
%%% TeX-master: "main"
%%% End:

%\input{dyadicincrementalclosure}

%%% Local Variables:
%%% mode: latex
%%% TeX-master: "main"
%%% End:

% !TEX root = main.tex

\subsection{Properties of Incremental Closure}

By design \IncrementalClosure\/ recovers closure, but it should also
be natural for the algorithm to preserve and enforce other properties
too. These properties are not just interesting within themselves; they
provide scaffolding for results that follow:

\begin{proposition}\label{lemma-monotonicity}
Suppose $\dbm{m} \leq \dbm{m}'$ (pointwise) and $o = (x'_a - x'_b \leq d)$.  Then
$\IncrementalClosure(\dbm{m}, o)
\leq
\IncrementalClosure(\dbm{m}', o)$.
\end{proposition}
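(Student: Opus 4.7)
The plan is to argue pointwise: for every pair of indices $(i,j)$, the value that \IncrementalClosure\/ writes to $\dbmij{m'}{i}{j}$ is an expression built only from $d$ (a constant) and from a small number of entries of the input matrix, combined using $+$ and $\min$. Both of these operations are monotone in each of their arguments on the extended reals $\mathbb{Q} \cup \{+\infty\}$: if $x_k \leq y_k$ for every $k$, then $\sum_k x_k \leq \sum_k y_k$ and $\min_k x_k \leq \min_k y_k$. Since composition preserves monotonicity, the resulting entry is a monotone function of the input DBM at the finitely many positions it reads from.

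More concretely, I would first fix arbitrary indices $i,j$ and inspect the update rule in Figure~\ref{fig:newincrementalclosurewithoutqueue}. It prescribes
\[
\dbmij{m'}{i}{j} = \min\!\left(\!
  \begin{array}{l}
    \dbmij{m}{i}{j},\\
    \dbmij{m}{i}{a} + d + \dbmij{m}{b}{j},\\
    \dbmij{m}{i}{\bar{b}} + d + \dbmij{m}{\bar{a}}{j},\\
    \dbmij{m}{i}{\bar{b}} + d + \dbmij{m}{\bar{a}}{a} + d + \dbmij{m}{b}{j},\\
    \dbmij{m}{i}{a} + d + \dbmij{m}{b}{\bar{b}} + d + \dbmij{m}{\bar{a}}{j}
  \end{array}\!\right).
\]
Assuming $\dbm{m} \leq \dbm{m}'$ pointwise, each summand of the form $\dbmij{m}{p}{q}$ is dominated by its counterpart $\dbmij{m'}{p}{q}$, so by monotonicity of $+$ each of the five arguments to the outer $\min$ computed from $\dbm{m}$ is at most the corresponding argument computed from $\dbm{m}'$. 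Applying monotonicity of $\min$ then yields $\IncrementalClosure(\dbm{m}, o)_{i,j} \leq \IncrementalClosure(\dbm{m}', o)_{i,j}$. Since $i,j$ were arbitrary, the pointwise inequality over the whole matrix follows.

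The only subtlety to double-check is the treatment of $\infty$: additions $\infty + c = \infty$ and $\min(\infty, x) = x$ are both monotone, so entries that are infinite in $\dbm{m}$ (and hence possibly finite in $\dbm{m}'$) cause no trouble. Nothing else in \IncrementalClosure\/ depends on the input, apart from the optional \CheckConsistent\/ test; but consistency-checking only affects whether the algorithm returns a matrix or $\mathit{false}$, and Proposition~\ref{lemma-monotonicity} is about the matrix case. (If one wanted to phrase the result to cover the $\mathit{false}$ outcomes as well, one would additionally observe, using Lemma~\ref{lemma:nonzero}, that consistency of the output on the larger input $\dbm{m}'$ is entailed by consistency on the smaller input $\dbm{m}$, since the three sums appearing there are themselves monotone in the matrix.) I do not anticipate any genuine obstacle; the proof is essentially an unpacking of the update rule together with the elementary monotonicity of $+$ and $\min$.
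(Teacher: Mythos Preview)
Your argument is correct and matches the paper's approach: the paper does not even include a proof of Proposition~\ref{lemma-monotonicity} in its appendix, treating it as immediate, and for the analogous monotonicity results it does prove (Propositions~\ref{prop-str-monotonicity} and~\ref{prop:intmonotonicity}) it uses exactly the pointwise ``$+$ and $\min$ are monotone'' reasoning you give. One small slip: in your remark about $\infty$ you have the direction reversed---if $\dbm{m} \leq \dbm{m}'$ and an entry of $\dbm{m}$ is $\infty$, then the corresponding entry of $\dbm{m}'$ is forced to be $\infty$ as well, not ``possibly finite''---but this does not affect the argument.
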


\begin{proposition}\label{lemma:coherence}
Suppose $\dbm{m}$ is coherent, $\dbm{m'}$ = $\IncrementalClosure(\dbm{m}, o)$
and  $o = (x'_a - x'_b \leq d)$.  Then $\dbm{m'}$ is coherent.
\end{proposition}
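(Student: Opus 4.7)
The plan is to verify coherence directly by computing both $\dbmij{m'}{i}{j}$ and $\dbmij{m'}{\barj}{\bari}$ from the definition of \IncrementalClosure\/ and observing that they match term-by-term under the coherence assumption on $\dbm{m}$ together with the involution $\bar{\bari} = i$.

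First, I would unroll $\dbmij{m'}{\barj}{\bari}$ by substituting $\barj$ for the row index and $\bari$ for the column index in the five-way minimum that defines \IncrementalClosure\/. This yields terms such as $\dbmij{m}{\barj}{\bari}$, $\dbmij{m}{\barj}{a} + d + \dbmij{m}{b}{\bari}$, $\dbmij{m}{\barj}{\bar{b}} + d + \dbmij{m}{\bar{a}}{\bari}$, and the two longer composite terms. Second, I would systematically apply the coherence hypothesis $\dbmij{m}{p}{q} = \dbmij{m}{\bar{q}}{\bar{p}}$ to each factor, using $\bar{\bari} = i$ and $\bar{\barj} = j$ as well as $\bar{\bara} = a$, $\bar{\barb} = b$. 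For example, $\dbmij{m}{\barj}{a} = \dbmij{m}{\bara}{j}$ and $\dbmij{m}{b}{\bari} = \dbmij{m}{i}{\barb}$, so the second term becomes $\dbmij{m}{\bara}{j} + d + \dbmij{m}{i}{\barb}$, which up to commutativity of $+$ is the third term of the min defining $\dbmij{m'}{i}{j}$.

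The computation then proceeds in parallel for all five terms: the first term is self-coherent; terms two and three swap into each other; and the two quintic terms swap into each other (the invariant central factor $\dbmij{m}{\bara}{a}$ corresponds to $\dbmij{m}{b}{\barb}$ under coherence). Since min is commutative and the multiset of terms on each side coincides, we conclude $\dbmij{m'}{i}{j} = \dbmij{m'}{\barj}{\bari}$ for all $i,j$, which is the definition of coherence.

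I do not anticipate any real obstacle: the argument is purely symbolic, relying only on the fact that the five expressions in the min are collectively closed under the substitution $(i,j) \mapsto (\barj, \bari)$. The mildly annoying step, and the only place where care is needed, is verifying that the two quintic composite terms are permuted into each other rather than into themselves; this reduces to checking that the central pivot $\bara$--$a$ in one term corresponds, under the coherence reflection, to the central pivot $b$--$\barb$ of the other. Once this pairing is observed the proof is mechanical.
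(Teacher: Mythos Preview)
Your approach is correct and essentially the same as the paper's: both arguments verify that the five terms in the $\min$ defining $\dbmij{m'}{i}{j}$ are, after applying coherence of $\dbm{m}$, a permutation of the five terms defining $\dbmij{m'}{\barj}{\bari}$. The paper phrases this as a case analysis showing $\dbmij{m'}{i}{j} \geq \dbmij{m'}{\barj}{\bari}$ and then invokes symmetry; you go directly for equality of the two $\min$ expressions, which is marginally cleaner.

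One small correction to your bookkeeping: the two quintic terms do \emph{not} swap into each other --- each maps to itself. The central factors $\dbmij{m}{\bara}{a}$ and $\dbmij{m}{b}{\barb}$ are each fixed by coherence (since $\dbmij{m}{\bara}{a} = \dbmij{m}{\bar a}{\bar{\bar a}} = \dbmij{m}{\bara}{a}$), and the outer factors of the fourth term $\dbmij{m}{\barj}{\barb}$, $\dbmij{m}{b}{\bari}$ become $\dbmij{m}{b}{j}$, $\dbmij{m}{i}{\barb}$, which are exactly the outer factors of the fourth term in $\dbmij{m'}{i}{j}$. The permutation is therefore $(1)(2\;3)(4)(5)$, not $(1)(2\;3)(4\;5)$. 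This does not affect the argument, since all you need is that the multiset of terms is preserved, but it is worth getting the pairing right when you write it out.
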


%\subsubsection{Idempotence}\label{sec:idempotence}

An important property of \IncrementalClosure\/ is idempotence: it
formalises the idea that an octagon should not change shape if it is
repeatedly intersected with the same inequality. If idempotence did
not hold then there would exist
$\dbm{m'} = \IncrementalClosure(\dbm{m}, o)$ and
$\dbm{m''} = \IncrementalClosure(\dbm{m'}, o)$ for which
$\dbm{m'} \neq \dbm{m''}$. This would suggest that
\IncrementalClosure\/ did not properly tighten $\dbm{m}$ using the
inequality $o$, but overlooked some propagation, which is the form of
suboptimal behaviour we are aiming to avoid.

% Idempotence also suggests that our
% incremental closure algorithm operates on each entry in the DBM in an
% independent manner; that is the traversal order over the DBM is
% irrelevant.
% This should like associativity

\begin{proposition}
\label{lemma-idempotence}
Suppose that $\dbm{m}$ is a closed DBM,  
$\dbm{m'} = \IncrementalClosure(\dbm{m}, o)$,
$\dbm{m''} = \IncrementalClosure(\dbm{m'}, o)$ and
$o = (x'_a - x'_b \leq d)$. Then either $\dbm{m'}$ is consistent and
$\dbm{m''} = \dbm{m'}$ or $\dbm{m''}$ is not consistent.
\end{proposition}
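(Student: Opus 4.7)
The plan is to proceed by case analysis on whether $\dbm{m'}$ is consistent, appealing to Theorem~\ref{thm:incrclosureclosed} which guarantees that $\dbm{m'}$ is either closed or inconsistent. The inconsistent case is handled by a short contractivity argument; the consistent case is the interesting one and amounts to showing that a second application of \IncrementalClosure\/ with the same $o$ produces no change.

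For the inconsistent case, inspection of \IncrementalClosure\/ reveals that its update rule assigns $\dbmij{m''}{i}{j} \gets \min(\dbmij{m'}{i}{j},\ldots)$, so $\dbm{m''} \leq \dbm{m'}$ pointwise. In particular $\dbmij{m''}{i}{i} \leq \dbmij{m'}{i}{i}$ for all $i$, so any strictly negative diagonal entry of $\dbm{m'}$ transfers to $\dbm{m''}$, whence $\dbm{m''}$ is also inconsistent.

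For the consistent case, Theorem~\ref{thm:incrclosureclosed} gives that $\dbm{m'}$ is closed, and Proposition~\ref{lemma:coherence} gives that $\dbm{m'}$ is coherent. The pivotal preliminary observation is the pair of inequalities $\dbmij{m'}{a}{b} \leq d$ and $\dbmij{m'}{\bar{b}}{\bar{a}} \leq d$. The first follows because one term in the $\min$-expression defining $\dbmij{m'}{a}{b}$ is $\dbmij{m}{a}{a} + d + \dbmij{m}{b}{b}$, which equals $d$ since $\dbm{m}$ is closed; the second then follows by coherence of $\dbm{m'}$. It remains to bound each of the four non-trivial terms in the $\min$-expression for $\dbmij{m''}{i}{j}$ below by $\dbmij{m'}{i}{j}$. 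The two binary terms satisfy
\begin{align*}
\dbmij{m'}{i}{a} + d + \dbmij{m'}{b}{j} &\geq \dbmij{m'}{i}{a} + \dbmij{m'}{a}{b} + \dbmij{m'}{b}{j} \geq \dbmij{m'}{i}{j}, \\
\dbmij{m'}{i}{\bar{b}} + d + \dbmij{m'}{\bar{a}}{j} &\geq \dbmij{m'}{i}{\bar{b}} + \dbmij{m'}{\bar{b}}{\bar{a}} + \dbmij{m'}{\bar{a}}{j} \geq \dbmij{m'}{i}{j},
\end{align*}
by one application of the preliminary bound and then closure of $\dbm{m'}$. Each quintic term is treated analogously: both occurrences of $d$ are replaced by the corresponding $\dbm{m'}$-entry, producing a sum of five consecutive $\dbm{m'}$-distances that telescopes to $\dbmij{m'}{i}{j}$ under four applications of closure. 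Consequently $\dbmij{m''}{i}{j} = \dbmij{m'}{i}{j}$, as required.

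The main obstacle is essentially book-keeping in the quintic case: one must verify that after substituting $\dbmij{m'}{a}{b}$ and $\dbmij{m'}{\bar{b}}{\bar{a}}$ for the two copies of $d$, the resulting five-edge path $i \to \bar{b} \to \bar{a} \to a \to b \to j$ (or the mirror-image path for the other quintic term) has adjacent indices that match up correctly so that repeated use of closure collapses it to $\dbmij{m'}{i}{j}$. Once this alignment is checked the remainder of the argument is entirely mechanical, and no appeals to Proposition~\ref{lemma-monotonicity} or Lemma~\ref{lemma:nonzero} are needed.
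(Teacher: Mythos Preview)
Your argument is correct and takes a genuinely different route from the paper. The paper never invokes Theorem~\ref{thm:incrclosureclosed}; instead it works entirely at the level of $\dbm{m}$, using Lemma~\ref{lemma:nonzero} to obtain $\dbmij{m}{b}{a}+d\geq 0$, $\dbmij{m}{\bara}{\barb}+d\geq 0$ and $\dbmij{m}{\bara}{a}+d+\dbmij{m}{b}{\barb}+d\geq 0$, then computes simplified closed forms for $\dbmij{m'}{\bara}{a}$, $\dbmij{m'}{b}{\barb}$, $\dbmij{m'}{i}{a}$, $\dbmij{m'}{b}{j}$, $\dbmij{m'}{i}{\barb}$, $\dbmij{m'}{\bara}{j}$ and expands each term of the $\min$ for $\dbmij{m''}{i}{j}$ as a $\min$ over terms built from $\dbm{m}$. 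Your approach is considerably shorter: once Theorem~\ref{thm:incrclosureclosed} supplies closure of $\dbm{m'}$, the single observation $\dbmij{m'}{a}{b}\leq d$ (and its $\bar{\,}$-companion) lets every non-trivial term be telescoped by the triangle inequality for $\dbm{m'}$. What the paper's approach buys is independence from Theorem~\ref{thm:incrclosureclosed}, so that idempotence could in principle be proved before or alongside correctness; what yours buys is brevity and conceptual clarity.

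One small repair is needed. Your appeal to Proposition~\ref{lemma:coherence} to obtain $\dbmij{m'}{\barb}{\bara}\leq d$ presupposes that $\dbm{m}$ is coherent, but the hypotheses of Proposition~\ref{lemma-idempotence} assume only that $\dbm{m}$ is closed. The fix is immediate and parallel to your treatment of $\dbmij{m'}{a}{b}$: the third term in the $\min$-expression for $\dbmij{m'}{\barb}{\bara}$ is $\dbmij{m}{\barb}{\barb}+d+\dbmij{m}{\bara}{\bara}=d$, so $\dbmij{m'}{\barb}{\bara}\leq d$ follows directly without coherence.
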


%\subsubsection{Monotonicity}\label{sec:monotonicity}
%
%Another key property of our incremental closure is monotonicity under the pointwise ordering, that is,
%$\dbm{m} \leq \dbm{m}'$ iff $\dbmij{m}{i}{j} \leq \dbmij{m'}{i}{j}$ for all $0 \leq i, j \leq 2n$.

%\subsubsection{Coherence}\label{subsec:coherence}
%
%The following propositional shows that $\IncrementalClosure$ preserves coherence. The force of this rest is
%that it is not necessary to enforce coherence as a post-processing step, or even on-the-fly as
%incremental closure is applied.

%%% Local Variables:
%%% mode: latex
%%% TeX-master: "main"
%%% End:

% !TEX root = main.tex

\section{Incremental Strong Closure}
\label{sec:strongclosuremain}

We now turn our attention from recovering closure to recovering strong
closure, which generates a canonical representation for any
(non-empty) octagon.

\subsection{Classical Strong Closure}

The classical strong closure by \mine\ repeatedly invokes $\Strengthen$ within
the main Floyd-Warshall loop, but it was later shown by Bagnara et al.
\cite{bagnara_weakly-relational_2009} that this was equivalent to
applying  $\Strengthen$ just once after the main loop. The following theorem
\cite[Theorem 3]{bagnara_weakly-relational_2009} justifies this tactic, though the
proofs we present have been revisited and streamlined:

\begin{theorem}
\label{thm:strongclosurestrengthen}
Suppose $\dbm{m}$ is a closed, coherent DBM and
$\dbm{m'} = \Strengthen(\dbm{m})$.
Then $\dbm{m'}$ is a strongly closed DBM.
\end{theorem}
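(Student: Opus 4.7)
The plan is to verify the two clauses of Definition~\ref{def:strongclosure} separately, namely that (i) $\dbm{m'}$ is closed and (ii) $\dbm{m'}$ satisfies the strong-closure inequality $\dbmij{m'}{i}{j} \leq \dbmij{m'}{i}{\bari}/2 + \dbmij{m'}{\barj}{j}/2$. The strong-closure inequality will turn out to be essentially a restatement of the definition of $\Strengthen$, provided we first observe that $\Strengthen$ acts as the identity on ``diagonal-like'' entries $\dbmij{m}{i}{\bari}$. Indeed, since $\bar{\bari} = i$, the second argument of the $\min$ in $\Strengthen$ collapses to $(\dbmij{m}{i}{\bari} + \dbmij{m}{i}{\bari})/2 = \dbmij{m}{i}{\bari}$, so $\dbmij{m'}{i}{\bari} = \dbmij{m}{i}{\bari}$ and similarly $\dbmij{m'}{\barj}{j} = \dbmij{m}{\barj}{j}$. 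Clause (ii) then follows immediately from the definition of $\Strengthen$.

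To show closure of $\dbm{m'}$, the diagonal condition $\dbmij{m'}{i}{i} = 0$ follows because $\dbmij{m}{i}{i} = 0$ (closure of $\dbm{m}$) and $(\dbmij{m}{i}{\bari} + \dbmij{m}{\bari}{i})/2 \geq \dbmij{m}{i}{i}/2 = 0$, again by closure of $\dbm{m}$ routed through $\bari$. The interesting step is the triangle inequality $\dbmij{m'}{i}{j} \leq \dbmij{m'}{i}{k} + \dbmij{m'}{k}{j}$. Since each entry of $\dbm{m'}$ is a binary $\min$, the right-hand side is a $\min$ over four sums, and it suffices to bound $\dbmij{m'}{i}{j}$ by each of the four. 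I will split into cases accordingly:

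\textbf{Case (a):} both $\dbmij{m'}{i}{k}$ and $\dbmij{m'}{k}{j}$ use the unstrengthened value. Then the required inequality reduces to $\dbmij{m'}{i}{j} \leq \dbmij{m}{i}{k} + \dbmij{m}{k}{j}$, which follows directly from $\dbmij{m'}{i}{j} \leq \dbmij{m}{i}{j}$ and closure of $\dbm{m}$.

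\textbf{Case (d):} both sides are strengthened. Using the strengthened upper bound on $\dbmij{m'}{i}{j}$, the required inequality simplifies to $0 \leq (\dbmij{m}{\bark}{k} + \dbmij{m}{k}{\bark})/2$, which holds by closure of $\dbm{m}$ through index $\bark$.

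\textbf{Cases (b) and (c):} exactly one of $\dbmij{m'}{i}{k}, \dbmij{m'}{k}{j}$ is strengthened. These are symmetric; in case (b) the strengthened bound on $\dbmij{m'}{i}{j}$ reduces the obligation to $\dbmij{m}{\barj}{j} \leq \dbmij{m}{\bark}{k} + 2\dbmij{m}{k}{j}$. This is where I expect to spend the most effort: it will require invoking closure twice (once from $\barj$ to $j$ via $k$, and once from $\barj$ to $k$ via $\bark$) and then rewriting $\dbmij{m}{\barj}{\bark}$ as $\dbmij{m}{k}{j}$ using coherence. Once these three cases are discharged, the four-way $\min$ lower-bounds $\dbmij{m'}{i}{j}$ and triangle closure is established; combined with the diagonal condition and clause (ii), this yields that $\dbm{m'}$ is strongly closed.

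The main obstacle is really just the bookkeeping in case (b)/(c): the interplay of closure and coherence on indices with bars is easy to get wrong, and it is essential that the invocation of coherence is available (which is why the hypothesis ``$\dbm{m}$ is coherent'' appears in the statement). All other parts are either immediate from the structure of $\Strengthen$ or follow from a single application of closure.
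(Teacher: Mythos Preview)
Your proposal is correct and follows essentially the same route as the paper's own proof: the same preliminary observation that $\Strengthen$ fixes the entries $\dbmij{m}{i}{\bari}$, the same four-way case split on which of $\dbmij{m'}{i}{k},\dbmij{m'}{k}{j}$ is strengthened, and the same use of closure-plus-coherence in the mixed case to reduce $\dbmij{m}{\bark}{k}+2\dbmij{m}{k}{j}\geq\dbmij{m}{\barj}{j}$. The only cosmetic difference is the order in which you apply closure and coherence in case~(b) (you route through $\dbmij{m}{\barj}{\bark}=\dbmij{m}{k}{j}$, the paper routes through $\dbmij{m}{\bark}{j}=\dbmij{m}{\barj}{k}$), but the argument is the same.
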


\subsection{Properties of Strong Closure}
We establish a number of properties about \Strengthen\/
which will be useful when we prove in-place versions of our
incremental strong (and tight) closure algorithms. 

%\subsubsection{Idempotence}

\begin{proposition}\label{lemma-stengthen-idempotent}
Suppose $\dbm{m}$ be a DBM and 
$\dbm{m'} = \Strengthen(\dbm{m})$.
Then $\dbm{m'} = \Strengthen(\dbm{m'})$.
\end{proposition}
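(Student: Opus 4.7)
The plan is to establish idempotence of $\Strengthen$ by exploiting the fact that the operation leaves the ``unary-like'' entries $\dbmij{m}{i}{\bari}$ untouched, after which a single direct inequality finishes the argument.

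First I would examine how $\Strengthen$ acts on entries of the form $\dbmij{m}{i}{\bari}$. Instantiating the update rule with $j = \bari$ and using $\barbari = i$ gives
\[
\dbmij{m'}{i}{\bari} = \min\bigl(\dbmij{m}{i}{\bari},\; (\dbmij{m}{i}{\bari} + \dbmij{m}{\barbari}{\bari})/2\bigr) = \min\bigl(\dbmij{m}{i}{\bari},\; \dbmij{m}{i}{\bari}\bigr) = \dbmij{m}{i}{\bari}.
\]
Thus every unary entry is a fixed point of $\Strengthen$: for all $i$, $\dbmij{m'}{i}{\bari} = \dbmij{m}{i}{\bari}$.

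Next I would consider an arbitrary pair $(i,j)$ and show that $\Strengthen$ applied to $\dbm{m'}$ does not tighten the entry any further. By the previous step, the right-hand side of the strengthening rule applied to $\dbm{m'}$ coincides with the one for $\dbm{m}$:
\[
(\dbmij{m'}{i}{\bari} + \dbmij{m'}{\barj}{j})/2 \;=\; (\dbmij{m}{i}{\bari} + \dbmij{m}{\barj}{j})/2.
\]
But by the very definition of $\dbm{m'}$, we already have $\dbmij{m'}{i}{j} \leq (\dbmij{m}{i}{\bari} + \dbmij{m}{\barj}{j})/2$. Chaining these two facts yields $\dbmij{m'}{i}{j} \leq (\dbmij{m'}{i}{\bari} + \dbmij{m'}{\barj}{j})/2$, so the $\min$ taken in the second application of $\Strengthen$ selects $\dbmij{m'}{i}{j}$ itself, proving $\Strengthen(\dbm{m'}) = \dbm{m'}$.

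There is no real obstacle here; the only subtlety worth stating carefully is the observation that $\Strengthen$ fixes the unary entries, since this is what lets the second-pass bound collapse onto the first-pass bound. The argument uses nothing about $\dbm{m}$ beyond the algorithmic definition of $\Strengthen$, so in particular no closure or coherence hypothesis is needed.
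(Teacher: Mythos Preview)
Your proof is correct and follows essentially the same approach as the paper: first observe that $\Strengthen$ fixes the unary entries $\dbmij{m}{i}{\bari}$, then use this to show that a second application cannot tighten any entry further. The paper unfolds the nested $\min$ explicitly whereas you phrase the last step as an inequality, but this is purely cosmetic.
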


%\subsubsection{Monotonicity}

\begin{proposition}\label{prop-str-monotonicity}
Suppose $\dbm{m}^1 \leq \dbm{m}^2$ (pointwise).
Then
$\Strengthen(\dbm{m}^1) \leq \Strengthen(\dbm{m}^2)$.
\end{proposition}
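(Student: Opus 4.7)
The proof is a direct unfolding of the definition of $\Strengthen$ from Figure~\ref{fig:standardclosure}, and the core observation is that the update rule is constructed entirely from operations that are monotone in the DBM entries (namely $\min$, $+$, and division by the positive constant $2$).

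The plan is to fix an arbitrary pair of indices $i,j \in \{0,\ldots,2n-1\}$ and show pointwise that $\Strengthen(\dbm{m}^1)_{i,j} \leq \Strengthen(\dbm{m}^2)_{i,j}$. By definition,
\[
\Strengthen(\dbm{m}^k)_{i,j} \;=\; \min\!\left(\dbmij{m^k}{i}{j},\; \tfrac{1}{2}\bigl(\dbmij{m^k}{i}{\bari} + \dbmij{m^k}{\barj}{j}\bigr)\right)
\]
for $k \in \{1,2\}$. By the hypothesis $\dbm{m}^1 \leq \dbm{m}^2$, each of the three relevant entries satisfies $\dbmij{m^1}{i}{j} \leq \dbmij{m^2}{i}{j}$, $\dbmij{m^1}{i}{\bari} \leq \dbmij{m^2}{i}{\bari}$, and $\dbmij{m^1}{\barj}{j} \leq \dbmij{m^2}{\barj}{j}$.

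From here I would chain the elementary monotonicity facts: addition of reals (extended with $\infty$) is monotone in each argument, so $\dbmij{m^1}{i}{\bari} + \dbmij{m^1}{\barj}{j} \leq \dbmij{m^2}{i}{\bari} + \dbmij{m^2}{\barj}{j}$; dividing both sides by $2$ preserves the inequality; finally, $\min$ is monotone in both arguments, so comparing the two $\min$ expressions above yields $\Strengthen(\dbm{m}^1)_{i,j} \leq \Strengthen(\dbm{m}^2)_{i,j}$. Since $i$ and $j$ were arbitrary, $\Strengthen(\dbm{m}^1) \leq \Strengthen(\dbm{m}^2)$ pointwise, as required.

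There is no real obstacle here: the proposition is essentially a restatement of the fact that $\Strengthen$ applies, to each entry independently, a function which is a composition of monotone operations. The only minor subtlety worth stating explicitly is that one must handle the $\infty$ entries in the conventional way (so that $c + \infty = \infty$ and $\infty/2 = \infty$), under which the monotonicity of $+$ and $/2$ remains valid.
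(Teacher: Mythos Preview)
Your proof is correct and follows essentially the same approach as the paper's: unfold the definition of $\Strengthen$ at an arbitrary entry $(i,j)$ and use the monotonicity of $\min$, $+$, and division by $2$ on the pointwise hypothesis. The paper's proof is simply a terser one-line version of exactly this argument.
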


%\subsubsection{Reductiveness}

\begin{proposition}
  \label{prop:strongreductive}
  Suppose $\dbm{m}$ is a DBM and
  $\dbm{m'} = \Strengthen(\dbm{m})$. Then
  $\dbm{m'} \leq \dbm{m}$.
\end{proposition}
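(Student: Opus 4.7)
The plan is to read off the result directly from the definition of \Strengthen\/ given in Figure~\ref{fig:standardclosure}. Recall that each entry of the output is computed as
\[
\dbmij{m'}{i}{j} \;=\; \min\bigl(\dbmij{m}{i}{j},\; (\dbmij{m}{i}{\bari} + \dbmij{m}{\barj}{j})/2\bigr).
\]
Since the binary $\min$ of two values is no greater than either of its arguments, in particular $\dbmij{m'}{i}{j} \leq \dbmij{m}{i}{j}$ for every pair of indices $i, j \in \{0, \ldots, 2n-1\}$. This is precisely the pointwise inequality $\dbm{m'} \leq \dbm{m}$.

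There is no real obstacle here; unlike the incremental algorithms, \Strengthen\/ makes only a single sweep and at each cell writes the $\min$ of the current cell with an auxiliary quantity computed from other cells of the \emph{input} DBM $\dbm{m}$ (not from values already updated in $\dbm{m'}$). Hence each cell can only decrease, and the argument does not need any case analysis on $\bari$ versus $i$, nor on whether $(\dbmij{m}{i}{\bari} + \dbmij{m}{\barj}{j})/2$ actually achieves the minimum. If the algorithm were instead presented as an in-place update, one would need a short induction on the order of cell updates to ensure that the $\dbmij{m}{i}{\bari}$ and $\dbmij{m}{\barj}{j}$ used on the right-hand side had themselves only decreased; but since Figure~\ref{fig:standardclosure} writes the result into a fresh matrix $\dbm{m'}$, even that mild subtlety is avoided here and the proof reduces to the one-line observation above.
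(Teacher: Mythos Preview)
Your proof is correct and matches the paper's own proof exactly: both simply observe that $\dbmij{m'}{i}{j} = \min(\dbmij{m}{i}{j}, (\dbmij{m}{i}{\bari} + \dbmij{m}{\barj}{j})/2) \leq \dbmij{m}{i}{j}$. The additional discussion about in-place updates is accurate but unnecessary here, since the paper likewise treats this as a one-line observation.
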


%\subsubsection{Coherence}

\begin{proposition}
  \label{prop:strongcoherence}
  Suppose $\dbm{m}$ is a closed, coherent DBM. Then $\dbm{m'} = \Strengthen(\dbm{m})$ is a coherent DBM. 
\end{proposition}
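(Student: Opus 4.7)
The plan is to verify coherence pointwise by directly unfolding the algorithm $\Strengthen$ and using two elementary facts: (i) the bar map is an involution, so $\bar{\bar\imath} = i$ and $\bar{\bar\jmath} = j$, and (ii) the hypothesis that $\dbm{m}$ is coherent already supplies $\dbmij{m}{i}{j} = \dbmij{m}{\barj}{\bari}$ for all $i,j$.

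First I would fix arbitrary indices $i,j$ and unfold
\[
\dbmij{m'}{i}{j} \;=\; \min\bigl(\dbmij{m}{i}{j},\ (\dbmij{m}{i}{\bari} + \dbmij{m}{\barj}{j})/2\bigr).
\]
Next I would apply the same definition at the mirrored pair of indices, substituting $\barj$ for $i$ and $\bari$ for $j$, which yields
\[
\dbmij{m'}{\barj}{\bari} \;=\; \min\bigl(\dbmij{m}{\barj}{\bari},\ (\dbmij{m}{\barj}{j} + \dbmij{m}{i}{\bari})/2\bigr),
\]
where the two occurrences of the involution $\bar{\bar\imath}=i$ and $\bar{\bar\jmath}=j$ have already been collapsed inside the arithmetic. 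At this point the first arguments of the two mins coincide by coherence of $\dbm{m}$, and the second arguments coincide by commutativity of addition, so the two minima are equal and coherence of $\dbm{m'}$ follows.

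The main obstacle is nothing deeper than careful bookkeeping of the bar indices; in particular, one must remember to collapse $\bar{\bar\imath}$ to $i$ when computing the inner expression for $\dbmij{m'}{\barj}{\bari}$. It is worth noting that the closure hypothesis is not actually required for this proposition; coherence of $\dbm{m}$ by itself is enough. Presumably the closure assumption is retained only because this proposition is invoked downstream in a context where $\dbm{m}$ is already known to be closed.
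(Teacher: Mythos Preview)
Your proof is correct and is essentially identical to the paper's own argument: unfold $\Strengthen$ at both $(i,j)$ and $(\barj,\bari)$, collapse $\bar{\bar\imath}=i$, $\bar{\bar\jmath}=j$, then match the first arguments via coherence of $\dbm{m}$ and the second via commutativity of addition. Your remark that closure is unused is also accurate; the paper's proof makes no appeal to it either.
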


\subsection{Incremental Strong Closure}
\label{sec:incrstrongclosure}

\begin{figure}[t]
\begin{tabular}{p{12cm}}
      \begin{algorithmic}[1]
        \Function{\IncrementalStrongClosure}{$\dbm{m}, x'_a - x'_b \leq d }$
        \For{$i \in \{ 0, \ldots, 2n-1 \}$}
       \State $\dbmij{m'}{i}{\bari} \gets \min\left(
         \begin{array}{l}
           \dbmij{m}{i}{\bari}, \\
           \dbmij{m}{i}{a} + d + \dbmij{m}{b}{\bari}, \\
           \dbmij{m}{i}{\bar{b}} + d + \dbmij{m}{\bar{a}}{\bari},\\
           \dbmij{m}{i}{\bar{b}} + d + \dbmij{m}{\bar{a}}{a} + d + \dbmij{m}{b}{\bari}, \\
           \dbmij{m}{i}{a} + d + \dbmij{m}{b}{\bar{b}} + d + \dbmij{m}{\bar{a}}{\bari} \\
         \end{array}
       \right )$
       \EndFor

        \For{$i \in \{ 0, \ldots, 2n-1 \}$} 
        \For{$j \in \{ 0, \ldots, 2n-1 \}$}
        \sIf{$j \neq \bari$}{
        \State $\dbmij{m'}{i}{j} \gets \min\left (
          \begin{array}{l}
            \dbmij{m}{i}{j}, \\
            \dbmij{m}{i}{a}+ d + \dbmij{m}{b}{j},\\
            \dbmij{m}{i}{\bar{b}} + d + \dbmij{m}{\bar{a}}{j},\\
            \dbmij{m}{i}{\bar{b}} + d + \dbmij{m}{\bar{a}}{a} + d + \dbmij{m}{b}{j}, \\
            \dbmij{m}{i}{a} + d + \dbmij{m}{b}{\bar{b}} + d + \dbmij{m}{\bar{a}}{j}, \\
            (\dbmij{m'}{i}{\bari} + \dbmij{m'}{\barj}{j})/2
          \end{array}
        \right )$
      }
        \EndFor
        \If{$\dbmij{m'}{i}{i} < 0$}
        \State \Return{$false$}
	\EndIf
        \EndFor
        \State \Return{$\dbm{m}'$}
        \EndFunction
      \end{algorithmic}
\end{tabular}
  \caption{Incremental Strong Closure}
  \label{fig:newincrementalstrongclosurewithoutqueue}
\end{figure}

Theorem~\ref{thm:strongclosurestrengthen} states that a strongly
closed DBM can be obtained by calculating closure and then
strengthening. This is realised by calling $\IncrementalClosure$, from
Figure~\ref{fig:newincrementalclosurewithoutqueue}, followed by a
call to $\Strengthen$. Although this is conventional wisdom, it
incurs two passes over the DBM: one by $\IncrementalClosure$ and the
other by $\Strengthen$.
The two passes can be unified by observing that strengthening
$\dbm{m}'$ critically depends on the entries $\dbmij{m}{i}{\bari}'$
where $i \in \{ 0, \ldots, 2n - 1 \}$. Furthermore, these entries,
henceforth called key entries, are themselves not changed by
strengthening because:
\[
\min(\dbmij{m}{i}{\bari}', (\dbmij{m}{i}{\bari}' + \dbmij{m}{\bar{\bari}}{\bari}')/2)
=
\min(\dbmij{m}{i}{\bari}', (\dbmij{m}{i}{\bari}' + \dbmij{m}{i}{\bari}')/2)
=
\dbmij{m}{i}{\bari}'
\]
This suggests precomputing the key entries up front and then using
them in the main loop of $\IncrementalClosure$ to strengthen
on-the-fly. This insight leads to the algorithm listed in
Figure~\ref{fig:newincrementalstrongclosurewithoutqueue}. Line~3
generates the key entries which are closed by construction and
unchanged by strengthening. Once the key entries are computed, the
algorithm iterates over the rest of the DBM, closing
and simultaneously strengthening each entry $\dbmij{m}{i}{j}$  at line~8. 

The total number of binary $\min$ operations required for
\IncrementalStrongClosure\/ is $8n + 10n(2n - 1) = 20n^2 - 2n$, which
improves on following \IncrementalClosure\/ by \Strengthen\/, which
requires $16n^2 + 4n^2 = 20n^2$. Furthermore, since $\dbm{m}$ is
coherent
$\dbmij{m}{i}{a}+ d + \dbmij{m}{b}{\bari} = \dbmij{m}{\bar{a}}{\bari}
+ d + \dbmij{m}{i}{\bar{b}} = \dbmij{m}{i}{\bar{b}} + d +
\dbmij{m}{\bar{a}}{\bari}$ so that the quintic $\min$ on line~4
becomes quartic, reducing the $\min$ count for $\IncrementalClosure$
to $20n^2 - 4n$. Furthermore, the entry $\dbmij{m}{i}{\bari}$ can be
cached in a linear array $\dbm{a}_i$ of dimension $2n$ and the
expression $(\dbmij{m'}{i}{\bari} + \dbmij{m'}{\barj}{j})/2$ in line~8
can be replaced with $(\dbm{a}_{i} + \dbm{a}_{\barj})/2$, thereby
avoiding two lookups in a two-dimensional matrix. We omit the
algorithm using array caching for space reasons as this is a simple
change to Figure~\ref{fig:newincrementalstrongclosurewithoutqueue}.

The following theorem
justifies the correctness of the new incremental strong closure
algorithm:
\begin{theorem}[\rm Correctness of \IncrementalStrongClosure]
\label{thm:incrstrongclosure}
Suppose $\dbm{m}$ is a DBM, \linebreak
$\dbm{m'} = \IncrementalStrongClosure(\dbm{m}, o)$,
$\dbm{m^{\dagger}} = \IncrementalClosure(\dbm{m}, o)$,
$\dbm{m^{*}} = \Strengthen(\dbm{m^{\dagger}})$ and \linebreak
\mbox{$o = (x'_a - x'_b \leq d)$}. Then
$\dbm{m'} = \dbm{m^{*}}$.
\end{theorem}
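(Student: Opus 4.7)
The plan is to prove $\dbm{m'} = \dbm{m^{*}}$ by comparing the two matrices entry-by-entry, splitting the analysis into \emph{key entries} of the form $\dbmij{m'}{i}{\bari}$ (handled by the first loop of $\IncrementalStrongClosure$) and the remaining entries (handled by the second loop, under its guard $j \neq \bari$). Throughout, I would assume the consistent branch; the inconsistent branch reduces to a short case analysis on the diagonal check.

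First I would dispatch the key entries. The quintic $\min$ on line~3 is syntactically the very update that $\IncrementalClosure$ applies to position $(i, \bari)$, hence $\dbmij{m'}{i}{\bari} = \dbmij{m^{\dagger}}{i}{\bari}$. By Theorem~\ref{thm:incrclosureclosed} $\dbm{m^{\dagger}}$ is closed, so $\dbmij{m^{\dagger}}{i}{i} = 0$; combined with the involution $\bar{\bari} = i$, strengthening $(i, \bari)$ computes $\min(\dbmij{m^{\dagger}}{i}{\bari}, (\dbmij{m^{\dagger}}{i}{\bari} + \dbmij{m^{\dagger}}{\bar{\bari}}{\bari})/2) = \dbmij{m^{\dagger}}{i}{\bari}$, exactly the fixed-point identity the preamble to the algorithm exploits. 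Therefore $\dbmij{m^{*}}{i}{\bari} = \dbmij{m^{\dagger}}{i}{\bari} = \dbmij{m'}{i}{\bari}$.

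Next I would handle the case $j \neq \bari$. The first five arguments of the $\min$ on line~8 read only from the original $\dbm{m}$ and coincide with the update $\IncrementalClosure$ performs for entry $(i,j)$; their minimum is $\dbmij{m^{\dagger}}{i}{j}$. The sixth argument $(\dbmij{m'}{i}{\bari} + \dbmij{m'}{\barj}{j})/2$ references two \emph{key} entries, using $\bar{\barj} = j$ to identify the second as one; both were computed by the first loop and, by the previous paragraph, equal $\dbmij{m^{\dagger}}{i}{\bari}$ and $\dbmij{m^{\dagger}}{\barj}{j}$ respectively. Hence line~8 evaluates to $\min(\dbmij{m^{\dagger}}{i}{j}, (\dbmij{m^{\dagger}}{i}{\bari} + \dbmij{m^{\dagger}}{\barj}{j})/2)$, which is precisely $\dbmij{m^{*}}{i}{j}$ by the definition of $\Strengthen$ applied to $\dbm{m^{\dagger}}$.

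The delicate part is the bookkeeping around which DBM is being read at each step. The correctness of the single-pass algorithm hinges on two invariants: (i) the second loop never overwrites a key entry, which is enforced by the $j \neq \bari$ guard; and (ii) the strengthening term on line~8 genuinely references post-closure values, which follows from (i) together with $\bar{\barj} = j$ ensuring $\dbmij{m'}{\barj}{j}$ is itself a key entry. The residual obstacle is the failure path: if $\dbm{m^{\dagger}}$ is inconsistent, one must show that $\dbm{m^{*}}$ is also inconsistent and that $\IncrementalStrongClosure$ detects this via $\dbmij{m'}{i}{i} < 0$. This follows because $\Strengthen$ is reductive (Proposition~\ref{prop:strongreductive}), so any negative diagonal entry in $\dbm{m^{\dagger}}$ persists into $\dbm{m^{*}}$, and the entry-by-entry agreement established above transfers this negative value to $\dbm{m'}$ before the diagonal check is reached.
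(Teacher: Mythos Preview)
Your argument is essentially the paper's own proof: case split on whether $j=\bari$, observe that line~3 computes $\dbmij{m^{\dagger}}{i}{\bari}$ and that strengthening fixes key entries, then observe that line~8 computes $\min(\dbmij{m^{\dagger}}{i}{j},(\dbmij{m^{\dagger}}{i}{\bari}+\dbmij{m^{\dagger}}{\barj}{j})/2)=\dbmij{m^{*}}{i}{j}$ because the sixth argument reads the already-computed key entries. That is exactly what the paper does.

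One small slip: you invoke Theorem~\ref{thm:incrclosureclosed} to conclude $\dbm{m^{\dagger}}$ is closed and hence $\dbmij{m^{\dagger}}{i}{i}=0$, but the present theorem assumes only that $\dbm{m}$ is a DBM, not that it is closed, so that theorem's hypothesis is not available. Fortunately you never use the fact: the key-entry identity $\min(\dbmij{m^{\dagger}}{i}{\bari},(\dbmij{m^{\dagger}}{i}{\bari}+\dbmij{m^{\dagger}}{\bar{\bari}}{\bari})/2)=\dbmij{m^{\dagger}}{i}{\bari}$ follows purely from $\bar{\bari}=i$, exactly as in the paper. Your discussion of the failure path is also surplus to requirements here, since the theorem as stated is a bare equality of matrices and the paper's proof establishes it entrywise without any consistency case split.
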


Code is duplicated in \IncrementalStrongClosure\/ in
the assignments of $\dbmij{m'}{i}{\bari}$ and $\dbmij{m'}{i}{j}$ on lines~3 and 8 respectively.  
Fig~\ref{fig:newincrementalstrongclosurewithstrengthreduction} shows how this
can be factored out in that line~3 
of \IncrementalStrongClosureReduce\/ need only consider updates
stemming from $\dbmij{m}{i}{a} + d + \dbmij{m}{b}{\bari}$.  Moreover, the guard
on line~7 of Fig~\ref{fig:newincrementalstrongclosurewithoutqueue} is eliminated but moving
the remainder of the $\dbmij{m'}{i}{\bari}$ calculation
into the main loop.  This increases the $\min$ count by $2n$ but reduces code size.
This can potentially be a good exchange because $\min$ is itself essentially a check (though it
can be implemented as straight-line code for machine integers \cite{warren02hackers}), and eliminating
the guard from the main loop avoids $4n^2$ checks, giving a saving overall.  
However, putting asymptotic arguments aside, whether
\IncrementalStrongClosureReduce\/  outperforms \IncrementalStrongClosure\/ depends
on the relative cost of the integer comparison on line~7 of 
Fig~\ref{fig:newincrementalstrongclosurewithoutqueue}  to
the comparison implicit in line~3 of Fig~\ref{fig:newincrementalstrongclosurewithstrengthreduction}, which is 
performed in the underlying number system. The following result justifies this form of code motion:

\begin{theorem}[\rm Correctness of \IncrementalStrongClosureReduce]
  \label{lemma:incrstrongclosurereduce}
  Suppose $\dbm{m}$ is a strongly closed, coherent DBM and let
  $\dbm{m^{*}} = \IncrementalStrongClosure(\dbm{m},o)$ where \linebreak
  $o = (x'_a - x'_b \leq d)$ and
\[
    \dbmij{m''}{i}{j} = \min\left (
          \begin{array}{l}
            \dbmij{m}{i}{j}, \\
            \dbmij{m}{i}{a}+ d + \dbmij{m}{b}{j},\\
            \dbmij{m}{i}{\bar{b}} + d + \dbmij{m}{\bar{a}}{j},\\
            \dbmij{m}{i}{\bar{b}} + d + \dbmij{m}{\bar{a}}{a} + d + \dbmij{m}{b}{j}, \\
            \dbmij{m}{i}{a} + d + \dbmij{m}{b}{\bar{b}} + d + \dbmij{m}{\bar{a}}{j}, \\
            (\dbmij{m}{i}{a} + d + \dbmij{m}{b}{\bari} + \dbmij{m}{\barj}{j})/2, \\
            (\dbmij{m}{i}{\bari} + \dbmij{m}{\barj}{a} + d + \dbmij{m}{b}{j})/2
          \end{array} \right )
  \]
Then either $\dbm{m^{*}} = \dbm{m''}$ or $\dbm{m^{*}}$ is not consistent and $\dbm{m''}$ is not inconsistent.
\end{theorem}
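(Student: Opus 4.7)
The plan is to compare the formula for $\dbmij{m''}{i}{j}$ given in the theorem directly against $\dbm{m^{*}}$, whose definition unfolds (see Figure~\ref{fig:newincrementalstrongclosurewithoutqueue}) to $\dbmij{m^{*}}{i}{j} = \min(\textrm{five closure terms},\; (\dbmij{m^{\dagger}}{i}{\bari} + \dbmij{m^{\dagger}}{\barj}{j})/2)$ for $j \neq \bari$, where the key entries $\dbmij{m^{\dagger}}{i}{\bari}$ are those precomputed by the first loop of \IncrementalStrongClosure\/. By coherence of $\dbm{m}$ (Proposition~\ref{prop:strongcoherence}) that first loop collapses from five options to four; I label these options \mbox{$A,B,C,D$} for $\dbmij{m^{\dagger}}{i}{\bari}$ and \mbox{$A',B',C',D'$} for $\dbmij{m^{\dagger}}{\barj}{j}$. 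The strengthening contribution to $\dbmij{m^{*}}{i}{j}$ therefore distributes as the minimum of sixteen combinations $(X+Y')/2$.

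The easy direction $\dbm{m^{*}} \leq \dbm{m''}$ pointwise follows immediately: both formulas share the same five closure terms, and the two strengthening combinations appearing in the statement of the theorem, namely $(A+B')/2$ and $(B+A')/2$, lie among the sixteen combinations contributing to $\dbm{m^{*}}$.

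For the reverse direction $\dbm{m^{*}} \geq \dbm{m''}$, I would enumerate the fourteen remaining combinations and show each is bounded below by a term already present in $\dbmij{m''}{i}{j}$. Grouping them: $(A+A')/2 \geq \dbmij{m}{i}{j}$ by the strong closure of $\dbm{m}$; the four mixed cases $(A+C')/2$, $(A+D')/2$, $(C+A')/2$, $(D+A')/2$ each reduce, after a single application of a strong-closure inequality such as $\dbmij{m}{i}{a} \leq (\dbmij{m}{i}{\bari} + \dbmij{m}{\bar{a}}{a})/2$, to one of the four closure terms containing $d$; the five combinations $(B+B')/2$, $(B+C')/2$, $(C+B')/2$, $(B+D')/2$, $(D+B')/2$ rewrite, using coherence, as the arithmetic mean of two closure terms of $\dbm{m''}$ and hence dominate their minimum; and $(C+C')/2$ and $(D+D')/2$ simplify exactly to the long closure terms~3 and~4 after coherence.

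The main obstacle is the residual pair $(C+D')/2$ and $(D+C')/2$. Applying the strong-closure inequality $(\dbmij{m}{\bar{a}}{a} + \dbmij{m}{b}{\bar{b}})/2 \geq \dbmij{m}{b}{a}$ to $\dbm{m}$ reduces these two combinations to expressions containing the quantity $\dbmij{m}{b}{a} + d$; bounding them by closure term~1 or~2 of $\dbm{m''}$ requires $\dbmij{m}{b}{a} + d \geq 0$, which is precisely the condition that $\IncrementalClosure(\dbm{m},o)$ is consistent by Lemma~\ref{lemma:nonzero}. Since strengthening only lowers entries, consistency of $\dbm{m^{*}}$ implies the premise of Lemma~\ref{lemma:nonzero}, so $(C+D')/2$ and $(D+C')/2$ both exceed $\dbmij{m''}{i}{j}$ in the consistent case. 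This closes the main case and yields $\dbm{m^{*}} = \dbm{m''}$ whenever $\dbm{m^{*}}$ is consistent; the remaining possibility, in which $\dbm{m^{*}}$ fails the consistency check while the cheaper $\dbm{m''}$ does not, falls into the second disjunct of the theorem, the pointwise inequality $\dbm{m^{*}} \leq \dbm{m''}$ established at the start leaving no further obligation.
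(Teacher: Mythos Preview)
Your approach is essentially the paper's own: both expand the strengthening term of $\dbm{m^{*}}$ into sixteen combinations (four options each for $\dbmij{m^{\dagger}}{i}{\bari}$ and $\dbmij{m^{\dagger}}{\barj}{j}$, after coherence collapses five to four), keep the two combinations $(A+B')/2$ and $(B+A')/2$ as the new terms of $\dbm{m''}$, and eliminate the other fourteen using exactly the tools you list --- strong closure of $\dbm{m}$, coherence-based regrouping into arithmetic means of closure terms, and the inequality $\dbmij{m}{b}{a}+d\geq 0$ for the pair $(C+D')/2$, $(D+C')/2$.

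The one place where your write-up is looser than the paper's is the inconsistent branch. The paper splits directly on whether $\dbmij{m}{b}{a}+d\geq 0$: if so the sixteen-case argument yields $\dbm{m^{*}}=\dbm{m''}$ outright (consistent or not); if not, Corollary~\ref{cor:nonzero1} makes $\IncrementalClosure(\dbm{m},o)$ inconsistent, and since both $\dbm{m^{*}}$ and $\dbm{m''}$ lie pointwise below it, both are inconsistent. Your split on consistency of $\dbm{m^{*}}$ instead leaves the case ``$\dbm{m^{*}}$ inconsistent and $\dbm{m''}$ inconsistent'' unaddressed (note the theorem's ``not inconsistent'' is a typo for ``not consistent'', as the paper's proof makes clear), and the bound $\dbm{m^{*}}\leq\dbm{m''}$ does not by itself force $\dbm{m''}$ to be inconsistent. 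The fix is exactly the paper's: split on $\dbmij{m}{b}{a}+d$ rather than on consistency of $\dbm{m^{*}}$.
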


\noindent The force of the above result is that $\dbmij{m'}{i}{j}$ is only
affected by a change to $\dbmij{m'}{i}{\bari}$ via
$\dbmij{m}{i}{a} + d + \dbmij{m}{b}{\bari}$ or a change to $\dbmij{m'}{\barj}{j}$
via $\dbmij{m}{\barj}{a} + d + \dbmij{m}{b}{j}$. Thus the initial
loop on line 3, need only check whether
$\dbmij{m}{i}{\bari}$ is shortened by
$\dbmij{m}{i}{a} + d + \dbmij{m}{b}{\bari}$ in order to correctly
update an arbitrary entry $\dbmij{m}{i}{j}$ in the loop on line 8.  Note
that $\dbm{m}$ is not just required to be closed, but also strongly closed and coherent.

%Moreover the result additionally requires $\dbmij{m}{a}{b} + d \geq 0$ (or equivalently $\dbmij{m}{\barb}{\bara} + d \geq 0$
%by coherence) which can be checked in constant time.
%Recall that if $\dbmij{m}{a}{b} + d < 0$
%then $\IncrementalClosure(\dbm{m},o)$ is not consistent hence $\dbm{m'} $ is not consistent.
%Finally, note that for uniformity of presentation the
%check $\dbmij{m}{a}{b} + d \geq 0$ is omitted from Fig~\ref{fig:newincrementalstrongclosurewithstrengthreduction}.

\begin{figure}[t]
\begin{tabular}{p{12cm}}
      \begin{algorithmic}[1]
        \Function{\IncrementalStrongClosureReduce}{$\dbm{m}, x'_a - x'_b \leq d }$
        \For{$i \in \{ 0, \ldots, 2n-1 \}$}
       \State $\dbmij{m'}{i}{\bari} \gets \min\left(
         \begin{array}{l}
           \dbmij{m}{i}{\bari}, \\
           \dbmij{m}{i}{a} + d + \dbmij{m}{b}{\bari} \\
         \end{array}
       \right )$
       \EndFor

        \For{$i \in \{ 0, \ldots, 2n-1 \}$} 
        \For{$j \in \{ 0, \ldots, 2n-1 \}$}

        \State $\dbmij{m'}{i}{j} \gets \min\left (
          \begin{array}{l}
            \dbmij{m}{i}{j}, \\
            \dbmij{m}{i}{a}+ d + \dbmij{m}{b}{j},\\
            \dbmij{m}{i}{\bar{b}} + d + \dbmij{m}{\bar{a}}{j},\\
            \dbmij{m}{i}{\bar{b}} + d + \dbmij{m}{\bar{a}}{a} + d + \dbmij{m}{b}{j}, \\
            \dbmij{m}{i}{a} + d + \dbmij{m}{b}{\bar{b}} + d + \dbmij{m}{\bar{a}}{j}, \\
            (\dbmij{m'}{i}{\bari} + \dbmij{m'}{\barj}{j})/2
          \end{array}
        \right )$

        \EndFor
        \If{$\dbmij{m'}{i}{i} < 0$}
        \State \Return{$false$}
	\EndIf
        \EndFor
        \State \Return{$\dbm{m}'$}
        \EndFunction
      \end{algorithmic}
\end{tabular}
  \caption{Incremental Strong Closure with code motion}
  \label{fig:newincrementalstrongclosurewithstrengthreduction}
\end{figure}

%%% Local Variables:
%%% mode: latex
%%% TeX-master: "main"
%%% End:

%\input{queueingstrong}
% !TEX root = main.tex

\section{Incremental Tight Closure}
\label{sec:integerclosure}

The strong closure algorithms previously presented have to be modified
to support integer octagonal constraints. If $x_i$ is integral then
$x_i \leq c$ can be tightened to $x_i \leq \lfloor c \rfloor$. Since
$x_i \leq c$ is represented as the difference
$x'_{2i} - x'_{2i + 1} \leq 2c$, tightening is achieved by sharpening
the difference to $x'_{2i} - x'_{2i + 1} \leq 2\lfloor c/2 \rfloor$,
so that the constant $2\lfloor c/2 \rfloor$ is even. This is achieved
by applying $\textsc{Tighten}(\dbm{m})$, the code for which is given
in Figure~\ref{fig:tightclosure}. As suggested by
Figure~\ref{fig:closurealgsoverview}, closure does not need to be
reapplied after tightening to check for consistency; it is sufficient
to check that \mbox{$\dbmij{m}{i}{\bari} + \dbmij{m}{\bari}{i} < 0$}
\cite{bagnara_weakly-relational_2009}, which is the role of
$\CheckIntegerConsistent(\dbm{m})$. One subtlety that is worthy of
note is that after running $\textsc{tighten}(\dbm{m})$ on a closed DBM
$\dbm{m}$, the resulting DBM will not necessarily be closed but will
instead satisfy a weaker property, namely weak closure. Strong closure
can be recovered from weak closure, however, by strengthening
\cite{bagnara_weakly-relational_2009}. However, we do not use this
approach in the sequel: instead we use tightening and strengthening
together to avoid having to work with weakly closed DBMs. First we
prove that tightening followed by strengthening will return a closed
DBM when the resulting system is satisfiable:

\begin{lemma}
\label{lemma:closed}
Suppose $\dbm{m}$ is a closed, coherent integer DBM. Let $\dbm{m'}$ be defined as follows:
\[
  \dbmij{m'}{i}{j} = \min(\dbmij{m}{i}{j}, \floorfrac{\dbmij{m}{i}{\bari}}{2} + \floorfrac{\dbmij{m}{\barj}{j}}{2})
\]
Then $\dbm{m'}$ is either closed or it is not consistent. 
\end{lemma}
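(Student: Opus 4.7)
The plan is to assume that $\dbm{m'}$ is consistent and show that $\dbm{m'}$ is closed; if $\dbm{m'}$ is not consistent there is nothing to prove. Since $\dbm{m}$ is closed, $\dbmij{m}{i}{i}=0$, so the defining min gives $\dbmij{m'}{i}{i}\leq 0$, and consistency forces $\dbmij{m'}{i}{i}=0$, discharging the diagonal condition of closure. It then remains to verify the triangle inequality $\dbmij{m'}{i}{j} \leq \dbmij{m'}{i}{k} + \dbmij{m'}{k}{j}$ for arbitrary $i,j,k$.

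I would proceed by case analysis on which branch of the min attains each of $\dbmij{m'}{i}{k}$ and $\dbmij{m'}{k}{j}$, giving four cases. When both entries come from $\dbm{m}$ directly, the inequality is immediate from closure of $\dbm{m}$ together with $\dbmij{m'}{i}{j}\leq \dbmij{m}{i}{j}$. The two mixed cases and the pure-tightening case are the interesting ones, and all of them turn on the integer identity $\floor{(2a+b)/2} = a + \floor{b/2}$, which is where integrality of $\dbm{m}$ is essential.

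For a mixed case, say $\dbmij{m'}{i}{k} = \dbmij{m}{i}{k}$ and $\dbmij{m'}{k}{j} = \floor{\dbmij{m}{k}{\bark}/2} + \floor{\dbmij{m}{\barj}{j}/2}$, the key step is to telescope the unary bound $\dbmij{m}{i}{\bari}$ via two applications of closure through $k$ and then $\bark$, using coherence ($\dbmij{m}{\bark}{\bari}=\dbmij{m}{i}{k}$) to obtain $\dbmij{m}{i}{\bari} \leq 2\dbmij{m}{i}{k} + \dbmij{m}{k}{\bark}$. The integer identity then yields $\floor{\dbmij{m}{i}{\bari}/2} \leq \dbmij{m}{i}{k} + \floor{\dbmij{m}{k}{\bark}/2}$, which combines with $\dbmij{m'}{i}{j} \leq \floor{\dbmij{m}{i}{\bari}/2} + \floor{\dbmij{m}{\barj}{j}/2}$ to deliver the required inequality. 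The symmetric mixed case proceeds analogously, telescoping $\dbmij{m}{\barj}{j}$ through $\bark$ and $k$.

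When both $\dbmij{m'}{i}{k}$ and $\dbmij{m'}{k}{j}$ arise from tightening, the inequality reduces to $\floor{\dbmij{m}{\bark}{k}/2} + \floor{\dbmij{m}{k}{\bark}/2} \geq 0$, which is precisely the consistency of $\dbm{m'}$ read off the diagonal entry at $k$; this gives a pleasant coupling between the diagonal argument and the main case split. The principal obstacle I expect is the careful accounting for the floor operator, because naive inequalities like $\floor{a/2}+\floor{b/2} \leq \floor{(a+b)/2}$ run in the wrong direction; the proof sidesteps this by first deriving the bound on the \emph{unfloored} unary quantity using closure and coherence, and only then taking floors, where integrality of the DBM entries makes the step tight.
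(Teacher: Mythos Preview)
Your proposal is correct and follows the same global architecture as the paper's proof: assume consistency, dispatch the diagonal, then split into four cases according to which branch of the $\min$ realises $\dbmij{m'}{i}{k}$ and $\dbmij{m'}{k}{j}$. Cases 1 and 4 are handled identically (case 4 in the paper also reduces to $\lfloor\dbmij{m}{\bark}{k}/2\rfloor + \lfloor\dbmij{m}{k}{\bark}/2\rfloor \geq 0$ via consistency of $\dbm{m'}$ at $\bark$).

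Your treatment of the mixed cases, however, is genuinely cleaner than the paper's. The paper keeps the floored term $\lfloor\dbmij{m}{\bark}{k}/2\rfloor$ and combines it with $\dbmij{m}{k}{j}$, which forces a sub-split on the parity of $\dbmij{m}{\bark}{k}$; the odd sub-case then needs a further two-way split on whether $\dbmij{m}{\bark}{k} + 2\dbmij{m}{k}{j} = \dbmij{m}{\barj}{j}$ or is strictly larger. You instead first derive the unfloored bound $\dbmij{m}{i}{\bari} \leq 2\dbmij{m}{i}{k} + \dbmij{m}{k}{\bark}$ via closure and coherence, and only then apply the floor, using $\lfloor(2a+b)/2\rfloor = a + \lfloor b/2\rfloor$ for integer $a$. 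This collapses the paper's three sub-cases into a single line and makes the role of the integrality hypothesis more transparent. The paper's route has the minor advantage of tracking exactly where the half-integer loss is absorbed, but your argument is shorter and arguably more suitable for mechanisation, which was one of the paper's stated motivations.
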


\begin{proof} Suppose $\dbm{m'}$ is consistent.
Because $\dbm{m}$ is closed $\dbmij{m'}{i}{i} \leq \dbmij{m}{i}{i} = 0$
and since $\dbm{m'}$ is consistent $0 \leq \dbmij{m'}{i}{i}$
hence $\dbmij{m'}{i}{i} = 0$.
Now to show $\dbmij{m'}{i}{k} + \dbmij{m'}{k}{j} \geq \dbmij{m'}{i}{j}$. 
   \begin{enumerate}

     \item Suppose $\dbmij{m'}{i}{k} = \dbmij{m}{i}{k}$ and $\dbmij{m'}{k}{j} = \dbmij{m}{k}{j}$.
 Because $\dbm{m}$ is closed:
\begin{align*}
         \dbmij{m'}{i}{k} + \dbmij{m'}{k}{j} = \dbmij{m}{i}{k} + \dbmij{m}{k}{j} \geq \dbmij{m}{i}{j} \geq \dbmij{m'}{i}{j} 
\end{align*}

\item Suppose $\dbmij{m'}{i}{k} \neq \dbmij{m}{i}{k}$ and $\dbmij{m'}{k}{j} = \dbmij{m}{k}{j}$. 
\begin{enumerate}

       \item Suppose $\dbmij{m}{\bark}{k}$ is even. Because $\dbm{m}$ is closed and coherent:
         \begin{align*}
\dbmij{m'}{i}{k} + \dbmij{m'}{k}{j}  & = \floorfrac{\dbmij{m}{i}{\bari}}{2}  +  \floorfrac{\dbmij{m}{\bark}{k}}{2} + \dbmij{m}{k}{j} 
	= \floorfrac{\dbmij{m}{i}{\bari}}{2} + \frac{\dbmij{m}{\bark}{k} + 2\dbmij{m}{k}{j}}{2} \\
           & \geq \floorfrac{\dbmij{m}{i}{\bari}}{2} + \frac{\dbmij{m}{\bark}{j} + \dbmij{m}{k}{j}}{2} 
           = \floorfrac{\dbmij{m}{i}{\bari}}{2} + \frac{\dbmij{m}{\barj}{k} + \dbmij{m}{k}{j}}{2} \\
           & \geq  \floorfrac{\dbmij{m}{i}{\bari}}{2} +  \frac{\dbmij{m}{j}{\barj}}{2} \geq  \floorfrac{\dbmij{m}{i}{\bari}}{2}  +  \floorfrac{\dbmij{m}{j}{\barj}}{2} \geq \dbmij{m'}{i}{j} 
         \end{align*}   
        
       \item Suppose $\dbmij{m}{\bark}{k}$ is odd. Then
           \begin{align*}
\dbmij{m'}{i}{k} + \dbmij{m'}{k}{j}  & = \floorfrac{\dbmij{m}{i}{\bari}}{2}  +  \floorfrac{\dbmij{m}{\bark}{k}}{2} + \dbmij{m}{k}{j} = \floorfrac{\dbmij{m}{i}{\bari}}{2} + \frac{(\dbmij{m}{\bark}{k} - 1) + 2\dbmij{m}{k}{j}}{2} 
         \end{align*}        
Because $\dbm{m}$ is closed and coherent:
         \begin{align*}
         \frac{(\dbmij{m}{\bark}{k} - 1) + 2\dbmij{m}{k}{j}}{2} 
           & \geq \frac{\dbmij{m}{\bark}{j} + \dbmij{m}{k}{j} - 1}{2} 
           = \frac{\dbmij{m}{\barj}{k} + \dbmij{m}{k}{j} - 1}{2} \geq  \frac{\dbmij{m}{\barj}{j} - 1}{2}  
         \end{align*}
\begin{enumerate}

\item
Suppose $\dbmij{m}{\bark}{k} + 2\dbmij{m}{k}{j} =  \dbmij{m}{\barj}{j}$.  
Since $\dbmij{m}{\bark}{k}$ is odd $\dbmij{m}{\barj}{j}$ is odd thus
\[
\frac{\dbmij{m}{\barj}{j} - 1}{2}  = \floorfrac{\dbmij{m}{\barj}{j}}{2} 
\text{ and }
\dbmij{m'}{i}{k} + \dbmij{m'}{k}{j}  \geq \floorfrac{\dbmij{m}{i}{\bari}}{2} + \floorfrac{\dbmij{m}{\barj}{j}}{2} 
\geq \dbmij{m'}{i}{j}
\]

\item
Suppose $\dbmij{m}{\bark}{k} + 2\dbmij{m}{k}{j} > \dbmij{m}{\barj}{j}$.  
Thus 
$(\dbmij{m}{\bark}{k} - 1) + 2\dbmij{m}{k}{j} \geq \dbmij{m}{\barj}{j}$ 
\[
\dbmij{m'}{i}{k} + \dbmij{m'}{k}{j}  \geq \floorfrac{\dbmij{m}{i}{\bari}}{2} + \frac{\dbmij{m}{\barj}{j}}{2}  \geq \floorfrac{\dbmij{m}{i}{\bari}}{2} + \floorfrac{\dbmij{m}{\barj}{j}}{2} 
\geq \dbmij{m'}{i}{j}
\]

\end{enumerate}        

       \end{enumerate}

     \item Suppose $\dbmij{m'}{i}{k} = \dbmij{m}{i}{k}$ and $\dbmij{m'}{k}{j} \neq \dbmij{m}{k}{j}$.
       Symmetric to the previous case.
       
     \item Suppose $\dbmij{m'}{i}{k} \neq \dbmij{m}{i}{k}$ and $\dbmij{m'}{k}{j} \neq \dbmij{m}{k}{j}$. 
Then
   \begin{align*}
     \dbmij{m'}{i}{k} + \dbmij{m'}{k}{j} & = \floorfrac{\dbmij{m}{i}{\bari}}{2} +  \floorfrac{\dbmij{m}{\bark}{k}}{2} + \floorfrac{\dbmij{m}{k}{\bark}}{2} +  \floorfrac{\dbmij{m}{\barj}{j}}{2}    
   \end{align*}

Since $\dbm{m}$ is closed and $\dbm{m'}$ is consistent:
   \[
0 \leq \dbmij{m'}{\bark}{\bark} = \min(\dbmij{m}{\bark}{\bark}, \floorfrac{\dbmij{m}{\bark}{k}}{2} + \floorfrac{\dbmij{m}{k}{\bark}}{2}) = \min(0, \floorfrac{\dbmij{m}{\bark}{k}}{2} + \floorfrac{\dbmij{m}{k}{\bark}}{2})
   \]
Therefore
\[
     \floorfrac{\dbmij{m}{\bark}{k}}{2} + \floorfrac{\dbmij{m}{k}{\bark}}{2} \geq 0
\text{ and }
\dbmij{m'}{i}{k} + \dbmij{m'}{k}{j} \geq \floorfrac{\dbmij{m}{i}{\bari}}{2} + \floorfrac{\dbmij{m}{\barj}{j}}{2} \geq \dbmij{m'}{i}{j}
\]
\qed
\end{enumerate}
\end{proof}

\noindent It should be noted that the above proof by-passes the notion of weak closure which was
previously thought to be necessary
\cite[pages 28--31]{bagnara_weakly-relational_2009} greatly simplifying the proofs.
Using the proof that tighten and strengthening gives a closed DBM, it can now be shown
that the resulting DBM is also tightly closed:

\begin{theorem}(\cite[Theorem 4]{bagnara_weakly-relational_2009})
\label{lemma:tightlyclosed}
Suppose $\dbm{m}$ is a closed, coherent integer DBM. Let $\dbm{m'}$ be
defined as follows:
\[
  \dbmij{m'}{i}{j} = \min(\dbmij{m}{i}{j}, \floorfrac{\dbmij{m}{i}{\bari}}{2} + \floorfrac{\dbmij{m}{\barj}{j}}{2})
\]
Then $\dbm{m'}$ is either tightly closed or it is not consistent.
\end{theorem}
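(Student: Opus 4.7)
The plan is to deduce tight closure from the three definitional requirements: closure of $\dbm{m'}$, the strengthening inequality $\dbmij{m'}{i}{j} \leq \dbmij{m'}{i}{\bari}/2 + \dbmij{m'}{\barj}{j}/2$, and evenness of every diagonal-adjacent entry $\dbmij{m'}{i}{\bari}$. Closure is essentially free of further work: assuming $\dbm{m'}$ is consistent (otherwise the conclusion is immediate), Lemma~\ref{lemma:closed} immediately yields closure of $\dbm{m'}$. So I would structure the proof as a short case split on consistency followed by verification of the two remaining tight-closure conditions by direct calculation from the definition.

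For the evenness requirement I would specialise the defining formula at $j = \bari$. Since $\bar{\bari} = i$, one obtains
\[
\dbmij{m'}{i}{\bari} \;=\; \min\!\bigl(\dbmij{m}{i}{\bari},\; \floorfrac{\dbmij{m}{i}{\bari}}{2} + \floorfrac{\dbmij{m}{i}{\bari}}{2}\bigr) \;=\; \min\!\bigl(\dbmij{m}{i}{\bari},\; 2\floorfrac{\dbmij{m}{i}{\bari}}{2}\bigr).
\]
Since $2\floor{c/2} \leq c$ for every integer $c$, the minimum is attained by the second argument, so $\dbmij{m'}{i}{\bari} = 2\floorfrac{\dbmij{m}{i}{\bari}}{2}$, which is manifestly even. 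This yields tightness of the key entries.

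For the strengthening requirement I would exploit exactly the identity just obtained. Applying it at index $i$ and again at index $\barj$ (using $\bar{\barj}=j$) gives
\[
\dbmij{m'}{i}{\bari}/2 \;=\; \floorfrac{\dbmij{m}{i}{\bari}}{2}, \qquad \dbmij{m'}{\barj}{j}/2 \;=\; \floorfrac{\dbmij{m}{\barj}{j}}{2}.
\]
Summing these two halves reproduces exactly the right-hand argument of the $\min$ that defines $\dbmij{m'}{i}{j}$, so
\[
\dbmij{m'}{i}{j} \;\leq\; \floorfrac{\dbmij{m}{i}{\bari}}{2} + \floorfrac{\dbmij{m}{\barj}{j}}{2} \;=\; \dbmij{m'}{i}{\bari}/2 + \dbmij{m'}{\barj}{j}/2,
\]
which is precisely the strengthening inequality needed for strong closure. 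Together with closure from Lemma~\ref{lemma:closed} and the evenness established above, this certifies that $\dbm{m'}$ is tightly closed whenever it is consistent.

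The main obstacle, such as it is, has already been absorbed into Lemma~\ref{lemma:closed}; the residual work here is purely algebraic bookkeeping with the involution $\overline{\cdot}$ and the identity $2\floor{c/2} \leq c$. The only subtle point worth double-checking is the specialisation at $j = \bari$, where one must recognise that the apparently distinct floors $\floor{\dbmij{m}{i}{\bari}/2}$ and $\floor{\dbmij{m}{\bar{\bari}}{\bari}/2}$ coincide because $\bar{\bari}=i$; after that observation, both the evenness and the strengthening inequalities drop out directly from the defining $\min$.
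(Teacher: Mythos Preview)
Your proof is correct and follows the same overall decomposition as the paper: invoke Lemma~\ref{lemma:closed} for closure, then verify the strengthening inequality and evenness of $\dbmij{m'}{i}{\bari}$ separately. The one notable difference is the order and economy of the last two steps. The paper establishes strengthening first, via a four-way case analysis on which argument of the $\min$ realises $\dbmij{m'}{i}{\bari}$ and $\dbmij{m'}{\barj}{j}$, and only afterwards computes that $\dbmij{m'}{i}{\bari} = 2\floorfrac{\dbmij{m}{i}{\bari}}{2}$ by a parity split. You reverse the order: by first establishing the exact identity $\dbmij{m'}{i}{\bari} = 2\floorfrac{\dbmij{m}{i}{\bari}}{2}$ (which makes the paper's four cases collapse, since the second argument of the $\min$ always wins), the strengthening inequality then falls out in one line without any case analysis. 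Your route is slightly more economical; the paper's redundancy is harmless but avoidable.
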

 
 Notice that the proof of tight closure does not use the concept
 of weak closure as advocated in \cite{bagnara_weakly-relational_2009}. 
The above proof goes
 directly from a closed DBM to a tightly closed DBM relying only on simple algebra;
 it is not based on showing
 that tightening gives a weakly closed (intermediate) DBM which can be subsequently strengthen
 to give a tightly closed DBM (see
 Figure~\ref{fig:closurealgsoverview}).
\begin{figure}[t] 
    \centering
    \begin{multicols}{2}
      \begin{algorithmic}[1]
        \Function{\Tighten}{$\dbm{m}$}
        \For{$i \in \{ 0, \ldots, 2n-1 \}$} \State
        $\dbmij{m}{i}{\bari} \gets 2\lfloor \dbmij{m}{i}{\bari} / 2 \rfloor$
        \EndFor
        \EndFunction
      \end{algorithmic}
      \qquad
      \begin{algorithmic}[1]
    \Function{\TightClosure}{$\dbm{m}$}
      \State $\textsc{ShortestPathClosure(\dbm{m})}$
      \If{$\CheckConsistent(\dbm{m})$}
      \State $\dbm{m} \gets \textsc{Tighten(\dbm{m})}$
      \If{$\CheckIntegerConsistent(\dbm{m})$}
        \State $\textrm{\textbf{return} \Strengthen(\dbm{m})}$
      \Else
        \State $\textrm{\textbf{return} false}$
      \EndIf
      \Else
        \State $\textrm{\textbf{return} false}$
      \EndIf
    \EndFunction
  \end{algorithmic}
      \columnbreak
    \begin{algorithmic}[1]
      \Function{\CheckIntegerConsistent}{$\dbm{m}$}
      \For{$i \in \{ 0, \ldots, 2n-1 \}$}
      \If{$\dbmij{m}{i}{\bari} + \dbmij{m}{\bari}{i} < 0$} \State $\textrm{\textbf{return} false}$
      \EndIf
      \EndFor
      \State $\textrm{\textbf{return} true}$
      \EndFunction
    \end{algorithmic} 
  \end{multicols}
    \caption{Tight Closure}
  \label{fig:tightclosure}
\end{figure}
% \marginpar{Added structured programming to \TightClosure: please check} CHECKED

\begin{figure}[t]
      \begin{algorithmic}[1]
        \Function{\IncrementalIntegerClosure}{$\dbm{m},\;x'_a - x'_b \leq d }$
        \For{$i \in \{ 0, \ldots, 2n-1 \}$}
       \State $\dbmij{m'}{i}{\bari} \gets 2 \left \lfloor \min\left(
         \begin{array}{l}
           \dbmij{m}{i}{\bari}, \\
           \dbmij{m}{i}{a} + d + \dbmij{m}{b}{\bari}, \\
           \dbmij{m}{i}{\bar{b}} + d + \dbmij{m}{\bar{a}}{\bari},\\
           \dbmij{m}{i}{\bar{b}} + d + \dbmij{m}{\bar{a}}{a} + d + \dbmij{m}{b}{\bari}, \\
           \dbmij{m}{i}{a} + d + \dbmij{m}{b}{\bar{b}} + d + \dbmij{m}{\bar{a}}{\bari} \\
         \end{array}
       \right ) / 2 \right \rfloor$
       \EndFor
      \If{$\CheckIntegerConsistent(\dbm{m'})$}
       \For{$i \in \{ 0, \ldots, 2n-1 \}$} 
        \For{$j \in \{ 0, \ldots, 2n-1 \}$}
        \sIf{$j \neq \bari$}{
        \State $\dbmij{m'}{i}{j} \gets \min\left (
          \begin{array}{l}
            \dbmij{m}{i}{j}, \\
            \dbmij{m}{i}{a}+ d + \dbmij{m}{b}{j},\\
            \dbmij{m}{i}{\bar{b}} + d + \dbmij{m}{\bar{a}}{j},\\
            \dbmij{m}{i}{\bar{b}} + d + \dbmij{m}{\bar{a}}{a} + d + \dbmij{m}{b}{j}, \\
            \dbmij{m}{i}{a} + d + \dbmij{m}{b}{\bar{b}} + d + \dbmij{m}{\bar{a}}{j}, \\
            (\dbmij{m'}{i}{\bari} + \dbmij{m'}{\barj}{j})/2
          \end{array}
        \right )$
      }
        \EndFor

      \If{$\dbmij{m'}{i}{i} < 0$}
        \State \Return{$false$}
      \EndIf
              \EndFor
      \Else
        \State \Return{$false$}
      \EndIf
        \State \Return{$\dbm{m}'$}
        \EndFunction
      \end{algorithmic}
  \caption{Incremental Tight Closure}
  \label{fig:newincrementalintegerclosure}
\end{figure}

Tight closure requires the key entries, and only these, to
be tightened. This suggests tightening the key entries on-the-fly
immediately after they have been computed by closure. This leads to
the algorithm given in Figure~\ref{fig:newincrementalintegerclosure}
which coincides with $\IncrementalStrongClosure(\dbm{m})$ except in
one crucial detail: line~4 tightens the key entries as they are
computed. Moreover the key entries are strengthened, with the other
entries of the DBM, in the main loop in tandem with the closure
calculation, thereby ensuring strong closure. Thus tightening can be
accommodated, almost effortlessly, within incremental strong closure.

\begin{theorem}[\rm Correctness of \IncrementalIntegerClosure]
  Suppose $\dbm{m}$ is an integer DBM and
  $\dbm{m'} = \IncrementalIntegerClosure(\dbm{m},o)$ where
  $o = x'_a - x'_b \leq d$. Let
  $\dbm{m^{\dagger}} = \IncrementalClosure(\dbm{m}, o)$,
  $\dbm{m^{\ddag}} = \Tighten(\dbm{m^{\dagger}})$ and
  $\dbm{m^*} = \Strengthen(\dbm{m^{\ddag}})$. Then $\dbm{m^*} = \dbm{m'}$.
\end{theorem}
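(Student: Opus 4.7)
The plan is to establish entry-by-entry equality of $\dbm{m'}$ and $\dbm{m^{*}}$, splitting DBM positions into \emph{key} entries of the form $(i,\bari)$ and \emph{non-key} entries (all others), then showing that the two computations agree at each position. Two structural observations make this transparent, and both are already exploited in the treatment of \IncrementalStrongClosure{} in Theorem~\ref{thm:incrstrongclosure}: first, $\Strengthen$ leaves every key entry unchanged, because $\bar{\bari}=i$ forces
\[
  \min\bigl(\dbmij{m}{i}{\bari},\,(\dbmij{m}{i}{\bari} + \dbmij{m}{\bar{\bari}}{\bari})/2\bigr)
    = \dbmij{m}{i}{\bari};
\]
second, $\Tighten$ modifies only key entries by definition. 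Together these imply that once the key entries of $\dbm{m^{\ddag}}$ are determined, the subsequent $\Strengthen$ step affects only non-key positions, and does so by taking a $\min$ with $(\dbmij{m^{\ddag}}{i}{\bari} + \dbmij{m^{\ddag}}{\barj}{j})/2$.

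For key entries, the composition produces $\dbmij{m^{\dagger}}{i}{\bari}$ equal to the 5-way $\min$ that is the body of $\IncrementalClosure$; $\Tighten$ replaces this by $2\lfloor\dbmij{m^{\dagger}}{i}{\bari}/2\rfloor$; and $\Strengthen$ leaves it alone. This is exactly the value assigned at line~4 of $\IncrementalIntegerClosure$, so $\dbmij{m^{*}}{i}{\bari} = \dbmij{m'}{i}{\bari}$.

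For non-key entries, the composition first computes $\dbmij{m^{\dagger}}{i}{j}$ as the 5-way incremental-closure $\min$, $\Tighten$ preserves this value, and $\Strengthen$ then takes the $\min$ with $(\dbmij{m^{\ddag}}{i}{\bari} + \dbmij{m^{\ddag}}{\barj}{j})/2$, whose key-entry arguments coincide with the values already established for $\dbm{m'}$. This is precisely the assignment at line~13 of $\IncrementalIntegerClosure$, which reads the tightened key entries $\dbmij{m'}{i}{\bari}$ and $\dbmij{m'}{\barj}{j}$ produced during its first pass. Hence $\dbmij{m^{*}}{i}{j} = \dbmij{m'}{i}{j}$ at non-key positions as well.

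The remaining work is to reconcile the consistency-failure paths: the two short-circuit returns of $\IncrementalIntegerClosure$ (the $\CheckIntegerConsistent$ call at line~6 and the diagonal test at line~14) must trigger exactly when the composition yields an inconsistent DBM. This follows from Theorem~\ref{thm:incrclosureclosed} combined with the fact that $\Tighten$ and $\Strengthen$ are reductive (Proposition~\ref{prop:strongreductive}), so any negative cycle forced by the closure step survives both subsequent steps, and conversely no new negative entry can appear that the interleaved tests in $\IncrementalIntegerClosure$ would miss. I expect this alignment of when inconsistency is detected (interleaved in $\IncrementalIntegerClosure$, pipelined in the composition) to be the main delicate point; the entry-wise identity itself is essentially a rearrangement of definitions, given the two preservation properties above.
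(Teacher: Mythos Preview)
Your proposal is correct and follows essentially the same approach as the paper: a case split on key entries $(i,\bari)$ versus non-key entries, then unfolding the definitions of $\IncrementalClosure$, $\Tighten$, and $\Strengthen$ at each position and matching them against the corresponding lines of $\IncrementalIntegerClosure$. The paper's proof is in fact terser than yours—it simply writes out the chain of equalities for each case without the surrounding commentary—and it does not address the consistency-failure paths at all, treating the theorem purely as an entry-wise equality of DBMs; your final paragraph on aligning the short-circuit returns is therefore extra scaffolding beyond what the paper supplies.
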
\label{theorem:correct-inc-tight-closure}

\subsection{Properties of Tight Closure}
\label{sec:propintclosure}

We prove a number of properties about \Tighten\/ which will be
useful when we justify the in-place versions of our incremental tight closure algorithm.

%\subsubsection{Idempotence}

\begin{proposition}
  \label{prop:tighten-idempotence}
  Suppose $\dbm{m}$ is a DBM and $\dbm{m'} = \Tighten(\dbm{m})$.
  Then $\dbm{m'} = \Tighten(\dbm{m'})$.
\end{proposition}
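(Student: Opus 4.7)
The plan is to exploit the fact that \Tighten\/ only writes to the key entries $\dbmij{m}{i}{\bari}$ for $i \in \{0, \ldots, 2n-1\}$, leaving every other position of the DBM untouched. Let $\dbm{m''} = \Tighten(\dbm{m'})$. For any pair $(i,j)$ with $j \neq \bari$ we immediately have $\dbmij{m''}{i}{j} = \dbmij{m'}{i}{j} = \dbmij{m}{i}{j}$, so it suffices to focus on the key entries.

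For the key entries I would use the simple arithmetic identity $\lfloor k \rfloor = k$ whenever $k \in \mathbb{Z}$. After a single application of \Tighten\/ we have $\dbmij{m'}{i}{\bari} = 2\lfloor \dbmij{m}{i}{\bari}/2 \rfloor$, which is an even integer regardless of whether the original $\dbmij{m}{i}{\bari}$ was rational or integral, because the floor function always returns an integer. Hence $\dbmij{m'}{i}{\bari}/2 = \lfloor \dbmij{m}{i}{\bari}/2\rfloor \in \mathbb{Z}$, and a second application yields
\[
\dbmij{m''}{i}{\bari} \;=\; 2\left\lfloor \dbmij{m'}{i}{\bari}/2 \right\rfloor \;=\; 2\left\lfloor \left\lfloor \dbmij{m}{i}{\bari}/2 \right\rfloor \right\rfloor \;=\; 2\left\lfloor \dbmij{m}{i}{\bari}/2 \right\rfloor \;=\; \dbmij{m'}{i}{\bari}.
\]
Combining the two cases gives $\dbm{m''} = \dbm{m'}$ pointwise, as required.

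There is essentially no obstacle here: the result follows purely from the observation that \Tighten\/ quantises each key entry onto the sublattice $2\mathbb{Z}$, which is fixed by a second pass of the same quantisation. The only minor point worth noting is that $\bari \neq i$ always (since $\bari = i\,\textbf{xor}\,1$), so there is no ambiguity about which positions are overwritten, and the argument does not require any hypothesis of consistency, closure, or coherence on $\dbm{m}$.
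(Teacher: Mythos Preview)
Your proof is correct and follows essentially the same approach as the paper: split into the cases $j \neq \bari$ and $j = \bari$, and for the key entries unfold the definition to obtain $2\lfloor 2\lfloor \dbmij{m}{i}{\bari}/2\rfloor / 2\rfloor = 2\lfloor \dbmij{m}{i}{\bari}/2\rfloor$ via the identity $\lfloor k\rfloor = k$ for integers. Your additional commentary on why no closure or coherence hypotheses are needed is accurate but not present in the paper's terser version.
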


%\subsubsection{Monotonicity}

\begin{proposition}
  \label{prop:intmonotonicity}
  Suppose $\dbm{m}^1 \leq \dbm{m}^2$ (pointwise).
  Then $\Tighten(\dbm{m}^1) \leq \Tighten(\dbm{m}^2)$.
\end{proposition}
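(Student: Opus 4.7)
The plan is to prove pointwise inequality by splitting into cases based on whether an entry is a key entry (one of the form $\dbmij{m}{i}{\bari}$, which is modified by \Tighten) or not (left untouched by \Tighten). Since \Tighten\/ only affects the diagonal-adjacent entries $\dbmij{m}{i}{\bari}$ and propagates no information between entries, the proof reduces entry-by-entry to a one-line fact about integer rounding.

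First I would fix an arbitrary pair of indices $(i,j)$ and let $\dbm{m'}^k = \Tighten(\dbm{m}^k)$ for $k \in \{1,2\}$. If $j \neq \bari$ then inspection of the code shows that \Tighten\/ leaves this entry unchanged, so $\dbmij{m'}{i}{j}^1 = \dbmij{m}{i}{j}^1 \leq \dbmij{m}{i}{j}^2 = \dbmij{m'}{i}{j}^2$ by the hypothesis $\dbm{m}^1 \leq \dbm{m}^2$.

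Second, if $j = \bari$, then $\dbmij{m'}{i}{\bari}^k = 2 \lfloor \dbmij{m}{i}{\bari}^k / 2 \rfloor$. I would then invoke monotonicity of the floor function: from $\dbmij{m}{i}{\bari}^1 \leq \dbmij{m}{i}{\bari}^2$ we get $\lfloor \dbmij{m}{i}{\bari}^1 / 2 \rfloor \leq \lfloor \dbmij{m}{i}{\bari}^2 / 2 \rfloor$, and multiplying through by $2$ gives $\dbmij{m'}{i}{\bari}^1 \leq \dbmij{m'}{i}{\bari}^2$, as required.

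There is really no conceptual obstacle here; the statement is a direct consequence of the fact that \Tighten\/ acts pointwise on independent entries via the monotone map $x \mapsto 2\lfloor x/2\rfloor$. The only care needed is to note that the function $2\lfloor \cdot/2 \rfloor$ is indeed monotonic on $\mathbb{Z}$ (and on $\mathbb{Q}$), which is immediate from monotonicity of division by a positive constant, of floor, and of multiplication by a positive constant. Assembling these two cases completes the proof.
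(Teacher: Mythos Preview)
Your proposal is correct and follows essentially the same approach as the paper's proof: both split into the cases $j \neq \bari$ (where \Tighten\ is the identity) and $j = \bari$ (where monotonicity of $x \mapsto 2\lfloor x/2\rfloor$ gives the result).
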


%\subsubsection{Reductiveness}

\begin{proposition}
  \label{prop:tightenreductiveness}
  Suppose $\dbm{m}$ is a DBM and
  $\dbm{m'} = \Tighten(\dbm{m})$. Then $\dbm{m'} \leq \dbm{m}$.
\end{proposition}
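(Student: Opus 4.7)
The plan is to argue directly from the definition of $\Tighten$ given in Figure~\ref{fig:tightclosure}, which mutates only the anti-diagonal entries $\dbmij{m}{i}{\bari}$ by replacing each with $2\lfloor \dbmij{m}{i}{\bari}/2 \rfloor$, leaving every other entry untouched. Hence the proof reduces to a pointwise case analysis on the indices $(k,l)$ of an arbitrary entry.

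First I would split on whether $l = \bark$. In the case $l \neq \bark$, the algorithm never writes to position $(k,l)$, so $\dbmij{m'}{k}{l} = \dbmij{m}{k}{l}$ and the inequality is trivial. In the case $l = \bark$, we have $\dbmij{m'}{k}{\bark} = 2\lfloor \dbmij{m}{k}{\bark} / 2 \rfloor$, and the claim follows from the elementary floor inequality $2\lfloor x/2 \rfloor \leq x$, which holds for every rational (and hence every integer) $x$. Combining the two cases yields $\dbmij{m'}{k}{l} \leq \dbmij{m}{k}{l}$ for all $k,l$, which is precisely the pointwise ordering $\dbm{m'} \leq \dbm{m}$.

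There is essentially no obstacle here: the only subtlety worth flagging is that, although $\Tighten$ is defined via a sequence of in-place updates, each assignment targets a distinct cell $(i,\bari)$, so the iteration order is irrelevant and the final value at each anti-diagonal cell is exactly $2\lfloor \dbmij{m}{i}{\bari}/2 \rfloor$ computed from the original entry. This justifies reading off $\dbm{m'}$ from $\dbm{m}$ entry-by-entry without any inductive argument on the loop. The proof is therefore a two-line case split, and should be presented as such.
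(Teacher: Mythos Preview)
Your proposal is correct and mirrors the paper's own proof: both perform the same two-case split on whether the entry lies on the anti-diagonal $(i,\bari)$, using $2\lfloor x/2\rfloor \leq x$ for the affected entries and equality for the rest. Your write-up is in fact cleaner than the paper's, which places the $\leq$ and $=$ somewhat awkwardly in the $j=\bari$ case.
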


%\subsubsection{Coherence}

\begin{proposition}
  \label{prop:tightencoherence}
  Let $\dbm{m}$ be a coherent DBM and
  $\dbm{m'} = \Tighten(\dbm{m})$. Then $\dbm{m'}$ is coherent.
\end{proposition}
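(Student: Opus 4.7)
The plan is to argue by a simple case analysis on whether the entry under consideration is a \emph{key entry}, i.e.\ an entry of the form $\dbmij{m}{i}{\bari}$, since inspection of \Tighten\/ in Figure~\ref{fig:tightclosure} shows that only key entries are modified (every other position is left untouched). Thus to establish coherence of $\dbm{m'}$ it suffices to show, for each pair $(i,j)$, that the updates at positions $(i,j)$ and $(\barj,\bari)$ leave both sides of the coherence equation in lockstep.

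First I would observe the following classification. If $j \neq \bari$ then $(i,j)$ is not a key entry; moreover $(\barj,\bari)$ is also not a key entry, because $\dbmij{m}{\barj}{\bari}$ being a key entry would require $\bari = \bar{\barj} = j$, contradicting $j \neq \bari$. Hence both entries are untouched by \Tighten\/, so $\dbmij{m'}{i}{j} = \dbmij{m}{i}{j}$ and $\dbmij{m'}{\barj}{\bari} = \dbmij{m}{\barj}{\bari}$, and coherence of $\dbm{m}$ immediately gives coherence of $\dbm{m'}$ at $(i,j)$.

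The remaining case is $j = \bari$. Here the required identity is $\dbmij{m'}{i}{\bari} = \dbmij{m'}{\bar{\bari}}{\bari}$, and since $\bar{\bari}=i$ this is the tautology $\dbmij{m'}{i}{\bari} = \dbmij{m'}{i}{\bari}$; coherence is degenerate at key entries, just as it is for the unary-inequality entries noted after Definition~\ref{def:coherence}. So regardless of what value \Tighten\/ writes, coherence is automatically preserved at the key positions.

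There is no real obstacle: the argument only requires noting that the bar operator is an involution and that the set of key-entry coordinates is closed under the coherence swap $(i,j)\mapsto(\barj,\bari)$. The entire proof therefore reduces to those two bookkeeping observations about indices, combined with the coherence assumption on $\dbm{m}$ for the non-key case.
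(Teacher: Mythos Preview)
Your proposal is correct and follows essentially the same case split as the paper's proof: both distinguish $j=\bari$ from $j\neq\bari$ and use that \Tighten\/ only touches key entries. The only cosmetic difference is that in the $j=\bari$ case the paper unwinds the computation $\dbmij{m'}{\barj}{\bari}=2\lfloor\dbmij{m}{\barj}{\bari}/2\rfloor=2\lfloor\dbmij{m}{i}{j}/2\rfloor=\dbmij{m'}{i}{j}$, whereas you observe more directly that $(\barj,\bari)=(i,j)$ so the coherence identity is a tautology there; your phrasing is arguably the cleaner of the two.
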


%\marginnote{What about mixed systems consisting of integer and real variables? This should be simple...?}
%%% Local Variables:
%%% mode: latex
%%% TeX-master: "main"
%%% End:

% !TEX root = main.tex

\section{In-place Update}
\label{sec:inplace}

Closure algorithms are traditionally formulated in a way that is
simple to reason about mathematically (see
\cite[Def~3.3.2]{mine-PhD04}), typically using a series of
intermediate DBMs and then present the algorithm itself using in-place
update (see \cite[Def~3.3.3]{mine-PhD04}). An operation on a DBM will conceptually calculate an output DBM from the input DBM.
Since this requires two DBMs, the input and the output, to be stored simultaneously, it is attractive to mutate the input DBM to derive the output DBM. This is called in-place update. 
The subtlety of in-place update, in the context of a DBM operation, is that one element can be calculated in terms of others, some of which may have already been updated. 
 The question of equivalence
between the mathematical formulation and the practical in-place
implementation is arguably not given the space it should. Min\'e, in
his magnus opus \cite{mine-PhD04}, merely states that equivalence can
be shown by using an argument for the Floyd-Warshall algorithm
\cite[Section 26.2]{cormen90introduction}. However that in-place
argument is itself informal. Later editions of the book do not help,
leaving the proof as an exercise for the reader. But the question of
equivalence is more subtle again for incremental closure. Correctness
is therefore argued for incremental closure in
Section~\ref{sect:in-place-closure}, incremental strong closure in
Section~\ref{sect:in-place-strong} and incremental tight closure in
Section~\ref{sect:in-place-integer}, one correctness argument
extending another.

\subsection{In-place Incremental Closure}\label{sect:in-place-closure}

Figure~\ref{fig:inplaceinsitu} gives an in-place version of
\IncrementalClosure\/ algorithm listed in
Figure~\ref{fig:newincrementalclosurewithoutqueue}. At first glance
one might expect that mutating the entries $\dbmij{m}{i}{a}$,
$\dbmij{m}{b}{\bari}$, $\dbmij{m}{i}{\barb}$,
$\dbmij{m}{\bara}{\bari}$, $\dbmij{m}{\bara}{a}$ or
$\dbmij{m}{b}{\barb}$ could potentially perturb those entries of
$\dbm{m}$ which are updated later. The following theorem asserts that
this is not so. Correctness follows from
Corollary~\ref{corollary:idempotence-facts} which is stated below:

\begin{figure}
  \centering
     \begin{algorithmic}[1]
        \Function{\IncrementalClosureInSitu}{$\dbm{m}$, $x'_{a} - x'_{b} \leq d$}
    \For{$i \in \{ 0, \ldots, 2n-1 \}$} 
    \For{$j \in \{ 0, \ldots, 2n-1 \}$}
    \State $\dbmij{m}{i}{j} \gets \min \left(
      \begin{array}{l}
        \dbmij{m}{i}{j}, \\
        \dbmij{m}{i}{a}+ d + \dbmij{m}{b}{j},\\
        \dbmij{m}{i}{\bar{b}} + d + \dbmij{m}{\bar{a}}{j},\\
        \dbmij{m}{i}{\bar{b}} + d + \dbmij{m}{\bar{a}}{a} + d + \dbmij{m}{b}{j}, \\
        \dbmij{m}{i}{a} + d + \dbmij{m}{b}{\bar{b}} + d + \dbmij{m}{\bar{a}}{j} \\
      \end{array}
    \right)
    $
    \EndFor
        \If{$\dbmij{m}{i}{i} < 0$}
        \State \Return{$false$}
	\EndIf
    \EndFor
    \State \Return{$\dbm{m}$}
    \EndFunction
  \end{algorithmic}
  \caption{In-place Incremental Closure}
  \label{fig:inplaceinsitu}
\end{figure}

\begin{corollary}
Suppose that $\dbm{m}$ is a closed DBM,
$\dbm{m}' = \IncrementalClosure(\dbm{m}, o)$, \linebreak
\mbox{$o = (x'_a - x'_b \leq d)$}
and $\dbm{m'}$ is consistent.  Then the following hold:
\begin{itemize}

\item $\dbmij{m'}{i}{j} \leq \dbmij{m'}{i}{a} + d + \dbmij{m'}{b}{j}$

\item $\dbmij{m'}{i}{j} \leq \dbmij{m'}{i}{\bar{b}} + d + \dbmij{m'}{\bar{a}}{j}$

\item $\dbmij{m'}{i}{j} \leq \dbmij{m'}{i}{\bar{b}} + d + \dbmij{m'}{\bar{a}}{a} + d + \dbmij{m'}{b}{j}$

\item $\dbmij{m'}{i}{j} \leq \dbmij{m'}{i}{a} + d + \dbmij{m'}{b}{\bar{b}} + d + \dbmij{m'}{\bar{a}}{j}$

\end{itemize}
\label{corollary:idempotence-facts}
\end{corollary}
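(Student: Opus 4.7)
My plan is to derive the four inequalities directly from the idempotence property established in Proposition~\ref{lemma-idempotence}. The intuition is that each inequality says ``the stored entry $\dbmij{m'}{i}{j}$ is already at least as tight as the alternative path going via the new constraint'', which is precisely what idempotence forces: a second application of \IncrementalClosure\ must not tighten anything further.

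Concretely, I would set $\dbm{m''} = \IncrementalClosure(\dbm{m'}, o)$. Since $\dbm{m}$ is closed and $\dbm{m'}$ is assumed consistent, Proposition~\ref{lemma-idempotence} gives $\dbm{m''} = \dbm{m'}$. Unfolding the definition of \IncrementalClosure\ from Figure~\ref{fig:newincrementalclosurewithoutqueue}, for every $i,j$,
\[
\dbmij{m'}{i}{j} \;=\; \dbmij{m''}{i}{j} \;=\; \min\!\left(
\begin{array}{l}
\dbmij{m'}{i}{j},\\
\dbmij{m'}{i}{a} + d + \dbmij{m'}{b}{j},\\
\dbmij{m'}{i}{\bar{b}} + d + \dbmij{m'}{\bar{a}}{j},\\
\dbmij{m'}{i}{\bar{b}} + d + \dbmij{m'}{\bar{a}}{a} + d + \dbmij{m'}{b}{j},\\
\dbmij{m'}{i}{a} + d + \dbmij{m'}{b}{\bar{b}} + d + \dbmij{m'}{\bar{a}}{j}
\end{array}\right).
\]
Since $\dbmij{m'}{i}{j}$ equals the minimum of itself and the four other expressions, $\dbmij{m'}{i}{j}$ must be less than or equal to each of those four expressions individually, which gives exactly the four inequalities in the corollary.

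The only subtlety to verify is that the hypotheses of Proposition~\ref{lemma-idempotence} are indeed met. We are given that $\dbm{m}$ is closed and that $\dbm{m'}$ is consistent, so idempotence applies verbatim and yields $\dbm{m''} = \dbm{m'}$. I do not anticipate a main obstacle here: the corollary is essentially a repackaging of idempotence, made to expose the specific inequalities needed to justify in-place update in \IncrementalClosureInSitu, where entries such as $\dbmij{m}{i}{a}$ or $\dbmij{m}{\bar{a}}{\bari}$ may be overwritten before they are read by later iterations. The content of the corollary is that once the fixed point $\dbm{m'}$ has been reached, reading the updated (post-fixed-point) values on the right-hand side cannot produce a value smaller than $\dbmij{m'}{i}{j}$, so the in-place computation still yields the correct minimum.
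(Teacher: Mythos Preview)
Your proof is correct and follows essentially the same route as the paper: invoke Proposition~\ref{lemma-idempotence} to obtain $\dbm{m'} = \IncrementalClosure(\dbm{m'}, o)$, then read off the four inequalities. The paper's proof is terser (it cites Theorem~\ref{thm:incrclosureclosed} rather than unfolding the $\min$ directly), but your explicit unfolding of the five-way minimum is arguably clearer and is exactly the intended mechanism.
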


\noindent The following theorem asserts that in-place update does not compromise correctness.  It is telling that the correctness
argument does not refer to the
entries $\dbmij{m}{i}{a}$, 
$\dbmij{m}{b}{\bari}$,
$\dbmij{m}{i}{\barb}$, 
$\dbmij{m}{\bara}{\bari}$,
$\dbmij{m}{\bara}{a}$
or
$\dbmij{m}{b}{\barb}$
at all.    This is because the corollary on which the theorem is founded
follows from the high-level property of idempotence. Notice too that the theorem
is parameterised by the traversal order over $\dbm{m}$ and therefore is independent of it. 

\begin{theorem}[\rm Correctness of \IncrementalClosureInSitu]
\label{lemma:insitu}
Suppose $\rho : \{ 0, \ldots, 2n - 1 \}^2 \to \{ 0, \ldots, 4n^2 - 1 \}$ is a bijective map,
$\dbm{m}$ is a closed DBM,
$\dbm{m'} = \IncrementalClosure(\dbm{m}, o)$, 
$o = (x'_a - x'_b \leq d)$, 
$\dbm{m}^{0} = \dbm{m}$ and
\[
\dbmij{m^{k+1}}{i}{j} =
\left\{
\begin{array}{rl}
 \dbmij{m^{k}}{i}{j} & \text{ if } \rho(i,j) \neq k \\[1.5ex]
\min \left( \begin{array}{l}
                                      \dbmij{m^{k}}{i}{j},\\
                                      \dbmij{m^{k}}{i}{a} + d +\dbmij{m^{k}}{b}{j}, \\
                                      \dbmij{m^{k}}{i}{\barb} + d + \dbmij{m^{k}}{\bara}{j}, \\
                                      \dbmij{m^{k}}{i}{a} + d + \dbmij{m^{k}}{b}{\barb} + d + \dbmij{m^{k}}{\bara}{j},\\
                                      \dbmij{m^{k}}{i}{\barb} + d + \dbmij{m^{k}}{\bara}{a} + d + \dbmij{m^{k}}{b}{j}
\end{array} \right) & \text{ if } \rho(i,j) = k
\end{array}
\right.
\]
Then either $\dbm{m'}$ is consistent and 
\begin{itemize}

\item
$\forall 0 \leq \ell < k. \dbm{m^{k}_{\rho^{-1}(\ell)}} = \dbm{m'_{\rho^{-1}(\ell)}}$ 

\item
$\forall k \leq \ell < 4n^2. \dbm{m^{k}_{\rho^{-1}(\ell)}} = \dbm{m_{\rho^{-1}(\ell)}}$ 

\end{itemize}
or $\dbm{m^{4n^2}}$ is inconsistent. 
\end{theorem}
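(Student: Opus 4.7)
The plan is to induct on $k$, splitting into the case where $\dbm{m'}$ is consistent (proving the two bullets) and the case where it is not (proving the consequent inconsistency of $\dbm{m^{4n^2}}$). Let $U$ denote the five-way $\min$ used as the body of the inner loop of $\IncrementalClosure$, so that $\dbmij{m'}{i}{j} = U(\dbm{m})(i,j)$ and $\dbmij{m^{k+1}}{i^*}{j^*} = U(\dbm{m^k})(i^*,j^*)$ whenever $(i^*,j^*) = \rho^{-1}(k)$. The argument will rest on two elementary properties of $U$: it is reductive, because $\dbmij{n}{i}{j}$ appears as one of its own operands, and it is pointwise monotone, because it is built solely from $\min$ and $+$. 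The subtler ingredient is Corollary~\ref{corollary:idempotence-facts}, which promotes the output $\dbm{m'}$ to a fixed point of $U$ in the consistent case.

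For the consistent case I would induct on $k$. The base case $k=0$ is immediate from $\dbm{m^0}=\dbm{m}$. For the inductive step the only entry that changes between $\dbm{m^k}$ and $\dbm{m^{k+1}}$ is the one at $\rho^{-1}(k)$, so the non-trivial obligation is to establish $\dbmij{m^{k+1}}{i^*}{j^*} = \dbmij{m'}{i^*}{j^*}$ where $(i^*,j^*) = \rho^{-1}(k)$. Reductiveness of $U$ gives $\dbm{m'} \leq \dbm{m}$ pointwise, and the induction hypothesis shows that every entry of $\dbm{m^k}$ is either the corresponding entry of $\dbm{m'}$ (updated case) or of $\dbm{m}$ (not-yet-updated case); hence $\dbm{m'} \leq \dbm{m^k} \leq \dbm{m}$ pointwise. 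Monotonicity of $U$ then sandwiches the update:
\[
U(\dbm{m'})(i^*,j^*) \;\leq\; U(\dbm{m^k})(i^*,j^*) \;\leq\; U(\dbm{m})(i^*,j^*) \;=\; \dbmij{m'}{i^*}{j^*}.
\]
Corollary~\ref{corollary:idempotence-facts} collapses the left end: each of the four non-trivial operands in $U(\dbm{m'})(i^*,j^*)$ already dominates $\dbmij{m'}{i^*}{j^*}$, so $U(\dbm{m'})(i^*,j^*) = \dbmij{m'}{i^*}{j^*}$ and the chain becomes an equality.

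For the inconsistent case a trivial induction using only reductiveness of $U$ gives $\dbm{m^k} \leq \dbm{m}$ pointwise for every $k$. Monotonicity of $U$ then yields $U(\dbm{m^k})(i^*,j^*) \leq U(\dbm{m})(i^*,j^*) = \dbmij{m'}{i^*}{j^*}$ at each update, so every freshly written entry sits at or below the corresponding entry of $\dbm{m'}$. Because $\rho$ visits every position exactly once, $\dbm{m^{4n^2}} \leq \dbm{m'}$ pointwise, and any negative diagonal entry in $\dbm{m'}$ therefore persists in $\dbm{m^{4n^2}}$.

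The main obstacle is controlling the partly-updated matrix $\dbm{m^k}$, whose entries are a mixture of $\dbm{m}$-values and $\dbm{m'}$-values: one must be sure that propagating through such a mixture can neither undershoot nor overshoot $\dbm{m'}$. Overshooting is prevented by reductiveness, and undershooting is precisely what Corollary~\ref{corollary:idempotence-facts} rules out, through the fact that $\dbm{m'}$ has already absorbed all the propagation the in-place algorithm can recreate. Once this is isolated, the remainder is bookkeeping on $\rho$, and in particular the argument is independent of the traversal order.
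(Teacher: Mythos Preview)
Your proposal is correct and follows essentially the same route as the paper's proof: induction on $k$, the pointwise sandwich $\dbm{m'} \leq \dbm{m^k} \leq \dbm{m}$, monotonicity of the update (the paper's Proposition~\ref{lemma-monotonicity}) for the upper bound, and Corollary~\ref{corollary:idempotence-facts} for the lower bound, with the inconsistent case handled by the same monotonicity argument. Your abstraction of the update as a single operator $U$ and the explicit fixed-point reading of Corollary~\ref{corollary:idempotence-facts} make the structure slightly cleaner, but the two arguments are otherwise identical.
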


\subsection{In-place Incremental Strong Closure}\label{sect:in-place-strong}

The in-place version of the incremental strong closure algorithm is
presented in Figure~\ref{fig:inplaceincrstrongclosure}.
\begin{figure}
  \centering
  \begin{algorithmic}[1]
        \Function{\IncrementalStrongClosureInSitu}{$\dbm{m}, x'_a - x'_b \leq d }$
        \For{$i \in \{ 0, \ldots, 2n-1 \}$}
        \State $\dbmij{m}{i}{\bari} \gets \min\left(
         \begin{array}{l}
           \dbmij{m}{i}{\bari}, \\
           \dbmij{m}{i}{a} + d + \dbmij{m}{b}{\bari}, \\
           \dbmij{m}{i}{\bar{b}} + d + \dbmij{m}{\bar{a}}{\bari},\\
           \dbmij{m}{i}{\bar{b}} + d + \dbmij{m}{\bar{a}}{a} + d + \dbmij{m}{b}{\bari}, \\
           \dbmij{m}{i}{a} + d + \dbmij{m}{b}{\bar{b}} + d + \dbmij{m}{\bar{a}}{\bari} \\
         \end{array}
       \right )$
       \EndFor

        \For{$i \in \{ 0, \ldots, 2n-1 \}$} 
        \For{$j \in \{ 0, \ldots, 2n-1 \}$}
                \If{$j \neq \bari$}
        \State $\dbmij{m}{i}{j} \gets \min\left (
          \begin{array}{l}
            \dbmij{m}{i}{j}, \\
            \dbmij{m}{i}{a}+ d + \dbmij{m}{b}{j},\\
            \dbmij{m}{i}{\bar{b}} + d + \dbmij{m}{\bar{a}}{j},\\
            \dbmij{m}{i}{\bar{b}} + d + \dbmij{m}{\bar{a}}{a} + d + \dbmij{m}{b}{j}, \\
            \dbmij{m}{i}{a} + d + \dbmij{m}{b}{\bar{b}} + d + \dbmij{m}{\bar{a}}{j}, \\
            (\dbmij{m}{i}{\bari} + \dbmij{m}{\barj}{j})/2
          \end{array}
        \right )$
        \EndIf
        \EndFor
        \If{$\dbmij{m}{i}{i} < 0$}
        \State \Return{$false$}
	\EndIf
	        \EndFor
    \State \Return{$\dbm{m}$}
        \EndFunction
      \end{algorithmic}
  \caption{In-place Incremental Strong Closure}
  \label{fig:inplaceincrstrongclosure}
\end{figure}
The following lemma shows that running incremental closure followed by
strengthening refines the entries in the DBM to their tightest
possible value with respect to the new octagonal constraint.

\begin{lemma}\label{lemma-str-close-idempotent} Suppose $\dbm{m}$ is a closed, coherent DBM and 
$\dbm{m'} = \IncrementalClosure(\dbm{m}, o)$,
$\dbm{m''} = \Strengthen(\dbm{m'})$,
$\dbm{m'''} = \IncrementalClosure(\dbm{m''}, o)$
and
$o = (x'_a - x'_b \leq d)$.
Then either $\dbm{m'}$ is consistent and $\dbm{m'''} = \dbm{m''}$ or $\dbm{m'''}$ is not consistent.
\end{lemma}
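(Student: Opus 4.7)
The plan is to split on the consistency of $\dbm{m'}$ and combine reductiveness with the closure and coherence of the intermediate matrices.

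In the inconsistent case the conclusion is almost immediate. Both \Strengthen\ and \IncrementalClosure\ are entry-wise non-increasing (the former by Proposition~\ref{prop:strongreductive}, the latter because every assignment is a $\min$ with the current entry), so $\dbm{m'''} \leq \dbm{m''} \leq \dbm{m'}$. Hence any strictly negative diagonal entry of $\dbm{m'}$ is inherited by $\dbm{m'''}$, which is therefore inconsistent.

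In the consistent case I would first assemble the needed structural facts. Theorem~\ref{thm:incrclosureclosed} and Proposition~\ref{lemma:coherence} give that $\dbm{m'}$ is closed and coherent. Theorem~\ref{thm:strongclosurestrengthen} and Proposition~\ref{prop:strongcoherence} then ensure that $\dbm{m''}$ is strongly closed (hence closed) and coherent, while Proposition~\ref{prop:strongreductive} supplies $\dbm{m''} \leq \dbm{m'}$, whence $\dbmij{m''}{a}{b} \leq d$ and, by coherence of $\dbm{m''}$, $\dbmij{m''}{\barb}{\bara} \leq d$. To derive $\dbm{m'''} = \dbm{m''}$ I would prove both pointwise inequalities. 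The direction $\dbm{m'''} \leq \dbm{m''}$ is immediate from the min-update form of \IncrementalClosure. For the converse it suffices to show that each of the five arguments inside the $\min$ defining $\dbmij{m'''}{i}{j}$ is at least $\dbmij{m''}{i}{j}$; closure of $\dbm{m''}$ combined with $\dbmij{m''}{a}{b} \leq d$ gives
\[
\dbmij{m''}{i}{j} \leq \dbmij{m''}{i}{a} + \dbmij{m''}{a}{b} + \dbmij{m''}{b}{j} \leq \dbmij{m''}{i}{a} + d + \dbmij{m''}{b}{j},
\]
and symmetrically $\dbmij{m''}{\barb}{\bara} \leq d$ dominates the $(i, \barb, \bara, j)$ path. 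The two quintic arguments follow by chaining closure once more along the edges $\dbmij{m''}{\bara}{a}$ and $\dbmij{m''}{b}{\barb}$, yielding for example
\[
\dbmij{m''}{i}{j} \leq \dbmij{m''}{i}{\barb} + \dbmij{m''}{\barb}{\bara} + \dbmij{m''}{\bara}{a} + \dbmij{m''}{a}{b} + \dbmij{m''}{b}{j} \leq \dbmij{m''}{i}{\barb} + d + \dbmij{m''}{\bara}{a} + d + \dbmij{m''}{b}{j}.
\]

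The main obstacle is the bookkeeping for the two quintic arguments: all constituent closure inequalities must be verified in $\dbm{m''}$ (not in the intermediate $\dbm{m'}$), which is why both the closure and coherence of $\dbm{m''}$ established above are indispensable. Everything else reduces to the reductive and idempotent character of the individual domain operations already proved earlier.
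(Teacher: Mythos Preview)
Your proof is correct and is in fact cleaner than the route the paper takes for this particular lemma. The paper's proof does not invoke Theorem~\ref{thm:strongclosurestrengthen} to obtain closure of $\dbm{m''}$; instead it unfolds the definition of \Strengthen\ and, for each of the four non-trivial arguments of the $\min$, performs a four-way case split on whether $\dbmij{m''}{i}{a}$ (respectively $\dbmij{m''}{b}{j}$, $\dbmij{m''}{i}{\barb}$, $\dbmij{m''}{\bara}{j}$) equals the unchanged $\dbm{m'}$-entry or the strengthened half-sum, each subcase being discharged via Corollary~\ref{corollary:idempotence-facts} applied to $\dbm{m'}$. Your approach avoids all of this: once $\dbm{m''}$ is known to be closed and you have $\dbmij{m''}{a}{b} \leq d$ and $\dbmij{m''}{\barb}{\bara} \leq d$, the triangle inequality in $\dbm{m''}$ immediately dominates every argument of the $\min$, with no case analysis at all.

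What your route buys is brevity and uniformity; what the paper's route buys is that it never relies on $\dbm{m''}$ being closed, only on idempotence facts for $\dbm{m'}$. It is worth noting that the paper itself adopts essentially your strategy when proving the tight-closure analogue (Lemma~\ref{lemma:tightencloseidempotent}): there it establishes closure of the strengthened DBM first, derives $\dbmij{m'''}{a}{b} \leq d$ and $\dbmij{m'''}{\barb}{\bara} \leq d$, and then uses closure directly---exactly as you do here. So your argument is not only valid but aligns with the cleaner pattern the authors use elsewhere.
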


Now we move onto the theorem showing the correctness of
\IncrementalStrongClosureInSitu. We show that the in-place version of
the algorithm produces the same DBM as the non-in place version of the
algorithm. A bijective map used in the proof to process key entries
first before processing non-key entries: the condition
$\forall 0 \leq i < 2n. \rho(i,\bari) < 2n$ ensures this property.
Note that this is the only caveat on the order produced by the map:
the order in which key entries themselves are ordered is irrelevant
and similarly for non-key entries.
\begin{theorem}[\rm Correctness of \IncrementalStrongClosureInSitu] \label{lemma:stronginsitu} Suppose
  $\dbm{m}$ is a closed, coherent DBM,
  $\dbm{m'} = \IncrementalClosure(\dbm{m}, o)$, 
  $\dbm{m''} = \Strengthen(\dbm{m'})$, 
  $o = (x'_a - x'_b \leq d)$, 
  \mbox{$\rho : \{ 0, \ldots, 2n - 1 \}^2 \to \{ 0, \ldots, 4n^2 - 1 \}$} is a bijective map with \mbox{$\forall 0 \leq i < 2n . \rho(i,\bari) < 2n$},
  $\dbm{m}^{0} = \dbm{m}$ and
  \[
    \dbmij{m^{k+1}}{i}{j} =
    \left\{
      \begin{array}{ll}
        \dbmij{m^{k}}{i}{j} & \text{ if } \rho(i,j) \neq k \\[1.5ex]
        
        \min \left (\begin{array}{l}
                      \dbmij{m^{k}}{i}{\bari}, \\
                      \dbmij{m^{k}}{i}{a} + d + \dbmij{m^{k}}{b}{\bari}, \\
                      \dbmij{m^{k}}{i}{\bar{b}} + d + \dbmij{m^{k}}{\bar{a}}{\bari},\\
                      \dbmij{m^{k}}{i}{\bar{b}} + d + \dbmij{m^{k}}{\bar{a}}{a} + d + \dbmij{m^{k}}{b}{\bari}, \\
                      \dbmij{m^{k}}{i}{a} + d + \dbmij{m^{k}}{b}{\bar{b}} + d + \dbmij{m^{k}}{\bar{a}}{\bari} \\
                    \end{array} \right ) & \text{ if } \rho(i,j) = k \wedge j = \bari \\[1.5ex]
        \\
        \min \left( \begin{array}{l}
                      \dbmij{m^{k}}{i}{j},\\
                      \dbmij{m^{k}}{i}{a} + d +\dbmij{m^{k}}{b}{j}, \\
                      \dbmij{m^{k}}{i}{\barb} + d + \dbmij{m^{k}}{\bara}{j}, \\
                      \dbmij{m^{k}}{i}{a} + d + \dbmij{m^{k}}{b}{\barb} + d + \dbmij{m^{k}}{\bara}{j},\\
                      \dbmij{m^{k}}{i}{\barb} + d + \dbmij{m^{k}}{\bara}{a} + d + \dbmij{m^{k}}{b}{j}, \\
                      (\dbmij{m^{k}}{i}{\bari} + \dbmij{m^{k}}{\barj}{j})/2
                    \end{array} \right) & \text{ if } \rho(i,j) = k \wedge j \neq \bari
      \end{array}
    \right.
  \]
  Then either $\dbm{m'}$ is consistent and 
  \begin{itemize}
    
  \item
    $\forall 0 \leq \ell < k. \dbm{m^{k}_{\rho^{-1}(\ell)}} = \dbm{m''_{\rho^{-1}(\ell)}}$ 
    
  \item
    $\forall k \leq \ell < 4n^2. \dbm{m^{k}_{\rho^{-1}(\ell)}} = \dbm{m_{\rho^{-1}(\ell)}}$ 
    
  \end{itemize}
  or $\dbm{m^{4n^2}}$ is inconsistent. 
\end{theorem}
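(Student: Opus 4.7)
The plan is to proceed by induction on $k$ over the two-part invariant, splitting the argument into phases corresponding to the two loops of the algorithm. The condition $\forall i.\;\rho(i,\bari) < 2n$ together with a counting argument (there are exactly $2n$ key positions and $\rho$ is a bijection) guarantees that $k < 2n$ forces $j = \bari$ (Phase~1, the key-entry loop) and $k \geq 2n$ forces $j \neq \bari$ (Phase~2, the general loop). The base case $k = 0$ is immediate from $\dbm{m^0} = \dbm{m}$.

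For the Phase~1 inductive step, fix $(i,j) = \rho^{-1}(k)$ with $j = \bari$. The update formula is precisely the $(i,\bari)$ update performed by $\IncrementalClosure$. However, several of the five $\min$ arguments reference entries (notably $\dbmij{m^k}{\bara}{a}$ and $\dbmij{m^k}{b}{\barb}$, which are themselves the key entries of rows $\bara$ and $b$) that may already have been mutated by earlier Phase~1 steps. I handle this by a sandwich argument. By the induction hypothesis every entry of $\dbm{m^k}$ equals either its $\dbm{m}$ value (unprocessed) or its $\dbm{m''}$ value (processed); by Proposition~\ref{prop:strongreductive} and reductiveness of closure (inspect the $\min$ operations in $\IncrementalClosure$) we have $\dbm{m''} \leq \dbm{m'} \leq \dbm{m}$ pointwise; and as noted in the main text, strengthening fixes every key entry, so $\dbm{m''}$ agrees with $\dbm{m'}$ on every key position. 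Monotonicity of the update formula in its arguments yields an upper bound from evaluating at $\dbm{m}$ values and a lower bound from evaluating at $\dbm{m'}$ values, while Corollary~\ref{corollary:idempotence-facts} guarantees that the $\dbm{m'}$-evaluation already satisfies the update inequality, so both bounds collapse to $\dbmij{m'}{i}{\bari} = \dbmij{m''}{i}{\bari}$.

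For the Phase~2 inductive step, fix $(i,j) = \rho^{-1}(k)$ with $j \neq \bari$. By the Phase~1 conclusion all $2n$ key entries already hold their $\dbm{m''}$ values, so the strengthen term $(\dbmij{m^k}{i}{\bari} + \dbmij{m^k}{\barj}{j})/2$ consults the correct final values for both $(i,\bari)$ and $(\barj,j)$; this is precisely the on-the-fly strengthening that justifies $\IncrementalStrongClosure$. For the remaining closure terms I repeat the sandwich argument, this time using the idempotence analogue of Corollary~\ref{corollary:idempotence-facts} that follows from Lemma~\ref{lemma-str-close-idempotent} combined with Theorem~\ref{thm:incrstrongclosure}: together these imply that $\dbm{m''}$ is a fixed point of $\IncrementalStrongClosure$ applied to $o$. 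Inconsistency is propagated via the disjunction in the conclusion: if at some step the diagonal entry becomes strictly negative, no later reductive update can restore it, so $\dbm{m^{4n^2}}$ is inconsistent.

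The main obstacle is making the sandwich argument fully rigorous given the mixture of original and final values populating $\dbm{m^k}$. One must carefully track which entries consulted by the update formula have already been scheduled by $\rho$, and verify that coherence (Proposition~\ref{prop:strongcoherence}) is preserved throughout so that the reductiveness comparisons and the $\bari$/$\barj$-indexed lookups behave as expected at every pair of positions examined.
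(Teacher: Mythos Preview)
Your proposal is correct and follows essentially the same inductive sandwich argument as the paper: split on key versus non-key positions, bound $\dbm{m^k}$ between $\dbm{m}$ and the target, and invoke idempotence (Corollary~\ref{corollary:idempotence-facts} for Phase~1, Lemma~\ref{lemma-str-close-idempotent} for Phase~2) to collapse the sandwich. One small correction: your final paragraph worries about preserving coherence of the intermediate $\dbm{m^k}$, but this is neither needed nor true---entries are updated one at a time so coherent pairs are generally out of sync---and the paper's proof never appeals to coherence of $\dbm{m^k}$; coherence of $\dbm{m}$ is used only once, to invoke Lemma~\ref{lemma-str-close-idempotent}.
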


\subsection{In-place Incremental Tight Closure}\label{sect:in-place-integer}

\begin{figure}[t]
  \centering
  \begin{algorithmic}[1]
        \Function{\IncrementalIntegerClosureInSitu}{$\dbm{m}, x'_a - x'_b \leq d }$
        \For{$i \in \{ 0, \ldots, 2n-1 \}$}
       \State $\dbmij{m}{i}{\bari} \gets 2 \left \lfloor \min\left(
         \begin{array}{l}
           \dbmij{m}{i}{\bari}, \\
           \dbmij{m}{i}{a} + d + \dbmij{m}{b}{\bari}, \\
           \dbmij{m}{i}{\bar{b}} + d + \dbmij{m}{\bar{a}}{\bari},\\
           \dbmij{m}{i}{\bar{b}} + d + \dbmij{m}{\bar{a}}{a} + d + \dbmij{m}{b}{\bari}, \\
           \dbmij{m}{i}{a} + d + \dbmij{m}{b}{\bar{b}} + d + \dbmij{m}{\bar{a}}{\bari} \\
         \end{array}
       \right ) / 2 \right \rfloor $
           %   \State $\dbmij{m'}{i}{\bari} \gets 2\lfloor \dbmij{m'}{i}{\bari} / 2 \rfloor$
       \EndFor
        \If{\CheckIntegerConsistent($\dbm{m'}$)}
        \For{$i \in \{ 0, \ldots, 2n-1 \}$} 
        \For{$j \in \{ 0, \ldots, 2n-1 \}$}
                \If{$j \neq \bari$}
        \State $\dbmij{m}{i}{j} \gets \min\left (
          \begin{array}{l}
            \dbmij{m}{i}{j}, \\
            \dbmij{m}{i}{a}+ d + \dbmij{m}{b}{j},\\
            \dbmij{m}{i}{\bar{b}} + d + \dbmij{m}{\bar{a}}{j},\\
            \dbmij{m}{i}{\bar{b}} + d + \dbmij{m}{\bar{a}}{a} + d + \dbmij{m}{b}{j}, \\
            \dbmij{m}{i}{a} + d + \dbmij{m}{b}{\bar{b}} + d + \dbmij{m}{\bar{a}}{j}, \\
            (\dbmij{m}{i}{\bari} + \dbmij{m}{\barj}{j})/2
          \end{array}
        \right )$
        \EndIf
        \EndFor
        \If{$\dbmij{m}{i}{i} < 0$}
        \State \Return{$false$}
	\EndIf
	\EndFor
	\Else
        \State \Return{$false$}
	\EndIf
    \State \Return{$\dbm{m}$}
        \EndFunction
      \end{algorithmic}
      \caption{In-place Incremental Tight Closure}
      \label{fig:inplaceincrtightclosure}
\end{figure}

The in-place version of the incremental tight closure algorithm is
presented in Figure~\ref{fig:inplaceincrstrongclosure}, the only
difference with incremental strong closure is that for key entries we
also run a tightening step (line~3).
As in the previous section, we have a helper lemma for the main
theorem, showing that incremental closure followed by tightening and
strengthening refines the entries in the DBM to the tightest value
with respect to the new octagonal constraint.

\begin{lemma}
  \label{lemma:tightencloseidempotent}
  Suppose $\dbm{m}$ is a closed, coherent DBM and
  $\dbm{m'} = \IncrementalClosure(\dbm{m},o)$,
  $\dbm{m''} = \Tighten(\dbm{m'},o)$,
  $\dbm{m'''} = \Strengthen(\dbm{m''},o)$,
  $\dbm{m^{*}} = \IncrementalClosure(\dbm{m'''},o)$
  and \linebreak
  $\dbm{m^{*}} = \dbm{m'''}$ or $\dbm{m^{*}}$ is inconsistent.
\end{lemma}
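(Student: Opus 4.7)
The plan is to propagate closure and coherence through the three-step composition \Tighten\/ followed by \Strengthen\/ applied to $\dbm{m'}$, and then to exploit the fact that $\dbm{m'''}$ already incorporates $o$ so tightly that $\IncrementalClosure(\dbm{m'''},o)$ cannot sharpen any entry. First I dispatch inconsistency: if $\dbm{m'}$ has a strictly negative diagonal entry, subsequent operations only decrease it further. This holds because \Tighten\/ only modifies entries $\dbmij{m}{i}{\bari}$ with $\bari \neq i$, hence leaves the diagonal untouched, while both \Strengthen\/ (Proposition~\ref{prop:strongreductive}) and $\IncrementalClosure$ (immediate from the $\min$ definition in which $\dbmij{m}{i}{j}$ is itself a candidate) are reductive. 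The same cascade handles the case where $\dbm{m''}$ or $\dbm{m'''}$ is the first DBM to become inconsistent. In the sequel I therefore assume every intermediate DBM is consistent.

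In the consistent case, Theorem~\ref{thm:incrclosureclosed} together with Proposition~\ref{lemma:coherence} gives that $\dbm{m'}$ is closed and coherent, and integrality is inherited from $\dbm{m}$. The combined operation $\Strengthen(\Tighten(\cdot))$ applied to $\dbm{m'}$ coincides pointwise with the transformation studied in Lemma~\ref{lemma:closed}: after \Tighten\/, $\dbmij{m''}{i}{\bari} = 2\floor{\dbmij{m'}{i}{\bari}/2}$, and strengthening then computes $(\dbmij{m''}{i}{\bari}+\dbmij{m''}{\barj}{j})/2 = \floor{\dbmij{m'}{i}{\bari}/2} + \floor{\dbmij{m'}{\barj}{j}/2}$. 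Hence Lemma~\ref{lemma:closed} yields that $\dbm{m'''}$ is closed, and Theorem~\ref{lemma:tightlyclosed} upgrades this to tightly (hence strongly) closed. Coherence of $\dbm{m''}$ follows from Proposition~\ref{prop:tightencoherence}, and it is preserved by the strengthening formula.

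The crucial fact is $\dbmij{m'''}{a}{b} \leq d$. Because $\dbm{m}$ is closed, $\dbmij{m}{a}{a} = \dbmij{m}{b}{b} = 0$, so the $\min$ defining $\dbmij{m'}{a}{b}$ in $\IncrementalClosure$ contains the term $\dbmij{m}{a}{a} + d + \dbmij{m}{b}{b} = d$, whence $\dbmij{m'}{a}{b} \leq d$. Since \Tighten\/ and \Strengthen\/ are reductive (Propositions~\ref{prop:tightenreductiveness} and \ref{prop:strongreductive}), $\dbmij{m'''}{a}{b} \leq d$, and by coherence $\dbmij{m'''}{\barb}{\bara} = \dbmij{m'''}{a}{b} \leq d$ as well. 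It then remains to show, for each $i,j$, that every candidate value in the $\min$ defining $\dbmij{m^*}{i}{j}$ is at least $\dbmij{m'''}{i}{j}$. Closure of $\dbm{m'''}$ and the two bounds above give $\dbmij{m'''}{i}{a} + d + \dbmij{m'''}{b}{j} \geq \dbmij{m'''}{i}{a} + \dbmij{m'''}{a}{b} + \dbmij{m'''}{b}{j} \geq \dbmij{m'''}{i}{j}$ by two triangle steps, and symmetrically for $\dbmij{m'''}{i}{\barb} + d + \dbmij{m'''}{\bara}{j}$. The two quintic terms collapse in the same way, with extra triangle steps that insert $\dbmij{m'''}{a}{b}$ and $\dbmij{m'''}{\barb}{\bara}$. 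Therefore $\dbm{m^*} = \dbm{m'''}$.

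The main obstacle is bookkeeping: one must be disciplined about what is being preserved at each stage, in particular that Lemma~\ref{lemma:closed} applies to $\dbm{m'}$ (needing closure, coherence and integrality simultaneously), and that coherence flows through both \Tighten\/ and \Strengthen\/ so the triangle-inequality argument on $\dbm{m'''}$ has access to $\dbmij{m'''}{\barb}{\bara} = \dbmij{m'''}{a}{b}$. A secondary subtlety is making the inconsistency cascade rigorous, since \Strengthen\/ \emph{does} touch the diagonal through $(\dbmij{m}{i}{\bari} + \dbmij{m}{\bari}{i})/2$, and one must check that a pre-existing negative diagonal entry remains negative rather than being masked by some other route becoming smaller.
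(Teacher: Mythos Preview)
Your proposal is correct and follows essentially the same route as the paper: establish that $\dbm{m'}$ is closed and coherent, invoke Lemma~\ref{lemma:closed} to obtain closure of $\dbm{m'''}$, observe that $\dbmij{m'''}{a}{b} \leq d$ and $\dbmij{m'''}{\barb}{\bara} \leq d$ (by reductiveness of \Tighten\/ and \Strengthen\/ applied to $\dbmij{m'}{a}{b} \leq d$), and then dispatch each term of the $\IncrementalClosure$ $\min$ via repeated triangle inequalities in $\dbm{m'''}$. Your treatment of the inconsistency cascade and of coherence propagation is somewhat more explicit than the paper's, and your citation of Lemma~\ref{lemma:closed} for closure of $\dbm{m'''}$ is in fact the correct one (the paper's reference at that step appears to be a slip).
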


The following theorem is analogous to the theorem for in-place strong closure:
\begin{theorem}[\rm Correctness of
  \IncrementalIntegerClosureInSitu] \label{lemma:tightinsitu} Suppose
  $\dbm{m}$ is a closed, coherent DBM,
  $\dbm{m'} = \IncrementalClosure(\dbm{m}, o)$,
  $\dbm{m''} = \Tighten(\dbm{m'})$,
  $\dbm{m'''} = \Strengthen(\dbm{m'})$, $o = (x'_a - x'_b \leq d)$,
  that $\rho : \{ 0, \ldots, 2n - 1 \}^2 \to \{ 0, \ldots, 4n^2 - 1 \}$ is
  a bijective map with
  \mbox{$\forall 0 \leq i < 2n. \rho(i,\bari) < 2n$},
  $\dbm{m}^{0} = \dbm{m}$ and 
  \[
    \dbmij{m^{k+1}}{i}{j} =
    \left\{
      \begin{array}{ll}
        \dbmij{m^{k}}{i}{j} & \text{ if } \rho(i,j) \neq k \\[1.5ex]
        
        2 \left\lfloor \min \left (
        \begin{array}{l}
          \dbmij{m^{k}}{i}{\bari}, \\
          \dbmij{m^{k}}{i}{a} + d + \dbmij{m^{k}}{b}{\bari}, \\
          \dbmij{m^{k}}{i}{\bar{b}} + d + \dbmij{m^{k}}{\bar{a}}{\bari},\\
          \dbmij{m^{k}}{i}{\bar{b}} + d + \dbmij{m^{k}}{\bar{a}}{a} + d + \dbmij{m^{k}}{b}{\bari}, \\
          \dbmij{m^{k}}{i}{a} + d + \dbmij{m^{k}}{b}{\bar{b}} + d + \dbmij{m^{k}}{\bar{a}}{\bari} \\
        \end{array}
        \right )/2 \right\rfloor & \text{ if } \rho(i,j) = k \wedge j = \bari \\[1.5ex]\\
        \min \left( \begin{array}{l}
                      \dbmij{m^{k}}{i}{j},\\
                      \dbmij{m^{k}}{i}{a} + d +\dbmij{m^{k}}{b}{j}, \\
                      \dbmij{m^{k}}{i}{\barb} + d + \dbmij{m^{k}}{\bara}{j}, \\
                      \dbmij{m^{k}}{i}{a} + d + \dbmij{m^{k}}{b}{\barb} + d + \dbmij{m^{k}}{\bara}{j},\\
                      \dbmij{m^{k}}{i}{\barb} + d + \dbmij{m^{k}}{\bara}{a} + d + \dbmij{m^{k}}{b}{j}, \\
                      (\dbmij{m^{k}}{i}{\bari} + \dbmij{m^{k}}{\barj}{j})/2
                    \end{array} \right) & \text{ if } \rho(i,j) = k \wedge j \neq \bari
      \end{array}
    \right.
  \]
  Then either $\dbm{m'}$ is consistent and 
  \begin{itemize}
    
  \item
    $\forall 0 \leq \ell < k. \dbm{m^{k}_{\rho^{-1}(\ell)}} = \dbm{m'''_{\rho^{-1}(\ell)}}$ 
    
  \item
    $\forall k \leq \ell < 4n^2. \dbm{m^{k}_{\rho^{-1}(\ell)}} = \dbm{m_{\rho^{-1}(\ell)}}$ 
    
  \end{itemize}
  or $\dbm{m^{4n^2}}$ is inconsistent. 
\end{theorem}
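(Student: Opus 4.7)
The plan is to mimic the proof of Theorem~\ref{lemma:stronginsitu} by induction on $k$, maintaining the invariant stated in the conclusion: that $\dbm{m^k}$ agrees with $\dbm{m'''}$ at positions $\rho^{-1}(\ell)$ for $\ell < k$, and with $\dbm{m}$ at positions $\rho^{-1}(\ell)$ for $\ell \geq k$. The base case $k=0$ is immediate from $\dbm{m^0} = \dbm{m}$, and for the inductive step I would split into the three cases of the recursive definition of $\dbm{m^{k+1}}$; the case $\rho(i,j) \neq k$ is trivial since the entry is unchanged.

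For the key-entry case ($\rho(i,j) = k$ and $j = \bari$), the ordering constraint $\rho(i',\overline{i'}) < 2n$ ensures that only key entries have been touched so far. The in-place min-then-tighten computation of $\dbmij{m^{k+1}}{i}{\bari}$ uses arguments of two kinds: entries unchanged since $\dbm{m}$, and key entries such as $\dbmij{m'''}{\bar a}{a}$ and $\dbmij{m'''}{b}{\bar b}$ that have already been replaced by their $\dbm{m'''}$ values. I would close this case with a squeeze argument: by monotonicity (Propositions~\ref{prop-str-monotonicity} and~\ref{prop:intmonotonicity}) and reductiveness (Propositions~\ref{prop:strongreductive} and~\ref{prop:tightenreductiveness}) one has $\dbm{m'''} \leq \dbm{m}$, so the mixed result lies between the pure-$\dbm{m'''}$ and pure-$\dbm{m}$ computations. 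The pure-$\dbm{m}$ value equals $\dbmij{m'''}{i}{\bari}$ by construction of $\dbm{m'''}$ through $\IncrementalIntegerClosure$, and the pure-$\dbm{m'''}$ value also equals $\dbmij{m'''}{i}{\bari}$ by Lemma~\ref{lemma:tightencloseidempotent}. Hence the mixed in-place value must equal $\dbmij{m'''}{i}{\bari}$ too. The non-key-entry case ($\rho(i,j) = k$ with $j \neq \bari$, where $k \geq 2n$ and every key entry has already been updated) is handled by an analogous squeeze, now incorporating the on-the-fly strengthening summand $(\dbmij{m}{i}{\bari}+\dbmij{m}{\barj}{j})/2$.

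The main obstacle is precisely the key-entry case: because some arguments to the min may themselves have been tightened in earlier iterations, one must argue that this early tightening does not perturb the final value. This is resolved by two observations: first, $2\lfloor\min(x,y)/2\rfloor = \min(2\lfloor x/2\rfloor, 2\lfloor y/2\rfloor)$, so tightening distributes over min and may be commuted past any of the arguments safely; second, Lemma~\ref{lemma:tightencloseidempotent} guarantees that any re-application of incremental closure, tightening and strengthening to $\dbm{m'''}$ leaves it unchanged, which is exactly what allows the pure-$\dbm{m'''}$ side of the squeeze to collapse onto $\dbmij{m'''}{i}{j}$. The inconsistency case is uniform: by pointwise monotonicity, whenever the non-in-place computation produces a negative diagonal entry, so does the in-place computation; hence if $\dbm{m'}$ is inconsistent then $\dbm{m^{4n^2}}$ is inconsistent too, closing the disjunction in the conclusion.
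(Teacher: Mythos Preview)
Your proposal is essentially the paper's proof: induction on $k$ with the stated invariant, the two-sided sandwich using $\dbm{m'''}\leq\dbm{m^k}\leq\dbm{m}$, Lemma~\ref{lemma:tightencloseidempotent} (together with the evenness of $\dbmij{m'''}{i}{\bari}$) for the lower side, and monotonicity for the upper side; the inconsistency case is identical. One point deserves tightening. In the non-key case the pure-$\dbm{m}$ side of your squeeze does \emph{not} collapse to $\dbmij{m'''}{i}{j}$: the sixth summand evaluated at $\dbm{m}$ is $(\dbmij{m}{i}{\bari}+\dbmij{m}{\barj}{j})/2$, which can strictly exceed $(\dbmij{m''}{i}{\bari}+\dbmij{m''}{\barj}{j})/2$, so the upper half of the sandwich fails if you insist on comparing against the formula evaluated entirely at $\dbm{m}$. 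The paper (and the fix for your sketch) uses a hybrid upper bound: for the five \IncrementalClosure\ terms bound above via $\dbm{m}$, giving $\dbmij{m'}{i}{j}$, while for the sixth term observe that by the ordering constraint both $\dbmij{m^k}{i}{\bari}$ and $\dbmij{m^k}{\barj}{j}$ are already equal to their $\dbm{m'''}(=\dbm{m''})$ values, so that term is \emph{exactly} $(\dbmij{m''}{i}{\bari}+\dbmij{m''}{\barj}{j})/2$; taking the min then yields $\min(\dbmij{m'}{i}{j},(\dbmij{m''}{i}{\bari}+\dbmij{m''}{\barj}{j})/2)=\dbmij{m'''}{i}{j}$ as required. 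Your phrase ``incorporating the on-the-fly strengthening summand'' may be gesturing at this, but make the asymmetry explicit. (Incidentally, the distributivity of $2\lfloor\cdot/2\rfloor$ over $\min$ is true but unnecessary; monotonicity alone carries the key-entry squeeze.)
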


%%% Local Variables:
%%% mode: latex
%%% TeX-master: "main"
%%% End:

% \input{parallelising}
% !TEX root = main.tex

\section{Experimental Evaluation}
\label{sec:experiments}

For a fair and robust evaluation, the algorithms were
implemented using machinery provided in the
Apron library \cite{mine-CAV09}. The library provides 
implementations of the box, polyhedra and octagon abstract domains, the latter used for purposes of comparison.  Apron
is realised in C, and provides bindings for OCaml, C++ and Java. 
 \IncrementalClosure\/ and  \IncrementalStrongClosure\/
where then compared against the optimised implementation of incremental
closure provided by Apron. Three sets of experiments were performed.  First, the closure
algorithms were applied to randomly generated DBMs, subject to various
size constraints, to systematically exercise the algorithms on a range
of problem size. Henceforth these randomly generated problems will be
referred to as the micro-benchmarks. Second, to investigate the performance of
the algorithms in a real-world setting, the algorithms were integrated
into Frama-C, which is an industrial-strength static analysis tool for C code.
The tool was then applied to a
collection of C programs drawn from the Frama-C benchmarks repository. Third,
the algorithms were integrated
into AbSolute solver \cite{pelleau13constraint} and evaluated
against benchmarks drawn from
continuous constraint
programming.   

All experiments were performed on a
32-core Intel Xeon workstation with 128GB of memory. 

\subsection{Apron Library}

\newcommand{\svdots}{\raisebox{3pt}{\scalebox{.75}{\vdots}}}
\begin{figure}[!t]
  % \centering
  \begin{tabular}{c|cc}
    \begin{tabular}{c@{}}
%  \begin{minipage}{0.3\textwidth}
 \begin{tikzpicture}[scale=0.4]
   \draw (0,0) grid (7,2);
   \draw (0,2) grid (4,5);
   \draw (0,5) grid (2,7);

   \node at (-1.7, 3.5) {\scriptsize $\bf{i}$};
   \node at (3.5, -1.7) {\scriptsize $\bf{j}$};
   \foreach \c [count=\x from 0] in {{0}, {1}, {2}, {3}, {\ldots}, {\ldots}, {2n-1}} \node at (\x+0.5,-0.5) {\tiny $\c$};
   \foreach \c [count=\x from 0] in {{2n-1},{\svdots}, {\svdots}, {3}, {2}, {1}, {0}} \node at (-0.9, \x+0.5) {\tiny $\c$};

   \foreach \c [count=\x from 0] in {{0}, {2}, {4}, {8}, {\svdots}, {\ldots}, {\ldots}} \node at (0.5, 6.5-\x) {\tiny $\bf{\c}$};
   \foreach \c [count=\x from 0] in {{1}, {3}, {5}, {9}, {\svdots}, {\ldots}, {\ldots}} \node at (1.5, 6.5-\x) {\tiny $\bf{\c}$};
   \foreach \c [count=\x from 0] in {{6}, {10}, {\svdots}, {\ldots}, {\ldots}} \node at (2.5, 4.5-\x) {\tiny $\bf{\c}$};
   \foreach \c [count=\x from 0] in {{7}, {11}, {\svdots}, {\ldots}, {\ldots}} \node at (3.5, 4.5-\x) {\tiny $\bf{\c}$};
   \foreach \c [count=\x from 0] in {{\ldots}, {\ldots}} \node at (4.5, 1.5-\x) {\tiny $\bf{\c}$};
   \foreach \c [count=\x from 0] in {{\ldots}, {\ldots}} \node at (5.5, 1.5-\x) {\tiny $\bf{\c}$};
   \foreach \c [count=\x from 0] in {{\ldots}, {\ldots}} \node at (6.5, 1.5-\x) {\tiny $\bf{\c}$};
   \node at (4.5, 2.7) {\tiny $\ddots$};
 \end{tikzpicture} 
%\end{minipage} 
    \end{tabular}
    &
      \begin{tabular}{c@{}}
%  \begin{minipage}{0.3\textwidth}
 \begin{tikzpicture}[scale=0.4]
   \draw (0,0) grid (4,2);
   \draw (0,2) grid (2,4);

   \foreach \c [count=\x from 0] in {{0},{6},{5},{5}} \node at (0.5, 3.5-\x) {$\c$};
   \foreach \c [count=\x from 0] in {{6},{0},{5},{11}} \node at (1.5, 3.5-\x) {$\c$};
   \foreach \c [count=\x from 0] in {{0},{6}} \node at (2.5, 1.5-\x) {$\c$};
   \foreach \c [count=\x from 0] in {{4},{0}} \node at (3.5, 1.5-\x) {$\c$};
   
 \end{tikzpicture}
%\end{minipage}
        \end{tabular}
    &
      \begin{tabular}{c@{}}
%\begin{minipage}{0.3\textwidth}
  \begin{tikzpicture}[scale=0.4]
   \draw (0,0) grid (4,2);
   \draw (0,2) grid (2,4);

   \draw (5,2) grid (12,3);
   \foreach \c [count=\x from 0] in {{\infty}, {0}, {1}, {6}, {4}, {5}, {11}} \node at (\x + 5.5, 2.5) {$\c$};
   \node at (5.5,3) (0) {};
   \node at (6.5,3) (1) {};
   \node at (7.5,3) (2) {};
   \node at (8.5,3) (3) {};
   \node at (9.5,3) (4) {};
   \node at (10.5,3) (5) {};
   \node at (11.5,3) (6) {};
   
   \draw[*->] (0.5,3.5) to [out=90, in=90] (1);
   \draw[*->] (1.5,3.5) to [out=90, in=90] (4);
   \draw[*->] (0.5,2.5) to [out=50, in=120] (3);
   \draw[*->] (1.5,2.5) to [in=120] (1);

   \draw[*->] (0.5,1.5) to [out=-10, in=-120] (10.5,2);
   \draw[*->] (1.5,1.5) to [out=-10, in=-120] (10.5,2);
   \draw[*->] (2.5,1.5) to [out=50, in=120] (1);
   \draw[*->] (3.5,1.5) to [out=10, in=-110] (9.5,2);

   \draw[*->] (0.5, 0.5) to [out=-90, in=-90] (10.5,2);
   \draw[*->] (1.5, 0.5) to [out=-90, in=-90] (11.5,2);
   \draw[*->] (2.5, 0.5) to [out=-90, in=-90] (8.5,2);
   \draw[*->] (3.5, 0.5) to [out=-70, in=-90] (6.5,2);
 \end{tikzpicture}
%\end{minipage}
      \end{tabular} \\
   
    (1) & \multicolumn{2}{c}{(2)} \\
  \end{tabular}
   \caption{(1) Representing a DBM as an array; (2) Representing a DBM as a CoDBM}
   \label{fig:cdbms}
 \end{figure}

 The Apron library \cite{mine-CAV09} supports various number systems,
 such as single precision floating-point numbers and GNU
 multiple-precision (GMP) rationals. The default number system for the
 OCaml bindings is rationals, but it must be appreciated that the
 computational overhead of allocating memory for the rationals dominates the
 runtime, potentially masking the benefits of
 \IncrementalStrongClosure\/ over \IncrementalClosure\/. (Recall that
 \IncrementalStrongClosure\/ saves a separate pass over the DBM
 relative to \IncrementalClosure\/, avoiding counter increments and
 integer comparisons.)
 
 In Apron, numbers are represented by a type \texttt{bound\_t}, which
 depending on compile-time options, will select a specific header file
 containing concrete implementations of operations involving numbers
 extended to the symbolic values of $-\infty$ and $+\infty$. Every
 \texttt{bound\_t} object has to be initialised via a call to
 \texttt{bound\_init}, which in the case of rationals will invoke
 \texttt{malloc} and initialise space for the rational number. DBMs
 are stored taking advantage of the half-matrix nature of octagonal
 DBMs which follows by the definition of coherence. An array of
 \texttt{bound\_t} objects is then used to represent the half-matrix,
 as shown in figure~\ref{fig:cdbms}, subfigure (1). If $i\geq j$ or
 $i=\barj$ then the entry at $(i,j)$ in the DBM is stored at index
 $j + \lfloor i^2 / 2\rfloor$ in the array. Otherwise $(i,j)$ is
 stored at the array element reserved for entry $(\barj, \bari)$. A
 DBM of size $n$ requires an array of size $2n(n+1)$ which gives a
 significant space reduction over a naive representation of size
 $4n^2$.
 
 \subsection{Compact DBMs}

 Unexpectedly, initial experiments with Frama-C suggested that much of
 its runtime was spent in memory management rather than the domain
 operations themselves. Further investigation using Callgrind showed
 that 36\% of all function calls emanated from \texttt{malloc}-like routines.
 In response, the underlying DBM data structure was refactored to ensure that this
 undesirable memory management feature did not
 artificially perturb the experiments. The
 refactoring is fully described in a separate work
 \cite{ChawdharyKAPLAS17}, but to keep the paper self-contained the
 main idea is summarised in the following paragraph.

 The DBM representation was changed from a matrix storing numbers to a
 matrix storing pointers to numbers stored in a cache. This reduces
 the amount of memory used by a DBM as shown in
 figure~\ref{fig:cdbms}. The modified data structure has been dubbed a
 compact DBM or CoDBM for short. The cache is an array initialised to
 contain $\infty$ as its first entry, augmented with an ordered table
 which enables the pointer for any given number to be found (if it
 exists) in the cache using the bisection search method. As new
 numbers are created they are added to the cache, and the table is
 extended in sync. This representation which, crucially, factors out
 the overhead of storing a number repeatedly, has a significant impact
 on the memory usage of the Apron library. It also rebalances the
 proportion of time spend in domain operations. Further performance
 debugging of Frama-C, for instance to speed up parsing, would only
 increase the fraction of time spend on the domain operations and
 closure in particular.

\subsection{Micro-benchmarks}

\begin{figure}
\centering
\begin{tabular}{@{}c@{}}
\includegraphics[width=\linewidth]{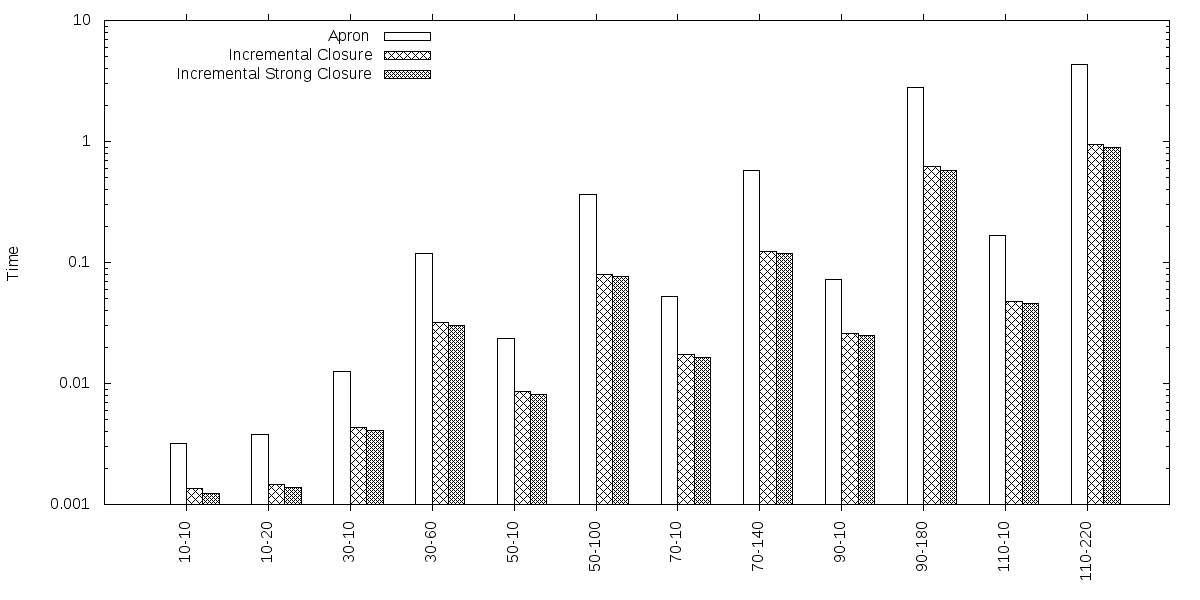}
\end{tabular}
\caption{Micro-benchmark timings for rationals}\label{fig:microbenchmarks}
\end{figure}

Each micro-benchmark suite was a collection of 10 problems, each
problem consisting of a random octagon and a randomly generated
octagonal constraint. Each random octagon was generated from a
prescribed number of octagonal constraints, so as to always contain
the origin, for a given number for variables. Each octagon was then
closed. A single randomly generated octagonal constraint, not
necessarily containing the origin, was then added to the closed
octagon using incremental closure. \IncrementalClosure\/ and
\IncrementalStrongClosure\/ where then timed and compared against the
Apron version for DBMs over rationals. The resulting DBMs
were then all checked for equality against \NonIncrementalClosure\/.
All timings were averaged over 10 runs and, moreover, all algorithms
were exercised on exactly the same collection of problems.
Fig.~\ref{fig:microbenchmarks} presents timings for the
micro-benchmark suites. The results show that \IncrementalClosure\/ outperforms the original
Apron implementation by a factor of 3--4 and 
\IncrementalStrongClosure\/ offers an additional 4--9\% speedup over \IncrementalClosure\/.

\subsection{Frama-C Benchmarks}

\begin{table}[t!]
\begin{center}
  \begin{tabular}{l|l|r|l}
    Name   & Benchmark         & LOC   & Description                                       \\
    \hline
    lev     & levenstein        & 187   & Levenstein string distance library  \\
    sol     & solitaire         & 334   & card cipher                   \\
    2048    & 2048              & 435   & 2048 game                                         \\
    kh      & khash             & 652   & hash code from klib C library                     \\
    taes    & Tiny-AES          & 813   & portable AES-128 implementation            \\
    qlz     & qlz               & 1168  & fast compression library                          \\
    mod     & libmodbus         & 7685  & library to interact with Modbus protocol          \\
   mgmp     & mini-gmp          & 11787 & subset of GMP library                     \\
    unq     & unqlite           & 64795 & embedded NoSQL DB                                 \\
    bzip    & bzip-single-file  & 74017 & bzip single file for static analysis benchmarking 
  \end{tabular}
\caption{Benchmark suite of C programs}
\label{tbl:benchmarks}
\end{center}
\end{table}

\begin{figure}[t!]
  \includegraphics[width=\linewidth]{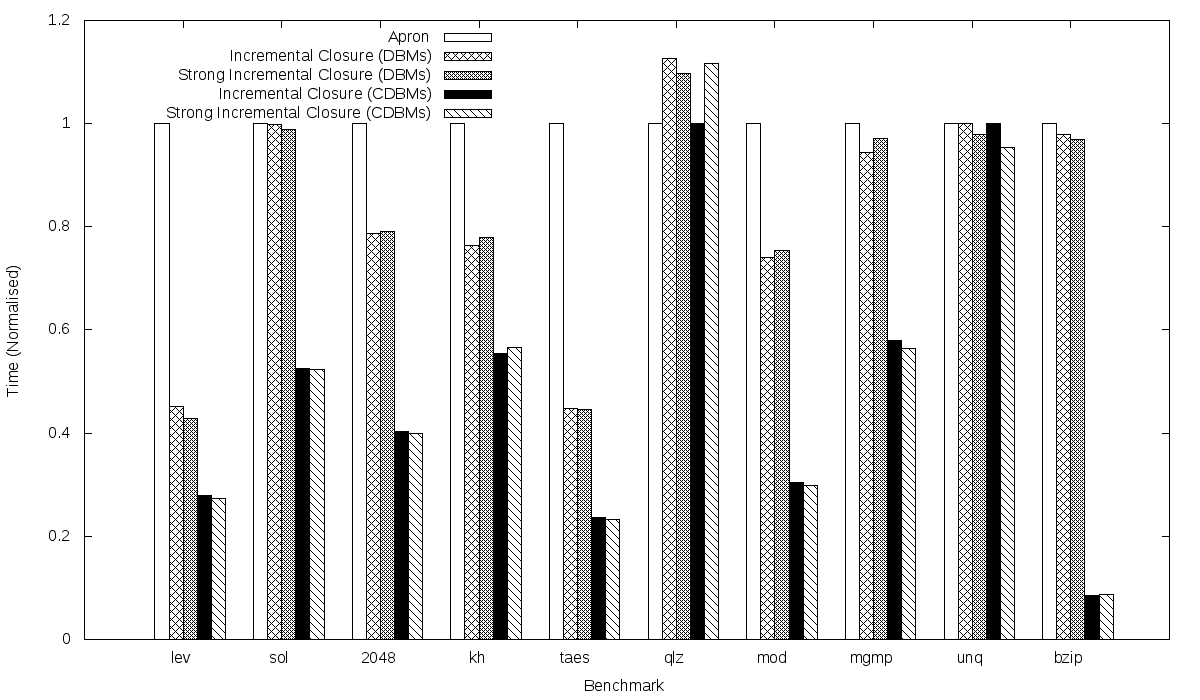}
  \includegraphics[width=\linewidth]{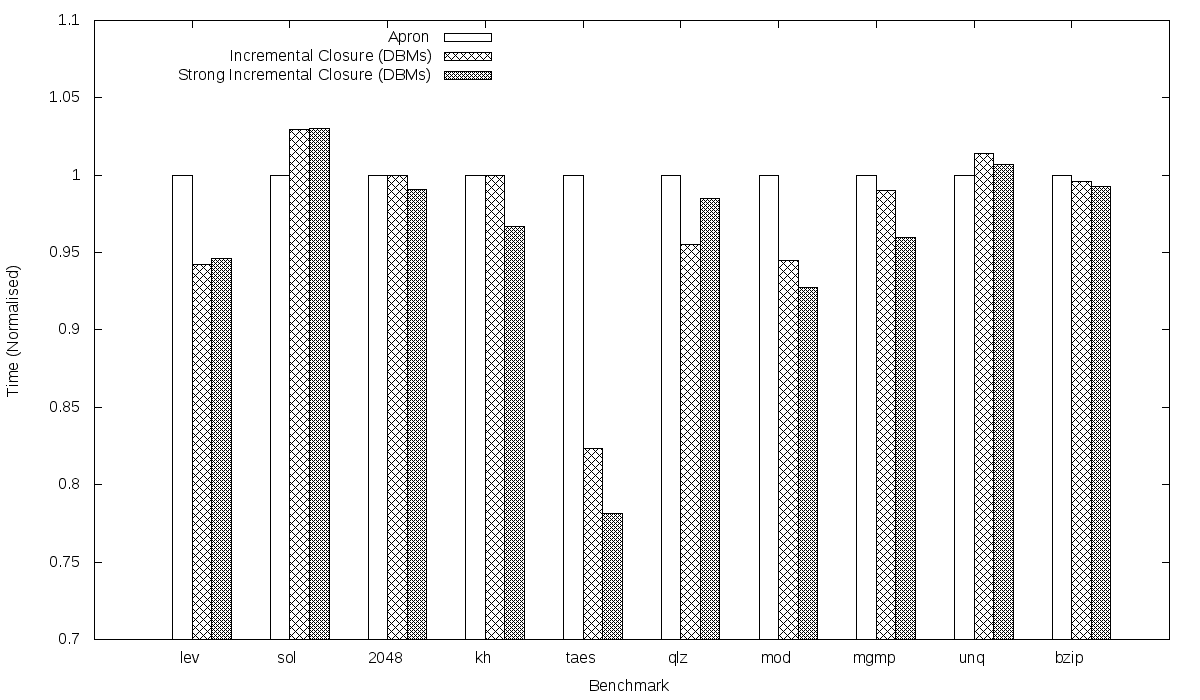}
\caption{Normalised timings of Frama-C for rationals (above) and floating point (below)}
\label{fig:normalised-timings}
\end{figure}

The EVA plugin of Frama-C implements an abstract interpreter over the
internal intermediate language used by Frama-C. The plugin uses the
Apron library to perform an octagon domain analysis, and so by
modifying Apron, Frama-C can make direct
use of \IncrementalClosure\/ and \IncrementalStrongClosure\/ (and specially their
\IncrementalClosureHoisting\/ and \IncrementalStrongClosureReduce\/ variants).
Table~\ref{tbl:benchmarks} lists the benchmark programs
passed to EVA to interpret the programs over the octagon domain. It should be noted
that EVA is a prototype
(which may explain its memory behaviour) and as such does not use widely
used heuristics and optimisations such as variable packing \cite{blanchet03static,heo16learning} or
localisation techniques \cite{beckschulze12access,oh11access} to enable the analysis to scale. Nonetheless,
the octagon analysis successfully terminated over the selected benchmarks.

Figure~\ref{fig:normalised-timings} gives the timings of benchmarks for rational (above) and floating point arithmetic (below),
normalised to the time required by the Apron implementation.  For rationals, normalised
timings are given for both DBMs and CoDBMs.  The relative speedup obtained from deploying
\IncrementalClosure\/ and \IncrementalStrongClosure\/ over Apron algorithm is variable,
ranging from a large speedup for taes to a modest slowdown for qlz.  
Table~\ref{table:unnormalised-timings} amplifies
the relative timings presented in the bar chart, giving the exact timings in seconds. The table shows that the longest
running analyses (which correspond to those employing the largest DBMs) are best served by
\IncrementalClosure\/ and \IncrementalStrongClosure\/.  

Cachegrind \cite{nethercote04dynamic} profiling
sheds light on qlz: some of the refined incremental algorithms actually increase
the number of first-level data cache misses, giving a net slowdown.  This cache anomaly might
arise because the DBMs generated by qlz are tiny. Cachegrind also suggests this
is the exception, revealing that the large speedup on bzip, mod and taes for
CoDBMs over DBMs stems from a reduction in the number of misses to level 3 unified data and instruction cache.
In fact, for bzip, mod and taes, the number of level 3 cache misses is reduced to zero.  This validates
the CoDBM data-structure.  It also illustrates that optimisations which match the architecture
can have surprising impact.

Floating point arithmetic is much faster than rationals, so the proportion
of the overall execution time spent in closure is decreased, hence one would expect the relative speedup from
\IncrementalClosure\/ and \IncrementalStrongClosure\/ over Apron to be likewise reduced.
Figure~\ref{fig:normalised-timings} and
table~\ref{table:unnormalised-timings} shows that this is the general pattern.  CoDBMs timings are
not given for floats because floats have a much denser representation than GMP rationals. Nevertheless,
the longest running analysis, which arises on taes, significantly benefits from both
\IncrementalClosure\/ and \IncrementalStrongClosure\/.  

%The speedup on the cumulative running time
%on all benchmarks from applying strong closure for rationals on DBMs,
%rationals with CoDBMs,
%floats with DBMs are, respectively, 0.24\%, 2\%, 3\%.
%Thus this refinement becomes increasingly valuable as the
%speed of the underlying number system is increased and memory management
%overheads are factored out.

\begin{table}[t!]
\centering
\begin{tabular}{c}
\begin{tabular}{@{}l|r|rr|rr@{}}
&   & \multicolumn{2}{c|}{DBM} & \multicolumn{2}{c}{CoDBM}                                                              \\
  Benchmark& Apron    & \IncrementalClosure     & \IncrementalStrongClosure & \IncrementalClosure & \IncrementalStrongClosure \\ \hline           
    lev    & 33.16    & 14.98                   & 14.21                     & 9.23                & 9.05                      \\
    sol    & 49.80    & 49.76                   & 49.19                     & 26.17               & 26.03                     \\
    2048   & 33.16    & 26.10                   & 26.26                     & 13.39               & 13.23                     \\
    kh     & 1.80     & 1.37                    & 1.40                       & 1.00                & 1.02                      \\
    taes   & 1817.91  & 814.77                  & 810.00                    & 430.60              & 421.32                    \\
    qlz    & 1.08     & 1.21                    & 1.18                      & 1.08                & 1.20                      \\
    mod    & 463.46   & 343.05                  & 349.62                    & 141.17              & 138.60                    \\
    mgmp   & 2.09     & 1.97                    & 2.03                      & 1.21                & 1.18                      \\
    unq    & 1.49     & 1.49                    & 1.46                      & 1.49                & 1.42                      \\
    bzip   & 621.53   & 607.88                  & 602.78                    & 53.51               & 52.63                     \\ \hline
    cumulative & 3025.48 & 1862.58 & 1858.13 & 678.85 & 665.68 \\ 
\multicolumn{6}{c}{} \\
%\end{tabular}
%\\
%\begin{tabular}{@{}l|r|rr|rr@{}}
            &        & \multicolumn{2}{c|}{DBM} \\
  Benchmark & Apron  & \IncrementalClosure      & \IncrementalStrongClosure \\ \cline{1-4}
lev         & 2.61   & 2.46                     & 2.47                      \\ %2.73               & 2.76                      \\
sol         & 12.62  & 12.99                    & 13.00                    \\ %13.83               & 13.72                     \\
2048        & 4.48   & 4.48                     & 4.44                     \\ %4.87                & 4.85                      \\
kh          & 0.60   & 0.60                     & 0.58                      \\ %0.69                & 0.70                      \\
taes        & 113.26 & 93.26                    & 88.47                  \\ %123.04              & 123.77                    \\
qlz         & 1.35   & 1.29                     & 1.33                      \\ %1.37                & 1.30                      \\
mod         & 57.59  & 54.43                    & 53.41                 \\% 58.41               & 58.97                     \\
mgmp        & 1.00   & 0.99                     & 0.96                    \\% 0.84                & 1.00                      \\
unq         & 1.44   & 1.46                     & 1.45                     \\% 1.35                & 1.45                      \\
bzip        & 22.69  & 22.60                    & 22.52                 \\% 30.04               & 30.27                     \\ \hline
\cline{1-4}
cumulative  &  217.64  & 194.56                   & 188.63        % 237.26                  & 239.75
\end{tabular}
\end{tabular}
  \caption{Absolute timings of Frama-C for rationals (above) and floating point (below)}
  \label{table:unnormalised-timings}
\end{table}

\subsection{AbSolute Constraint Solver Benchmarks}

\begin{figure}[t!]
  \includegraphics[width=\linewidth]{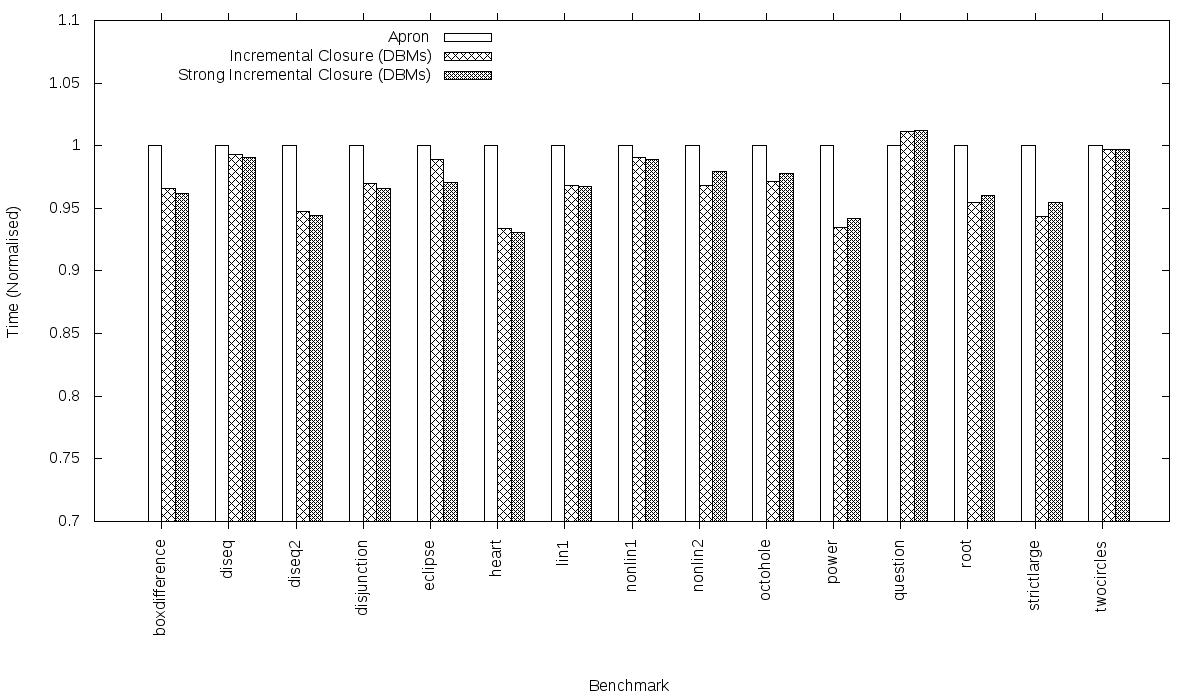}
\caption{Normalised timings for the AbSolute constraint solver (doubles)}
\label{fig:absolute-normalised-timings}
\end{figure}

\begin{table}[t!]
\centering
\begin{tabular}{l|r|r|r}
Benchmark     & Apron   & \IncrementalClosure & \IncrementalStrongClosure \\
\hline
boxdifference & 8.72    & 8.42                & 8.39                      \\
diseq         & 18.25   & 18.11               & 18.07                     \\
diseq2        & 15.62   & 14.80               & 14.75                     \\
disjunction   & 4.30    & 4.17                & 4.15                      \\
eclipse       & 42.39   & 41.91               & 41.14                     \\
heart         & 1014.13 & 947.32              & 944.04                    \\
lin1          & 12.15   & 11.77               & 11.76                     \\
nonlin1       & 2.38    & 2.35                & 2.35                      \\
nonlin2       & 4.05    & 3.92                & 3.97                      \\
octo\_hole    & 2.11    & 2.05                & 2.06                      \\
power         & 24.57   & 22.97               & 23.13                     \\
question      & 5.30    & 5.36                & 5.37                      \\
root          & 13.53   & 12.92               & 13.00                     \\
strict\_large & 4.52    & 4.26                & 4.31                      \\
two\_circles  & 11.99   & 11.95               & 11.95                     \\
  \hline
cumulative    & 1192.38 & 1112.28             & 1108.44                   \\
\end{tabular}
\caption{Absolute timing for the AbSolute constraint solver (doubles)}
\label{tbl:absolute}
\end{table}

The AbSolute constraint solver \cite{pelleau13constraint} applies
principles from abstract interpretation to continuous constraint
programming. Continuous constraint programming uses interval
approximations to approximate solutions to continuous constraints: in
essence a solution enclosed by a single interval is successively
refined to a set of intervals covering the solution (provided one
exists).

The AbSolute solver deploys octagons rather than intervals to obtain a
more precise and scalable solver. It uses Apron to implement its
abstract domain operations, working over floats rather than rationals.
The benchmarks selected to exercise AbSolute are a strict subset of
those contained in the AbSolute repository (some problems fail to
parse while others contain trigonometric functions not supported by
the Apron library).

Figure~\ref{fig:absolute-normalised-timings} summaries the relative
performance of Apron, \IncrementalClosure\/ and
\IncrementalStrongClosure\/; Table~\ref{tbl:absolute} gives the exact
timings in seconds. All but one benchmarks show an improvement with
\IncrementalClosure\/ and \IncrementalStrongClosure\/, even though the
size of the DBMs are small compared to those that arise in the Frama-C
benchmarks.

%%% Local Variables:
%%% mode: latex
%%% TeX-master: "main"
%%% End:

%  LocalWords:  polyhedra OCaml DBMs Frama AbSolute Xeon CoDBM GMP kh
%  LocalWords:  runtime init malloc subfigure Callgrind rebalances rr
%  LocalWords:  lev levenstein Levenstein khash klib taes AES qlz gmp
%  LocalWords:  libmodbus Modbus mgmp unq unqlite NoSQL bzip CoDBMs
%  LocalWords:  benchmarking Cachegrind boxdifference diseq lin octo
%  LocalWords:  disjunction nonlin continuous

% !TEX root = main.tex

\section{Concluding Discussion}
\label{sec:conclusion}

The octagon domain is used for many applications due to its
expressiveness and its easy of implementation, relative to other relational
abstract domains. Yet the elegance of their domain operations is at
odds with the subtlety of the underlying ideas, and the reasoning
needed to justify refinements that appear to be straightforward, such
as tightening and in-place update.

This paper has presented novel algorithms to incrementally update an
octagonal constraint system. More specifically, we have developed new
incremental algorithms for closure, strong closure and integer
closure, and their in-place variants. Experimental results with a
prototype implementation demonstrate significant speedups
over existing closure algorithms. We leave as future work the
generalisation of the in-place update results to parallel evaluation
and the application of our incremental algorithms for modelling machine
arithmetic \cite{simon_taming_2007} in binary analysis which,
incidentally, was the problem that motivated this thread of work.

%%% Local Variables:
%%% mode: latex
%%% TeX-master: "main"
%%% End:

% !TEX root = main.tex

\paragraph{Acknowledgements}
We thank Jacques-Henri~Jourdan for
stimulating discussions on Verasco at POPL'16 in St. Petersburg and
the ``Verified Trustworthy Software Systems'' Royal Society event in London.

%%% Local Variables:
%%% mode: plain-tex
%%% TeX-master: "main"
%%% End:

\bibliographystyle{plain}      % basic style, author-year citations
\bibliography{refs}   % name your BibTeX data base

\newcommand{\noop}[1]{}
\begin{thebibliography}{10}

\bibitem{bagnara_improved_2008}
Roberto Bagnara, Patricia~M. Hill, and Enea Zaffanella.
\newblock {An Improved Tight Closure Algorithm for Integer Octagonal
  Constraints}.
\newblock In {\em International Conference on Verification, Model Checking, and
  Abstract Interpretation}, number 4905 in LNCS, pages 8--21. Springer, 2008.

\bibitem{bagnara_weakly-relational_2009}
Roberto Bagnara, Patricia~M. Hill, and Enea Zaffanella.
\newblock {Weakly-relational Shapes for Numeric Abstractions: Improved
  Algorithms and Proofs of Correctness}.
\newblock {\em Formal Methods in System Design}, 35(3):279--323, 2009.

\bibitem{banterle07fast}
Francesco Banterle and Roberto Giacobazzi.
\newblock {A Fast Implementation of the Octagon Abstract Domain on Graphics
  Hardware}.
\newblock In {\em Static Analysis Symposium}, volume 4634 of {\em LNCS}, pages
  315--335. Springer, 2007.

\bibitem{baykan_spatial_1997}
Can~A. Baykan and Mark~S. Fox.
\newblock {Spatial Synthesis by Disjunctive Constraint Satisfaction}.
\newblock {\em Artificial Intelligence for Engineering, Design, Analysis and
  Manufacturing}, 11(4):245--262, 1997.

\bibitem{beckschulze12access}
Eva Beckschulze, Stefan Kowalewski, and J\"{o}rg Brauer.
\newblock {Access-Based Localization for Octagons}.
\newblock {\em Electronic Notes in Theoretical Computer Science}, 287:29--40,
  2012.

\bibitem{citeulike:3116035}
Richard Bellman.
\newblock {On a Routing Problem}.
\newblock {\em Quarterly of Applied Mathematics}, 16:87--90, 1958.

\bibitem{rossi06handbook}
Christian Bessiere.
\newblock {Constraint Propagation}.
\newblock In Francesca Rossi, Peter {van Beek}, and Toby Walsh, editors, {\em
  Handbook of Constraint Programming}, pages 39--81. Elsevier, 2006.

\bibitem{blanchet03static}
Bruno Blanchet, Patrick Cousot, Radhia Cousot, J\'{e}r\^{o}me Feret, Laurent
  Mauborgne, Antoine Min\'e, David Monniaux, and Xavier Rival.
\newblock {A Static Analyzer for Large Safety-Critical Software}.
\newblock In {\em Programming Language Design and Implementation}, pages
  196--207, 2003.

\bibitem{ChawdharyKAPLAS17}
Aziem Chawdhary and Andy King.
\newblock {Compact Difference Bound Matrices}.
\newblock In {\em Asian Symposium on Programming Languages and Systems}, volume
  10695 of {\em LNCS}, pages 1--20. Springer, 2017.

\bibitem{ChawdharyRKAPLAS14}
Aziem Chawdhary, Ed~Robbins, and Andy King.
\newblock {Simple and Efficient Algorithms for Octagons}.
\newblock In {\em Asian Symposium on Programming Languages and Systems}, volume
  8858 of {\em LNCS}, pages 296--313. Springer, 2014.

\bibitem{chawdhary16incrementing}
Aziem Chawdhary, Ed~Robbins, and Andy King.
\newblock {Incrementally Closing Octagons}.
\newblock Technical Report 1610.02952, Corness University Libraary, 2016.
\newblock \url{https://arxiv.org/abs/1610.02952}.

\bibitem{cormen90introduction}
Thomas.~H. Cormen, Charles~E. Leiserson, and Roland.~L Rivest.
\newblock {\em Introduction to Algorithms}.
\newblock The MIT Press, 1990.

\bibitem{dechter89temporal}
Rina Dechter, Itay Meiri, and Judea Pearl.
\newblock {Temporal Constraint Networks}.
\newblock {\em Artificial Intelligence}, 49:61--95, 1991.

\bibitem{Floyd:1962}
Robert~W. Floyd.
\newblock {Algorithm 97: Shortest Path}.
\newblock {\em Communications of the ACM}, 5(6):345--345, 1962.

\bibitem{gange16exploiting}
Graeme Gange, Jorge~A. Navas, Peter Schachte, Harold S{\o}ndergaard, and
  Peter~J. Stuckey.
\newblock {Exploiting Sparsity in Difference-bound Matrices}.
\newblock In {\em Static Analysis Symposium}, volume 9837 of {\em LNCS}, pages
  189--211, 2016.

\bibitem{halbwachs06some}
Nicolas Halbwachs, David Merchat, and Laure Gonnord.
\newblock {Some ways to Reduce the Space Dimension in Polyhedra Computations}.
\newblock {\em Formal Methods in System Design}, 29:79--95, 2006.

\bibitem{heo16learning}
Kihong Heo, Hakjoo Oh, and Hongseok Yang.
\newblock {Learning a Variable-Clustering Strategy for Octagon From Labeled
  Data Generated by a Static Analysis}.
\newblock In {\em Static Analysis Symposium}, volume 9837 of {\em LNCS}, pages
  237--256, 2016.

\bibitem{mine-CAV09}
Bertrand Jeannet and Antoine Min{\'e}.
\newblock {Apron: A Library of Numerical Abstract Domains for Static Analysis}.
\newblock In {\em Computer Aided Verification}, volume 5643 of {\em LNCS},
  pages 661--667. Springer, 2009.

\bibitem{jourdan16verasco}
Jacques-Henri Jourdan.
\newblock {\em {Verasco: a Formally Verified C Static Analyzer}}.
\newblock PhD thesis, Universit\'{e} Paris Diderot (Paris 7) Sorbonne Paris
  Cit\'{e}, May 2016.
\newblock \url{https://jhjourdan.mketjh.fr/thesis_jhjourdan.pdf}.

\bibitem{lagarias95computational}
Jeffrey~C. Lagarias.
\newblock {The Computational Complexity of Simultaneous Diophantine
  Approximation Problems}.
\newblock {\em SIAM Journal on Computing}, 14(1):196--209, 1985.

\bibitem{lahiri05efficient}
Shuvendu Lahiri and Madan Musuvathi.
\newblock {An Efficient Decision Procedure for UTVPI Constraints}.
\newblock In {\em Frontiers of Combining Systems}, volume 3717 of {\em LNAI},
  pages 168--183. Springer, 2005.

\bibitem{mine-PhD04}
Anotine Min{\'e}.
\newblock {\em Weakly Relational Numerical Abstract Domains}.
\newblock PhD thesis, {\'E}cole Polytechnique En Informatique, 2004.
\newblock \url{https://www-apr.lip6.fr/~mine/these/these-color.pdf}.

\bibitem{mine_octagon_2006}
Antoine Min\'e.
\newblock {The Octagon Abstract Domain}.
\newblock {\em Higher-Order and Symbolic Programming}, 19(1):31--100, 2006.

\bibitem{nethercote04dynamic}
Nicholas Nethercote.
\newblock {\em {Dynamic Binary Analysis and Instrumentation}}.
\newblock PhD thesis, Trinity College, University of Cambridge, 2004.

\bibitem{nieuwenhuis05dpll}
Robert Nieuwenhuis and Albert Oliveras.
\newblock {DPLL(T) with Exhaustive Theory Propagation and its Application to
  Difference Logic}.
\newblock In {\em Computer Aided Verification}, volume 3576 of {\em LNCS},
  pages 321--334. Springer, 2005.

\bibitem{oh11access}
Hakjoo Oh, Lucas Brutschy, and Kwangkeun Yi.
\newblock {Access Analysis-Based Tight Localization of Abstract Memories}.
\newblock In {\em International Conference on Verification, Model Checking, and
  Abstract Interpretation}, volume 6538 of {\em LNCS}, pages 356--370, 2011.

\bibitem{pelleau13constraint}
Marie Pelleau, Antoine Min\'e, Charlotte Truchet, and Fr\'{e}d\'{e}ric
  Benhamou.
\newblock {A Constraint Solver Based on Abstract Domains}.
\newblock In {\em International Conference on Verification, Model Checking, and
  Abstract Interpretation}, volume 7737 of {\em LNCS}, pages 434--454.
  Springer, 2013.

\bibitem{pelleau14octagon}
Marie Pelleau, Charlotte Truchet, and Fr\'{e}d\'{e}ric Benhamou.
\newblock {The Octagon Abstract Domain for Continuous Constraints}.
\newblock {\em Constraints}, 19(3):309--337, 2014.

\bibitem{robbins15scp}
Ed~Robbins, Jacob~M. Howe, and Andy King.
\newblock {Theory Propagation and Reification}.
\newblock {\em {Science of Computer Programming}}, 111(1):3--22, 2015.

\bibitem{roy16enforcing}
Pierre Roy, Guillaume Perez, Jean-Charles R\'egin, Alexandre Papadopoulos,
  Fran\c{c}ois Pachet, and Marco Marchini.
\newblock {Enforcing Structure on Temporal Sequences: The Allen Constraint}.
\newblock In {\em Principles and Practice of Constraint Programming}, volume
  9892 of {\em LNCS}, pages 786--801, 2016.

\bibitem{schutt10incremental}
Andreas Schutt and Peter~J. Stuckey.
\newblock {Incremental Satisfiability and Implication for UTVPI Constraints}.
\newblock {\em INFORMS Journal on Computing}, 22(4):514--527, 2010.

\bibitem{simon_taming_2007}
Axel Simon and Andy King.
\newblock {Taming the Wrapping of Integer Arithmetic}.
\newblock In {\em Static Analysis Symposium}, volume 4634 of {\em LNCS}, pages
  121--136. Springer, 2007.

\bibitem{tvpi}
Axel Simon, Andy King, and Jacob~M. Howe.
\newblock {The Two Variable Per Inequality Abstract Domain}.
\newblock {\em Higher-Order and Symbolic Programming}, 31(1):182--196, 2010.
\newblock \url{http://kar.kent.ac.uk/30678}.

\bibitem{singh15making}
Gagandeep Singh, Markus P\"{u}schel, and Martin Vechev.
\newblock {Making Numerical Program Analysis Fast}.
\newblock In {\em Programming Language Design and Implementation}, pages
  303--313. ACM Press, 2015.

\bibitem{subramani15graphical}
K.~Subramani and Piotr Wojciechowski.
\newblock {A Graphical Theorem of the Alternaive for UTVPI Constraints}.
\newblock In {\em ICTAC}, volume 9399 of {\em LNCS}, pages 328--345. Springer,
  2015.

\bibitem{warren02hackers}
Henry~S. Warren.
\newblock {\em Hacker's Delight}.
\newblock Addison-Wesley, 2002.

\bibitem{Warshall:1962}
Stephen Warshall.
\newblock {A Theorem on Boolean Matrices}.
\newblock {\em Journal of the ACM}, 9(1):11--12, 1962.

\end{thebibliography}

\newpage
\appendix
% !TEX root = main.tex

\section{Proof Appendix}\label{sec:proofs}

% !TEX root = main.tex

\subsection{Proofs for the Correctness of Incremental Closure}

\begin{proof}[for lemma~\ref{lemma:nonzero}]
We first prove the if case: since $\dbm{m'}$ is consistent $\dbmij{m'}{\bar{a}}{\bar{a}} \geq 0$ hence
\[
\min \left(
      \begin{array}{l}
        \dbmij{m}{\bar{a}}{\bar{a}}, \\
        \dbmij{m}{\bar{a}}{a}+ d + \dbmij{m}{b}{\bar{a}},\\
        \dbmij{m}{\bar{a}}{\bar{b}} + d + \dbmij{m}{\bar{a}}{\bar{a}},\\
        \dbmij{m}{\bar{a}}{\bar{b}} + d + \dbmij{m}{\bar{a}}{a} + d + \dbmij{m}{b}{\bar{a}}, \\
        \dbmij{m}{\bar{a}}{a} + d + \dbmij{m}{b}{\bar{b}} + d + \dbmij{m}{\bar{a}}{\bar{a}} \\
      \end{array}
    \right) = \dbmij{m'}{\bar{a}}{\bar{a}} \geq 0
\] 
Therefore
$\dbmij{m}{\bar{a}}{\bar{b}} + d + \dbmij{m}{\bar{a}}{\bar{a}} \geq 0$
and
$\dbmij{m}{\bar{a}}{a} + d + \dbmij{m}{b}{\bar{b}} + d + \dbmij{m}{\bar{a}}{\bar{a}} \geq 0$.
Since $\dbm{m}$ is closed $\dbmij{m}{\bar{a}}{\bar{a}} = 0$ hence
$\dbmij{m}{\bar{a}}{\bar{b}} + d \geq 0$ 
and
$\dbmij{m}{\bar{a}}{a} + d + \dbmij{m}{b}{\bar{b}} + d \geq 0$.

Repeating the argument $\dbmij{m'}{b}{b} \geq 0$ hence
\[
      \min\left (
      \begin{array}{l@{\hspace{0pt}}l@{\hspace{0pt}}l}
        &\dbmij{m}{b}{b},\\
        & \dbmij{m}{b}{a}+ d + \dbmij{m}{b}{b}, \\
        & \dbmij{m}{b}{\bar{b}} + d + \dbmij{m}{\bar{a}}{b},\\
        & \dbmij{m}{b}{\bar{b}} + d + \dbmij{m}{\bar{a}}{a} + d + \dbmij{m}{b}{b}, \\
        & \dbmij{m}{b}{a} + d + \dbmij{m}{b}{\bar{b}} + d + \dbmij{m}{\bar{a}}{b} \\
      \end{array}
    \right ) = \dbmij{m'}{b}{b} \geq 0
    \]
    Therefore $\dbmij{m}{b}{a} + d + \dbmij{m}{b}{b} \geq 0$. Since
    $\dbmij{m}{b}{b} = 0$ it follows that
    $\dbmij{m}{b}{a} + d \geq 0$.
    \\
    Now suppose that $\dbmij{m}{b}{a} + d \geq 0$,
    $\dbmij{m}{\bara}{\barb} + d \geq 0$ and
    $\dbmij{m}{\bara}{a} + d + \dbmij{m}{b}{\barb} + d \geq 0$. To
    show consistency we need to show that
    $\forall i. \dbmij{m'}{i}{i} \geq 0$. Pick an arbitrary $i$, then:
    \[
      \dbmij{m'}{i}{i} =  \min \left(
        \begin{array}{l}
          \dbmij{m}{i}{i}, \\
          \dbmij{m}{i}{a}+ d + \dbmij{m}{b}{i},\\
          \dbmij{m}{i}{\bar{b}} + d + \dbmij{m}{\bar{a}}{i},\\
          \dbmij{m}{i}{\bar{b}} + d + \dbmij{m}{\bar{a}}{a} + d + \dbmij{m}{b}{i}, \\
          \dbmij{m}{i}{a} + d + \dbmij{m}{b}{\bar{b}} + d + \dbmij{m}{\bar{a}}{i} \\
        \end{array} \right )
    \]
    We will show that $\dbmij{m'}{i}{i} \geq 0$. Recall that $\dbm{m}$
    is closed, and thus the second line above simplifies to:
    $\dbmij{m}{i}{a} + d + \dbmij{m}{b}{i} \geq \dbmij{m}{b}{a} + d
    \geq 0$. Similarly the third line:
    $\dbmij{m}{i}{\barb} + d + \dbmij{m}{\bara}{i} \geq
    \dbmij{m}{\bara}{\barb} + d \geq 0$, the fourth line
    :$\dbmij{m}{i}{\bar{b}} + d + \dbmij{m}{\bar{a}}{a} + d +
    \dbmij{m}{b}{i} \geq \dbmij{m}{b}{\barb} + d + \dbmij{m}{\bara}{a}
    + d$ and the fifth line:
    $\dbmij{m}{i}{a} + d + \dbmij{m}{b}{\bar{b}} + d +
    \dbmij{m}{\bar{a}}{i} \geq \dbmij{m}{\bara}{a} + d +
    \dbmij{m}{b}{\barb} + d \geq 0$. Thus every entry in the min
    expression is greater than 0 and thus
    $\forall i.\dbmij{m}{i}{i} \geq 0$ as required. \qed
\end{proof}

\begin{proof}[for theorem~\ref{thm:incrclosureclosed}]
  Suppose $\dbm{m'}$ is consistent. Because $\dbm{m}$ is closed
  $0 = \dbmij{m}{i}{i} \geq \dbmij{m'}{i}{i} \geq 0$ hence
  $\dbmij{m'}{i}{i} = 0$. It therefore remains to show
  $\forall i,j, k. \dbmij{m'}{i}{k} + \dbmij{m'}{k}{j} \geq A$ where
  \[
    A = \min\left(
      \begin{array}{@{}l@{}}
        \dbmij{m}{i}{j},\\
        \dbmij{m}{i}{a}+ d + \dbmij{m}{b}{j}, \\
        \dbmij{m}{i}{\bar{b}} + d + \dbmij{m}{\bar{a}}{j},\\
        \dbmij{m}{i}{\bar{b}} + d + \dbmij{m}{\bar{a}}{a} + d + \dbmij{m}{b}{j}, \\
        \dbmij{m}{i}{a} + d + \dbmij{m}{b}{\bar{b}} + d + \dbmij{m}{\bar{a}}{j} \\
      \end{array}
    \right )
    \]
    There are 5 cases for $\dbmij{m'}{i}{k}$ and 5 for
    $\dbmij{m'}{k}{j}$ giving 25 in total:
     \begin{enumerate}
     \item[1-1.] Suppose $\dbmij{m'}{i}{k} = \dbmij{m}{i}{k}$ and $\dbmij{m'}{k}{j} = \dbmij{m}{k}{j}$.  Because $\dbm{m}$ is closed:
       \begin{align*}
         \dbmij{m'}{i}{k} + \dbmij{m'}{k}{j} & = \dbmij{m}{i}{k} + \dbmij{m}{k}{j} \geq \dbmij{m}{i}{j} \geq A 
       \end{align*}
       
       \item[1-2.] Suppose  $\dbmij{m'}{i}{k} = \dbmij{m}{i}{k}$ and $\dbmij{m'}{k}{j} = \dbmij{m}{k}{a} + d + \dbmij{m}{b}{j}$.  Because $\dbm{m}$ is closed:
         \begin{align*}
           \dbmij{m'}{i}{k} + \dbmij{m'}{k}{j} & = \dbmij{m}{i}{k} + \dbmij{m}{k}{a} + d + \dbmij{m}{b}{j} \geq \dbmij{m}{i}{a} + d +  \dbmij{m}{b}{j} \geq A 
         \end{align*}

       \item[1-3.] Suppose $\dbmij{m'}{i}{k} = \dbmij{m}{i}{k}$ and $\dbmij{m'}{k}{j} = \dbmij{m}{k}{\bar{b}} + d + \dbmij{m}{\bar{a}}{j}$.  Because $\dbm{m}$ is closed: 
         \begin{align*}
           \dbmij{m'}{i}{k} + \dbmij{m'}{k}{j} & = \dbmij{m}{i}{k} + \dbmij{m}{k}{\bar{b}} + d + \dbmij{m}{\bar{a}}{j} 
            \geq \dbmij{m}{i}{\bar{b}} + d +  \dbmij{m}{\bar{a}}{j} \geq A 
         \end{align*}

       \item[1-4.] Suppose $\dbmij{m'}{i}{k} = \dbmij{m}{i}{k}$ and $\dbmij{m'}{k}{j} = \dbmij{m}{k}{\bar{b}} + d + \dbmij{m}{\bar{a}}{a} + d + \dbmij{m}{b}{j}$. Because $\dbm{m}$ is closed:
         \begin{align*}
           \dbmij{m'}{i}{k} + \dbmij{m'}{k}{j} & = \dbmij{m}{i}{k} + \dbmij{m}{k}{\bar{b}} + d + \dbmij{m}{\bar{a}}{a} + d + \dbmij{m}{b}{j} \\
                                               & \geq \dbmij{m}{i}{\bar{b}} + d +  \dbmij{m}{\bar{a}}{a} + d + \dbmij{m}{b}{j} \geq A 
         \end{align*}

       \item[1-5.] Suppose $\dbmij{m'}{i}{k} = \dbmij{m}{i}{k}$ and $\dbmij{m'}{k}{j} = \dbmij{m}{k}{a} + d + \dbmij{m}{b}{\bar{b}} + d + \dbmij{m}{\bar{a}}{j}$.  Because $\dbm{m}$ is closed:
         \begin{align*}
           \dbmij{m'}{i}{k} + \dbmij{m'}{k}{j} & = \dbmij{m}{i}{k} + \dbmij{m}{k}{a} + d + \dbmij{m}{b}{\bar{b}} + d + \dbmij{m}{\bar{a}}{j} \\
                                               & \geq \dbmij{m}{i}{a} + d +  \dbmij{m}{b}{\bar{b}} + d + \dbmij{m}{\bar{a}}{j} \geq A 
         \end{align*}

       \item[2-1.] Suppose $\dbmij{m'}{i}{k} = \dbmij{m}{i}{a} + d + \dbmij{m}{b}{k}$ and $\dbmij{m'}{k}{j} = \dbmij{m}{k}{j}$. 
       Symmetric to case~1-2.
 
         \item[2-2.] Suppose $\dbmij{m'}{i}{k} = \dbmij{m}{i}{a} + d + \dbmij{m}{b}{k}$ and $\dbmij{m'}{k}{j} = \dbmij{m}{k}{a} + d + \dbmij{m}{b}{j}$.  
         Because $\dbm{m}$ is closed and by Lemma~\ref{lemma:nonzero}:
           \begin{align*}
             \dbmij{m'}{i}{k} + \dbmij{m'}{k}{j} & = \dbmij{m}{i}{a} + d + \dbmij{m}{b}{k} + \dbmij{m}{k}{a} + d + \dbmij{m}{b}{j} \\
                                                 & \geq \dbmij{m}{i}{a} + d + \dbmij{m}{b}{a} + d + \dbmij{m}{b}{j} 
                                                 \geq \dbmij{m}{i}{a} + d + \dbmij{m}{b}{j} \geq A 
           \end{align*}

         \item[2-3.] Suppose $\dbmij{m'}{i}{k} = \dbmij{m}{i}{a} + d + \dbmij{m}{b}{k}$ and $\dbmij{m'}{k}{j} = \dbmij{m}{k}{\bar{b}} + d + \dbmij{m}{\bar{a}}{j}$.  Because $\dbm{m}$ is closed:
           \begin{align*}
             \dbmij{m'}{i}{k} + \dbmij{m'}{k}{j} & = \dbmij{m}{i}{a} + d + \dbmij{m}{b}{k} + \dbmij{m}{k}{\bar{b}} + d + \dbmij{m}{\bar{a}}{j} \\
                                                 & \geq \dbmij{m}{i}{a} + d + \dbmij{m}{b}{\bar{b}} + d + \dbmij{m}{\bar{a}}{j} \geq A 
           \end{align*}
           
         \item[2-4.] Suppose $\dbmij{m'}{i}{k} = \dbmij{m}{i}{a} + d + \dbmij{m}{b}{k}$ and $\dbmij{m'}{k}{j} = \dbmij{m}{k}{\bar{b}} + d + \dbmij{m}{\bar{a}}{a} + d + \dbmij{m}{b}{j}$.  
         Because $\dbm{m}$ is closed and by Lemma~\ref{lemma:nonzero}:
           \begin{align*}
             \dbmij{m'}{i}{k} + \dbmij{m'}{k}{j} & = \dbmij{m}{i}{a} + d + \dbmij{m}{b}{k} + \dbmij{m}{k}{\bar{b}} + d + \dbmij{m}{\bar{a}}{a} + d + \dbmij{m}{b}{j} \\
                                                 & \geq \dbmij{m}{i}{a} + d + \dbmij{m}{b}{\bar{b}} + d + \dbmij{m}{\bar{a}}{a} + d + \dbmij{m}{b}{j} \\
                                                 & \geq \dbmij{m}{i}{a} + d + \dbmij{m}{b}{j} \geq A 
           \end{align*}
           
         \item[2-5.] Suppose $\dbmij{m'}{i}{k} = \dbmij{m}{i}{a} + d + \dbmij{m}{b}{k}$ and $\dbmij{m'}{k}{j} = \dbmij{m}{k}{a} + d + \dbmij{m}{b}{\bar{b}} + d + \dbmij{m}{\bar{a}}{j}$.  
         Because $\dbm{m}$ is closed and by Lemma~\ref{lemma:nonzero}:
           \begin{align*}
             \dbmij{m'}{i}{k} + \dbmij{m'}{k}{j} & = \dbmij{m}{i}{a} + d + \dbmij{m}{b}{k} + \dbmij{m}{k}{a} + d + \dbmij{m}{b}{\bar{b}} + d + \dbmij{m}{\bar{a}}{j} \\
                                                 & \geq \dbmij{m}{i}{a} + d + \dbmij{m}{b}{a} + d + \dbmij{m}{b}{\bar{b}} + d + \dbmij{m}{\bar{a}}{j} \\
                                                 & \geq \dbmij{m}{i}{a} + d + \dbmij{m}{b}{\bar{b}} + d + \dbmij{m}{\bar{a}}{j}  \geq A 
           \end{align*}

      \item[3-1.] Suppose $\dbmij{m'}{i}{k} = \dbmij{m}{i}{\bar{b}} + d + \dbmij{m}{\bar{a}}{k}$ and $\dbmij{m'}{k}{j} = \dbmij{m}{k}{j}$.  
      Symmetric to case~1-3. 

         \item[3-2.] Suppose $\dbmij{m'}{i}{k} = \dbmij{m}{i}{\bar{b}} + d + \dbmij{m}{\bar{a}}{k}$ and $\dbmij{m'}{k}{j} = \dbmij{m}{k}{a} + d + \dbmij{m}{b}{j}$.  Symmetric to case~2-3.

         \item[3-3.] Suppose $\dbmij{m'}{i}{k} = \dbmij{m}{i}{\bar{b}} + d + \dbmij{m}{\bar{a}}{k}$ and $\dbmij{m'}{k}{j} = \dbmij{m}{k}{\bar{b}} + d + \dbmij{m}{\bar{a}}{j}$.  Because $\dbm{m}$ is closed and by 
         Lemma~\ref{lemma:nonzero}: 
           \begin{align*}
             \dbmij{m'}{i}{k} + \dbmij{m'}{k}{j} &= \dbmij{m}{i}{\bar{b}} + d + \dbmij{m}{\bar{a}}{k} + \dbmij{m}{k}{\bar{b}} + d + \dbmij{m}{\bar{a}}{j} \\
                                                 &\geq \dbmij{m}{i}{\bar{b}} + d + \dbmij{m}{\bar{a}}{\bar{b}} + d + \dbmij{m}{\bar{a}}{j} \\
                                                 &\geq \dbmij{m}{i}{\bar{b}} + d + \dbmij{m}{\bar{a}}{j} \geq A 
           \end{align*}

         \item[3-4.] Suppose $\dbmij{m'}{i}{k} = \dbmij{m}{i}{\bar{b}} + d + \dbmij{m}{\bar{a}}{k}$ and $\dbmij{m'}{k}{j} = \dbmij{m}{k}{\bar{b}} + d + \dbmij{m}{\bar{a}}{a} + d + \dbmij{m}{b}{j}$.  Because $\dbm{m}$ is closed and by Lemma~\ref{lemma:nonzero}:
           \begin{align*}
             \dbmij{m'}{i}{k} + \dbmij{m'}{k}{j} &= \dbmij{m}{i}{\bar{b}} + d + \dbmij{m}{\bar{a}}{k} + \dbmij{m}{k}{\bar{b}} + d + \dbmij{m}{\bar{a}}{a} + d + \dbmij{m}{b}{j} \\
                                                 &\geq \dbmij{m}{i}{\bar{b}} + d + \dbmij{m}{\bar{a}}{\bar{b}} + d + \dbmij{m}{\bar{a}}{a} + d + \dbmij{m}{b}{j} \\
                                                 &\geq \dbmij{m}{i}{\bar{b}} + d + \dbmij{m}{\bar{a}}{a} + d + \dbmij{m}{b}{j} \geq A  
           \end{align*}
           
         \item[3-5.] Suppose $\dbmij{m'}{i}{k} = \dbmij{m}{i}{\bar{b}} + d + \dbmij{m}{\bar{a}}{k}$ and $\dbmij{m'}{k}{j} = \dbmij{m}{k}{a} + d + \dbmij{m}{b}{\bar{b}} + d + \dbmij{m}{\bar{a}}{j}$.  Because $\dbm{m}$ is closed and by Lemma~\ref{lemma:nonzero}:
           \begin{align*}
             \dbmij{m'}{i}{k} + \dbmij{m'}{k}{j} &= \dbmij{m}{i}{\bar{b}} + d + \dbmij{m}{\bar{a}}{k} + \dbmij{m}{k}{a} + d + \dbmij{m}{b}{\bar{b}} + d + \dbmij{m}{\bar{a}}{j} \\
                                                 &= \dbmij{m}{i}{\bar{b}} + d + \dbmij{m}{\bar{a}}{a} + d + \dbmij{m}{b}{\bar{b}} + d + \dbmij{m}{\bar{a}}{j} \\
                                                 &= \dbmij{m}{i}{\bar{b}} + d + \dbmij{m}{\bar{a}}{j} \geq A 
           \end{align*}

\item[4-1.] Suppose $\dbmij{m'}{i}{k} = \dbmij{m}{i}{\bar{b}} + d + \dbmij{m}{\bar{a}}{a} + d + \dbmij{m}{b}{k}$ and $\dbmij{m'}{k}{j} = \dbmij{m}{k}{j}$.  
Symmetric to case~1-4. 
             
         \item[4-2.] Suppose $\dbmij{m'}{i}{k} = \dbmij{m}{i}{\bar{b}} + d + \dbmij{m}{\bar{a}}{a} + d + \dbmij{m}{b}{k}$ and $\dbmij{m'}{k}{j} = \dbmij{m}{k}{a} + d + \dbmij{m}{b}{j}$.  
Symmetric to case~2-4. 

         \item[4-3.] Suppose $\dbmij{m'}{i}{k} = \dbmij{m}{i}{\bar{b}} + d + \dbmij{m}{\bar{a}}{a} + d + \dbmij{m}{b}{k}$ and $\dbmij{m'}{k}{j} = \dbmij{m}{k}{\bar{b}} + d + \dbmij{m}{\bar{a}}{j}$.  
Symmetric to case~3-4.        
         
         \item[4-4.] Suppose $\dbmij{m'}{i}{k} = \dbmij{m}{i}{\bar{b}} + d + \dbmij{m}{\bar{a}}{a} + d + \dbmij{m}{b}{k}$ and $\dbmij{m'}{k}{j} = \dbmij{m}{k}{\bar{b}} + d + \dbmij{m}{\bar{a}}{a} + d + \dbmij{m}{b}{j}$.  Because $\dbm{m}$ is closed and by Lemma~\ref{lemma:nonzero}:
           \begin{align*}
             \dbmij{m'}{i}{k} + \dbmij{m'}{k}{j} &= \dbmij{m}{i}{\bar{b}} + d + \dbmij{m}{\bar{a}}{a} + d + \dbmij{m}{b}{k} + \dbmij{m}{k}{\bar{b}} + d + \dbmij{m}{\bar{a}}{a} + d + \dbmij{m}{b}{j} \\
                                                 &\geq \dbmij{m}{i}{\bar{b}} + d + \dbmij{m}{\bar{a}}{a} + d + \dbmij{m}{b}{\bar{b}} + d + \dbmij{m}{\bar{a}}{a} + d + \dbmij{m}{b}{j} \\
                                                 &\geq \dbmij{m}{i}{\bar{b}} + d + \dbmij{m}{\bar{a}}{a} + d + \dbmij{m}{b}{j} \geq A 
           \end{align*}

         \item[4-5.] Suppose $\dbmij{m'}{i}{k} = \dbmij{m}{i}{\bar{b}} + d + \dbmij{m}{\bar{a}}{a} + d + \dbmij{m}{b}{k}$ and $\dbmij{m'}{k}{j} = \dbmij{m}{k}{a} + d + \dbmij{m}{b}{\bar{b}} + d + \dbmij{m}{\bar{a}}{j}$.  Because $\dbm{m}$ is closed and by Lemma~\ref{lemma:nonzero}:
           \begin{align*}
             \dbmij{m'}{i}{k} + \dbmij{m'}{k}{j} &= \dbmij{m}{i}{\bar{b}} + d + \dbmij{m}{\bar{a}}{a} + d + \dbmij{m}{b}{k} + \dbmij{m}{k}{a} + d + \dbmij{m}{b}{\bar{b}} + d + \dbmij{m}{\bar{a}}{j} \\
                                                 &\geq \dbmij{m}{i}{\bar{b}} + d + \dbmij{m}{\bar{a}}{a} + d + \dbmij{m}{b}{a} + d + \dbmij{m}{b}{\bar{b}} + d + \dbmij{m}{\bar{a}}{j} \\
                                                 &\geq \dbmij{m}{i}{\bar{b}} + d + \dbmij{m}{\bar{a}}{a} + d + \dbmij{m}{b}{\bar{b}} + d + \dbmij{m}{\bar{a}}{j} \\
                                                 &\geq \dbmij{m}{i}{\bar{b}} + d + \dbmij{m}{\bar{a}}{j} \geq A 
           \end{align*}

       \item[5-1.] Suppose $\dbmij{m'}{i}{k} = \dbmij{m}{i}{a} + d + \dbmij{m}{b}{\bar{b}} + d + \dbmij{m}{\bar{a}}{k}$ and $\dbmij{m'}{k}{j} = \dbmij{m}{k}{j}$.  
       Symmetric to case~1-5.

           \item[5-2.] Suppose
               $\dbmij{m'}{i}{k} = \dbmij{m}{i}{a} + d + \dbmij{m}{b}{\bar{b}} + d + \dbmij{m}{\bar{a}}{k}$ and
               $\dbmij{m'}{k}{j} = \dbmij{m}{k}{a} + d + \dbmij{m}{b}{j}$.  
Symmetric to case~2-5.
             
           \item[5-3.] Suppose
               $\dbmij{m'}{i}{k} = \dbmij{m}{i}{a} + d + \dbmij{m}{b}{\bar{b}} + d + \dbmij{m}{\bar{a}}{k}$ and
               $\dbmij{m'}{k}{j} = \dbmij{m}{k}{\bar{b}} + d + \dbmij{m}{\bar{a}}{j}$.  
Symmetric to case~3-5.
             
           \item[5-4.] Suppose
               $\dbmij{m'}{i}{k} = \dbmij{m}{i}{a} + d + \dbmij{m}{b}{\bar{b}} + d + \dbmij{m}{\bar{a}}{k}$ and
               $\dbmij{m'}{k}{j} = \dbmij{m}{k}{\bar{b}} + d + \dbmij{m}{\bar{a}}{a} + d + \dbmij{m}{b}{j}$.    
 Symmetric to case~4-5.              
             
           \item[5-5.] Suppose
               $\dbmij{m'}{i}{k} = \dbmij{m}{i}{a} + d + \dbmij{m}{b}{\bar{b}} + d + \dbmij{m}{\bar{a}}{k}$ and
               $\dbmij{m'}{k}{j} = \dbmij{m}{k}{a} + d + \dbmij{m}{b}{\bar{b}} + d + \dbmij{m}{\bar{a}}{j}$.  Because $\dbm{m}$ is closed and by Lemma~\ref{lemma:nonzero}:
             \begin{align*}
               \dbmij{m'}{i}{k} + \dbmij{m'}{k}{j} &= \dbmij{m}{i}{a} + d + \dbmij{m}{b}{\bar{b}} + d + \dbmij{m}{\bar{a}}{k} + \dbmij{m}{k}{a} + d + \dbmij{m}{b}{\bar{b}} + d + \dbmij{m}{\bar{a}}{j} \\
                                                   &\geq \dbmij{m}{i}{a} + d + \dbmij{m}{b}{\bar{b}} + d + \dbmij{m}{\bar{a}}{a} + d + \dbmij{m}{b}{\bar{b}} + d + \dbmij{m}{\bar{a}}{j} \\
                                                   &\geq \dbmij{m}{i}{a} + d + \dbmij{m}{b}{\bar{b}} + d + \dbmij{m}{\bar{a}}{j} \geq A 
             \end{align*}
           \end{enumerate} \qed

\end{proof}

%%% Local Variables:
%%% mode: latex
%%% TeX-master: "main"
%%% End:

% !TEX root = main.tex

\subsection{Proofs for Properties of Incremental Closure}

\begin{proof}[for proposition~\ref{lemma:coherence}] \mbox{}
\begin{itemize}

\item
Suppose $\dbmij{m'}{i}{j} = \dbmij{m}{i}{j}$.
Because $\dbm{m}$ is coherent $\dbmij{m'}{i}{j} = \dbmij{m}{\barj}{\bari} \geq \dbmij{m'}{\barj}{\bari}$.

\item
Suppose $\dbmij{m'}{i}{j} = \dbmij{m}{i}{a}+ d + \dbmij{m}{b}{j}$.
Because $\dbm{m}$ is coherent $\dbmij{m'}{i}{j} = \dbmij{m}{\barj}{\bar{b}} + d + \dbmij{m}{\bar{a}}{\bari} \geq  \dbmij{m'}{\barj}{\bari}$.

\item
Suppose $\dbmij{m'}{i}{j} = \dbmij{m}{i}{\bar{b}} + d + \dbmij{m}{\bar{a}}{j}$.
Similar to the previous case.

\item
Suppose $\dbmij{m'}{i}{j} = \dbmij{m}{i}{\bar{b}} + d + \dbmij{m}{\bar{a}}{a} + d + \dbmij{m}{b}{j}$.
Because $\dbm{m}$ is coherent $\dbmij{m'}{i}{j} = \dbmij{m}{\barj}{\bar{b}} + d + \dbmij{m}{\bar{a}}{a}  + d + \dbmij{m}{b}{\bari} \geq  \dbmij{m'}{\barj}{\bari}$.

\item
Suppose $\dbmij{m'}{i}{j} = \dbmij{m}{i}{a} + d + \dbmij{m}{b}{\bar{b}} + d + \dbmij{m}{\bar{a}}{j}$.
Similar to the previous case.

\end{itemize}
Since $\dbmij{m'}{i}{j} \geq  \dbmij{m'}{\barj}{\bari}$ it follows
$\dbmij{m'}{\barj}{\bari} \geq \dbmij{m'}{i}{j}$ hence
$\dbmij{m'}{i}{j} =  \dbmij{m'}{\barj}{\bari}$ as required. \qed
\end{proof}

\begin{proof}[for proposition~\ref{lemma-idempotence}] Suppose $\dbm{m'}$ is consistent.
By Lemma~\ref{lemma:nonzero} it follows that
$\dbmij{m}{b}{a} + d \geq 0$,
$\dbmij{m}{\bar{a}}{\bar{b}} + d \geq 0$,
$\dbmij{m}{\bar{a}}{a} + d + \dbmij{m}{b}{\bar{b}} + d \geq 0$
and
$\dbmij{m}{b}{\bar{b}} + d + \dbmij{m}{\bar{a}}{a} + d \geq 0$.
Therefore
\[
    \dbmij{m'}{\bara}{a} = \min\left ( \begin{array}{l}
                                          \dbmij{m}{\bara}{a}, \\
                                          \dbmij{m}{\bara}{a} + d + \dbmij{m}{b}{a}, \\
                                          \dbmij{m}{\bara}{\barb} + d + \dbmij{m}{\bara}{a} \\
                                          \dbmij{m}{\bara}{a} + d + \dbmij{m}{b}{\barb} + d + \dbmij{m}{\bara}{a} \\
                                          \dbmij{m}{\bara}{\barb} + d + \dbmij{m}{\bara}{a} + d + \dbmij{m}{b}{a}  \\
                                        \end{array} \right ) = \dbmij{m}{\bara}{a}
\]
Likewise $\dbmij{m'}{b}{\barb} = \dbmij{m}{b}{\barb}$.  Using the same inequalities it follows
\[
\begin{array}{rcl@{\qquad}rcl}
\dbmij{m'}{i}{a}  
& 
= 
&
\min \left (\begin{array}{l}
                                                   \dbmij{m}{i}{a}, 
                                                   \dbmij{m}{i}{\barb} + d + \dbmij{m}{\bara}{a} \\
                                                 \end{array} \right )                                                
&                                                                                            
\dbmij{m'}{b}{j}  
& 
= 
&                                      \min \left (\begin{array}{l}
                                                   \dbmij{m}{b}{j}, 
                                                   \dbmij{m}{b}{\barb} + d + \dbmij{m}{\bara}{j} \\
                                                 \end{array} \right ) 
\\[0.5ex]
\dbmij{m'}{i}{\barb}  
& = 
&\min \left (\begin{array}{l}
                                                   \dbmij{m}{i}{\barb}, 
                                                   \dbmij{m}{i}{a} + d + \dbmij{m}{b}{\barb} \\
                                                 \end{array} \right ) 
&
\dbmij{m'}{\bara}{j}  & =  &\min \left (\begin{array}{l}
                                                   \dbmij{m}{\bara}{j}, 
                                                   \dbmij{m}{\bara}{a} + d + \dbmij{m}{b}{j} \\
                                                 \end{array} \right )                                                  
\end{array}
\]
Therefore
\begin{align*}
\dbmij{m'}{i}{a} + d + \dbmij{m'}{b}{j}                                            
& =
                                      \min \left( \begin{array}{l}
                                                    \dbmij{m}{i}{a} + d + \dbmij{m}{b}{j} \\
                                                    \dbmij{m}{i}{a} + d + \dbmij{m}{b}{\barb} + d + \dbmij{m}{\bara}{j} \\
                                                    \dbmij{m}{i}{\barb} + d + \dbmij{m}{\bara}{a} + d + \dbmij{m}{b}{j} \\
                                                   \dbmij{m}{i}{\barb} + d + \dbmij{m}{\bara}{a} + d + \dbmij{m}{b}{\barb} + d + \dbmij{m}{\bara}{j} \\                                                    
                                                  \end{array} \right) \\
& \geq
                                      \min \left( \begin{array}{l}
                                                    \dbmij{m}{i}{a} + d + \dbmij{m}{b}{j} \\
                                                    \dbmij{m}{i}{a} + d + \dbmij{m}{b}{\barb} + d + \dbmij{m}{\bara}{j} \\
                                                    \dbmij{m}{i}{\barb} + d + \dbmij{m}{\bara}{a} + d + \dbmij{m}{b}{j} \\
                                                   \dbmij{m}{i}{\barb} + d + \dbmij{m}{\bara}{j} \\                                                    
                                                  \end{array} \right)
\geq
\dbmij{m'}{i}{j}
\end{align*}
Likewise $\dbmij{m'}{i}{\barb} + d + \dbmij{m'}{\bara}{j} \geq \dbmij{m'}{i}{j}$.
Now consider
\begin{eqnarray*}
\lefteqn{\dbmij{m'}{i}{a} + d +  \dbmij{m'}{b}{\barb} + d + \dbmij{m'}{\bara}{j}} \\
& \qquad = & \min \left (\begin{array}{l}
                                                     \dbmij{m}{i}{a} + d + \dbmij{m}{b}{\barb} + d + \dbmij{m}{\bara}{j} \\
                                                     \dbmij{m}{i}{a} + d + \dbmij{m}{b}{\barb} + d + \dbmij{m}{\bara}{a} + d + \dbmij{m}{b}{j} \\
                                                     \dbmij{m}{i}{\barb} + d+ \dbmij{m}{\bara}{a} + d + \dbmij{m}{b}{\barb} + d + \dbmij{m}{\bara}{j} \\
                                                     \dbmij{m}{i}{\barb} + d + \dbmij{m}{\bara}{a} + d + \dbmij{m}{b}{\barb} + d + \dbmij{m}{\bara}{a} + d + \dbmij{m}{b}{j} \\
                                      \end{array} \right ) \\
                      & \qquad \geq & \min \left ( \begin{array}{l}
                                                   \dbmij{m}{i}{a} + d + \dbmij{m}{b}{\barb} + d + \dbmij{m}{\bara}{j} \\
                                                    \dbmij{m}{i}{a} + d + \dbmij{m}{b}{j} \\
                                                    \dbmij{m}{i}{\barb} + d + \dbmij{m}{\bara}{j} \\
                                                    \dbmij{m}{i}{\barb} + d + \dbmij{m}{\bara}{a} + d + \dbmij{m}{b}{j} \\
                                                  \end{array} \right ) \geq \dbmij{m'}{i}{j}
\end{eqnarray*}
Likewise $\dbmij{m'}{i}{\barb} + d + \dbmij{m'}{\bara}{a} + d + \dbmij{m'}{b}{j}  \geq \dbmij{m'}{i}{j}$.
Thus $\dbmij{m''}{i}{j} = \dbmij{m'}{i}{j}$.  
Now suppose $\dbm{m'}$ is not consistent.  Since $\dbmij{m''}{i}{i} \leq \dbmij{m'}{i}{i}$ then $\dbm{m''}$ is not consistent. 
\qed
\end{proof}

%%% Local Variables:
%%% mode: latex
%%% TeX-master: "main"
%%% End:

% !TEX root = main.tex

\subsection{Proofs for Incremental Strong Closure}

  \begin{proof}[for theorem~\ref{thm:strongclosurestrengthen}]
    Observe that
    $\dbmij{m'}{i}{\bari} = \min(\dbmij{m}{i}{\bari},
    (\dbmij{m}{i}{\bari} + \dbmij{m}{i}{\bari})/2) =
    \dbmij{m}{i}{\bari}$ and likewise
    $\dbmij{m'}{j}{\barj} = \dbmij{m}{j}{\barj}$.
    Therefore
    \begin{align*}
      \frac{\dbmij{m'}{i}{\bari} + \dbmij{m'}{\barj}{j}}{2} &=  \frac{\dbmij{m}{i}{\bari} + \dbmij{m}{\barj}{j}}{2} 
                                                             \geq \min \left ( \begin{array}{c} \dbmij{m}{i}{j} \\ \frac{\dbmij{m}{i}{\bari} + \dbmij{m}{\barj}{j}}{2} \end{array}\right ) 
                                                            =\dbmij{m'}{i}{j}
    \end{align*}

Because $\dbm{m}$ is closed
$0 = \dbmij{m}{i}{i} \leq \dbmij{m}{i}{\bari} + \dbmij{m}{\bari}{i}$
and thus
\[
\dbmij{m'}{i}{i} = 
\min(\dbmij{m}{i}{i},   (\dbmij{m}{i}{\bari} + \dbmij{m}{\bari}{i})/2) =
\min(0,   (\dbmij{m}{i}{\bari} + \dbmij{m}{\bari}{i})/2) = 0
\]
  
To show
  $\dbmij{m'}{i}{j} \leq \dbmij{m'}{i}{k} +
  \dbmij{m'}{k}{j}$ we proceed by case analysis:
    \begin{itemize}

    \item Suppose $\dbmij{m'}{i}{k} = \dbmij{m}{i}{k}$ and $\dbmij{m'}{k}{j} = \dbmij{m}{k}{j}$. 
    Because $\dbm{m}$ is closed:
\[
\dbmij{m'}{i}{j} \leq \dbmij{m}{i}{j} \leq \dbmij{m}{i}{k} + \dbmij{m}{k}{j} = 
      \dbmij{m'}{i}{k} + \dbmij{m'}{k}{j} \]
    
    \item Suppose $\dbmij{m'}{i}{k} \neq \dbmij{m}{i}{k}$ and $\dbmij{m'}{k}{j} = \dbmij{m}{k}{j}$.
    Because $\dbm{m}$ is closed and coherent:
      \begin{align*}
       2\dbmij{m'}{i}{k} + 2\dbmij{m'}{k}{j} & = \dbmij{m}{i}{\bari} + \dbmij{m}{\bark}{k} + 2\dbmij{m}{k}{j} \geq \dbmij{m}{i}{\bari} + \dbmij{m}{\bark}{j} + \dbmij{m}{k}{j}  \\
       & = \dbmij{m}{i}{\bari} + \dbmij{m}{\barj}{k} + \dbmij{m}{k}{j}  \geq \dbmij{m}{i}{\bari} + \dbmij{m}{\barj}{j} \geq 2\dbmij{m'}{i}{j} 
      \end{align*}
     
    \item Suppose $\dbmij{m'}{i}{k} = \dbmij{m}{i}{k}$ and $\dbmij{m'}{k}{j} \neq \dbmij{m}{k}{j}$.     
 Symmetric to the previous case.    
      
    \item Suppose $\dbmij{m'}{i}{k} \neq \dbmij{m}{i}{k}$ and $\dbmij{m'}{k}{j} \neq \dbmij{m}{k}{j}$.
Because $\dbm{m}$ is closed:
      \begin{align*}
        2\dbmij{m'}{i}{k} + 2\dbmij{m'}{k}{j} & = \dbmij{m}{i}{\bari} + \dbmij{m}{\bark}{k} + \dbmij{m}{k}{\bark} + \dbmij{m}{\barj}{j} &\\
        & \geq \dbmij{m}{i}{\bari} + \dbmij{m}{\bark}{\bark} + \dbmij{m}{\barj}{j} = \dbmij{m}{i}{\bari} + 0 + \dbmij{m}{\barj}{j} \geq 2\dbmij{m'}{i}{j} 
      \end{align*}

    \end{itemize}
\qed

  \end{proof}

\begin{proof}[for proposition~\ref{lemma-stengthen-idempotent}]
Let $\dbm{m''} = \Strengthen(\dbm{m'})$.
Observe $\dbmij{m'}{i}{\bari} = \min(\dbmij{m}{i}{\bari}, (\dbmij{m}{i}{\bari} + \dbmij{m}{i}{\bari})/2) = \dbmij{m}{i}{\bari}$ and
likewise $\dbmij{m'}{\barj}{j}  = \dbmij{m}{\barj}{j}$.
Therefore
\[
\begin{array}{rcl}
\dbmij{m''}{i}{j} & = & \min(\dbmij{m'}{i}{j}, (\dbmij{m'}{i}{\bari} + \dbmij{m'}{\barj}{j})/2) \\
& = & \min(\min(\dbmij{m}{i}{j}, (\dbmij{m}{i}{\bari} + \dbmij{m}{\barj}{j})/2), (\dbmij{m}{i}{\bari} + \dbmij{m}{\barj}{j})/2) \\
& = & \min(\dbmij{m}{i}{j}, (\dbmij{m}{i}{\bari} + \dbmij{m}{\barj}{j})/2) = \dbmij{m'}{i}{j} 
\end{array}\]
\qed
\end{proof}

\begin{proof}[for proposition~\ref{prop-str-monotonicity}]
\[
\begin{array}{rcl}
\Strengthen(\dbmij{m^2}{i}{j}) & = & \min(\dbmij{m^2}{i}{j}, \frac{\dbmij{m^2}{i}{\bari} + \dbmij{m^2}{\barj}{j}}{2}) \\ 
& \geq & \min(\dbmij{m^1}{i}{j}, \frac{\dbmij{m^1}{i}{\bari} + \dbmij{m^1}{\barj}{j}}{2}) = \Strengthen(\dbmij{m^1}{i}{j})
\end{array}
\]
\qed
\end{proof}

\begin{proof}[for proposition~\ref{prop:strongreductive}]

  $\dbmij{m'}{i}{j} = \min(\dbmij{m}{i}{j}, \frac{\dbmij{m}{i}{\bari} + \dbmij{m}{\barj}{j}}{2}) \leq \dbmij{m}{i}{j}$
 \qed 
\end{proof}

\begin{proof}[for proposition~\ref{prop:strongcoherence}]
  \begin{align*}
    \dbmij{m'}{i}{j} &= \min(\dbmij{m}{i}{j}, \frac{\dbmij{m}{i}{\bari} + \dbmij{m}{\barj}{j}}{2}) 
                     = \min(\dbmij{m}{\barj}{\bari}, \frac{\dbmij{m}{\barj}{j} + \dbmij{m}{i}{\bari}}{2}) = \dbmij{m'}{\barj}{\bari}
  \end{align*}
  \qed
\end{proof}

\begin{proof}[for theorem~\ref{thm:incrstrongclosure}]
   We prove that $\forall i,j. \dbmij{m'}{i}{j} = \dbmij{m^*}{i}{j}$. Pick some $i,j$.
   \begin{itemize}
   \item Suppose $j = \bari$.  Then
     \begin{align*}
       \dbmij{m^*}{i}{\bari} & = \min(\dbmij{m^\dagger}{i}{\bari}, \dbmij{m^\dagger}{i}{\bari}/2 +  \dbmij{m^\dagger}{i}{\bari}/2) 
                              = \dbmij{m^\dagger}{i}{\bari} \\
                             & = \min \left ( \begin{array}{@{}l@{}}
                                   \dbmij{m}{i}{\bari},\\
                                   \dbmij{m}{i}{a}+ d + \dbmij{m}{b}{\bari}, \\
                                   \dbmij{m}{i}{\bar{b}} + d + \dbmij{m}{\bar{a}}{\bari},\\
                                   \dbmij{m}{i}{\bar{b}} + d + \dbmij{m}{\bar{a}}{a} + d + \dbmij{m}{b}{\bari}, \\
                                   \dbmij{m}{i}{a} + d + \dbmij{m}{b}{\bar{b}} + d + \dbmij{m}{\bar{a}}{\bari} \\
                                 \end{array} \right ) 
                              = \dbmij{m'}{i}{\bari} 
     \end{align*}

     \item Suppose $j \neq \bari$.  Then
       \begin{align*}
       \dbmij{m^*}{i}{j} & = \min(\dbmij{m^\dagger}{i}{j}, \dbmij{m^\dagger}{i}{\bari}/2 +  \dbmij{m^\dagger}{j}{\barj}/2) \\
                         & = \min(\dbmij{m^\dagger}{i}{j}, \dbmij{m'}{i}{\bari}/2 +  \dbmij{m'}{j}{\barj}/2) \\
                         & = \min \left ( \begin{array}{l}
                                            \dbmij{m}{i}{j}, \\
                                            \dbmij{m}{i}{a}+ d + \dbmij{m}{b}{j},\\
                                            \dbmij{m}{i}{\bar{b}} + d + \dbmij{m}{\bar{a}}{j},\\
                                            \dbmij{m}{i}{\bar{b}} + d + \dbmij{m}{\bar{a}}{a} + d + \dbmij{m}{b}{j}, \\
                                            \dbmij{m}{i}{a} + d + \dbmij{m}{b}{\bar{b}} + d + \dbmij{m}{\bar{a}}{j}, \\
                                            (\dbmij{m'}{i}{\bari} + \dbmij{m'}{\barj}{j})/2
                                          \end{array} \right ) 
                          = \dbmij{m'}{i}{j} 
       \end{align*}
   \end{itemize}
   \qed
  \end{proof}

  \begin{proof}[Proof for Theorem~\ref{lemma:incrstrongclosurereduce}]
Suppose $\dbmij{m}{a}{b} + d \geq 0$. Then it is sufficient to show that:
    \[
\min\left (
          \begin{array}{l}
            \dbmij{m}{i}{j}, \\
            \dbmij{m}{i}{a}+ d + \dbmij{m}{b}{j},\\
            \dbmij{m}{i}{\bar{b}} + d + \dbmij{m}{\bar{a}}{j},\\
            \dbmij{m}{i}{\bar{b}} + d + \dbmij{m}{\bar{a}}{a} + d + \dbmij{m}{b}{j}, \\
            \dbmij{m}{i}{a} + d + \dbmij{m}{b}{\bar{b}} + d + \dbmij{m}{\bar{a}}{j}, \\
            (\dbmij{m'}{i}{\bari} + \dbmij{m'}{\barj}{j})/2
          \end{array} \right )
        =
        \min\left (
          \begin{array}{l}
            \dbmij{m}{i}{j}, \\
            \dbmij{m}{i}{a}+ d + \dbmij{m}{b}{j},\\
            \dbmij{m}{i}{\bar{b}} + d + \dbmij{m}{\bar{a}}{j},\\
            \dbmij{m}{i}{\bar{b}} + d + \dbmij{m}{\bar{a}}{a} + d + \dbmij{m}{b}{j}, \\
            \dbmij{m}{i}{a} + d + \dbmij{m}{b}{\bar{b}} + d + \dbmij{m}{\bar{a}}{j}, \\
            (\dbmij{m}{i}{a} + d + \dbmij{m}{b}{\bari} + \dbmij{m}{\barj}{j})/2, \\
            (\dbmij{m}{i}{\bari} + \dbmij{m}{\barj}{a} + d + \dbmij{m}{b}{j})/2
          \end{array} \right )
    \]
where
    \[
      \dbmij{m'}{i}{\bari} =
      \min \left (
        \begin{array}{@{}l@{}}
          \dbmij{m}{i}{\bari}, \\
          \dbmij{m}{i}{a}+ d + \dbmij{m}{b}{\bari},\\
          \dbmij{m}{i}{\bar{b}} + d + \dbmij{m}{\bar{a}}{a} + d + \dbmij{m}{b}{\bari}, \\
          \dbmij{m}{i}{a} + d + \dbmij{m}{b}{\bar{b}} + d + \dbmij{m}{\bar{a}}{\bari}, \\
        \end{array}
      \right )
\quad
      \dbmij{m'}{\barj}{j} =
      \min \left (
        \begin{array}{@{}l@{}}
          \dbmij{m}{\barj}{j}, \\
          \dbmij{m}{\barj}{a}+ d + \dbmij{m}{b}{j},\\
          \dbmij{m}{\barj}{\bar{b}} + d + \dbmij{m}{\bar{a}}{a} + d + \dbmij{m}{b}{j}, \\
          \dbmij{m}{\barj}{a} + d + \dbmij{m}{b}{\bar{b}} + d + \dbmij{m}{\bar{a}}{j}, \\
        \end{array}
      \right )
    \]
    Using the above, 
    $(\dbmij{m'}{i}{\bari} + \dbmij{m'}{\barj}{j})/2$ expands into one of the following cases:
    \begin{itemize}
    \item[1-1] Suppose $\dbmij{m'}{i}{\bari} = \dbmij{m}{i}{\bari}$ and $\dbmij{m'}{\barj}{j} = \dbmij{m}{\barj}{j}$.
By strong closure  $\frac{\dbmij{m}{i}{\bari} + \dbmij{m}{\barj}{j}}{2} \geq \dbmij{m}{i}{j}$.
Thus this case is redundant.

    \item[1-2] Suppose $\dbmij{m'}{i}{\bari} = \dbmij{m}{i}{\bari}$ and $\dbmij{m'}{\barj}{j} = \dbmij{m}{\barj}{a} + d + \dbmij{m}{b}{j}$.
      This case is not redundant.

    \item[1-3]  Suppose $\dbmij{m'}{i}{\bari} = \dbmij{m}{i}{\bari}$ and $\dbmij{m'}{\barj}{j} = \dbmij{m}{\barj}{\barb} + d + \dbmij{m}{\bara}{a} + d + \dbmij{m}{b}{j}$.  By strong closure and coherence:
      \begin{align*}
              \frac{\dbmij{m}{i}{\bari} + (\dbmij{m}{\barj}{\barb} + d + \dbmij{m}{\bara}{a} + d + \dbmij{m}{b}{j})}{2} & = \\
        \frac{\dbmij{m}{i}{\bari} + \dbmij{m}{\bara}{a}}{2} + \frac{2d + \dbmij{m}{\barj}{\barb} + \dbmij{m}{b}{j}}{2} & \geq  \dbmij{m}{i}{a} + \frac{2d + 2\dbmij{m}{b}{j}}{2} =  \dbmij{m}{i}{a} + d + \dbmij{m}{b}{j}
      \end{align*}

    \item[1-4]  Suppose $\dbmij{m'}{i}{\bari} = \dbmij{m}{i}{\bari}$ and $\dbmij{m'}{\barj}{j} = \dbmij{m}{\barj}{a} + d + \dbmij{m}{b}{\barb} + d + \dbmij{m}{\bara}{j}$. By strong closure and coherence:
      \begin{align*}
      \frac{\dbmij{m}{i}{\bari} + (\dbmij{m}{\barj}{a} + d + \dbmij{m}{b}{\barb} + d + \dbmij{m}{\bara}{j})}{2} & = \\
              \frac{\dbmij{m}{i}{\bari} + \dbmij{m}{b}{\barb}}{2} + \frac{2d + \dbmij{m}{\barj}{a} + \dbmij{m}{\bara}{j}}{2} & 
              \geq  \dbmij{m}{i}{\barb} + \frac{2d + 2\dbmij{m}{\bara}{j}}{2} =  \dbmij{m}{i}{\barb} + d + \dbmij{m}{\bara}{j}
      \end{align*}

    \item[2-1]  Suppose $\dbmij{m'}{i}{\bari} = \dbmij{m}{i}{a} + d + \dbmij{m}{b}{\bari}$ and $\dbmij{m'}{\barj}{j} = \dbmij{m}{\barj}{j}$.
      This case is not redundant.

    \item[2-2]  Suppose $\dbmij{m'}{i}{\bari} = \dbmij{m}{i}{a} + d + \dbmij{m}{b}{\bari}$ and 
    $\dbmij{m'}{\barj}{j} = \dbmij{m}{\barj}{a} + d + \dbmij{m}{b}{j}$.
Observe that if $x \leq y$ then $x \leq (x+y)/2 \leq y$ and if
 $y \leq x$ then $y \leq (x+y)/2 \leq x$. Thus $(x+y)/2 \geq \min(x, y)$ hence
\[    
\frac{(\dbmij{m}{i}{a} + d + \dbmij{m}{b}{\bari}) + (\dbmij{m}{\barj}{a} + d + \dbmij{m}{b}{j})}{2}
\geq
\min(\dbmij{m}{i}{a} + d + \dbmij{m}{b}{\bari}, \dbmij{m}{\barj}{a} + d + \dbmij{m}{b}{j})
\]
Thus this case is redundant.

    \item[2-3]  Suppose $\dbmij{m'}{i}{\bari} = \dbmij{m}{i}{a} + d + \dbmij{m}{b}{\bari} $ and $\dbmij{m'}{\barj}{j} = \dbmij{m}{\barj}{\barb} + d + \dbmij{m}{\bara}{a} + d + \dbmij{m}{b}{j}$. By coherence and using $(x+y)/2 \geq \min(x, y)$:
      \begin{align*}
        &\phantom{=} \frac{ (\dbmij{m}{i}{a} + d + \dbmij{m}{b}{\bari}) + (\dbmij{m}{\barj}{\barb} + d + \dbmij{m}{\bara}{a} + d + \dbmij{m}{b}{j})}{2}  \\
        &= \frac{ (\dbmij{m}{i}{a} + d + \dbmij{m}{b}{j}) + (\dbmij{m}{i}{\barb} +  d + \dbmij{m}{\bara}{a} + d + \dbmij{m}{b}{j}) }{2} \\
        &\geq \min(\dbmij{m}{i}{a} + d + \dbmij{m}{b}{j}, \dbmij{m}{i}{\barb} +  d + \dbmij{m}{\bara}{a} + d + \dbmij{m}{b}{j})
      \end{align*}
      Thus this case is redundant.
      
          \item[2-4]  Suppose $\dbmij{m'}{i}{\bari} = \dbmij{m}{i}{a} + d + \dbmij{m}{b}{\bari} $ and $\dbmij{m'}{\barj}{j} = \dbmij{m}{\barj}{a} + d + \dbmij{m}{b}{\barb} + d + \dbmij{m}{\bara}{j}$. By coherence and using $(x+y)/2 \geq \min(x, y)$:
      \begin{align*}
        &\phantom{=} \frac{ (\dbmij{m}{i}{a} + d + \dbmij{m}{b}{\bari}) + (\dbmij{m}{\barj}{a} + d + \dbmij{m}{b}{\barb} + d + \dbmij{m}{\bara}{j})}{2} \\
        &= \frac{(\dbmij{m}{i}{a} + d + \dbmij{m}{b}{\barb} + d + \dbmij{m}{\bara}{j}) + (\dbmij{m}{i}{\barb} + d + \dbmij{m}{\bara}{j})}{2} \\
        &\geq \min(\dbmij{m}{i}{a} + d + \dbmij{m}{b}{\barb} + d + \dbmij{m}{\bara}{j}, \dbmij{m}{i}{\barb} + d + \dbmij{m}{\bara}{j})
      \end{align*}
      Thus this case is redundant.
      
    \item[3-1]  Suppose $\dbmij{m'}{i}{\bari} = \dbmij{m}{i}{\barb} + d + \dbmij{m}{\bara}{a} + d + \dbmij{m}{b}{\bari}$ and $\dbmij{m'}{\barj}{j} = \dbmij{m}{\barj}{j}$.
Symmetric to 1-3.

    \item[3-2]  Suppose $\dbmij{m'}{i}{\bari} = \dbmij{m}{i}{\barb} + d + \dbmij{m}{\bara}{a} + d + \dbmij{m}{b}{\bari}$ and $\dbmij{m'}{\barj}{j} = \dbmij{m}{\barj}{a} + d + \dbmij{m}{b}{j}$.
Symmetric to case 2-3.
      
    \item[3-3]  Suppose $\dbmij{m'}{i}{\bari} = \dbmij{m}{i}{\barb} + d + \dbmij{m}{\bara}{a} + d + \dbmij{m}{b}{\bari}$ and $\dbmij{m'}{\barj}{j} = \dbmij{m}{\barj}{\barb} + d + \dbmij{m}{\bara}{a} + d + \dbmij{m}{b}{j}$. Then
      \begin{align*}
        &\phantom{=} \frac{ (\dbmij{m}{i}{\barb} + d + \dbmij{m}{\bara}{a} + d + \dbmij{m}{b}{\bari}) + (\dbmij{m}{\barj}{\barb} + d + \dbmij{m}{\bara}{a} + d + \dbmij{m}{b}{j})}{2} \\
        &=\frac{(\dbmij{m}{i}{\barb} + d + \dbmij{m}{\bara}{a} + d + \dbmij{m}{b}{j}) + (\dbmij{m}{i}{\barb} + d + \dbmij{m}{\bara}{a} + d + \dbmij{m}{b}{j})}{2} \\
        &=\frac{(\dbmij{m}{i}{\barb} + d + \dbmij{m}{\bara}{a} + d + \dbmij{m}{b}{j}) + (\dbmij{m}{i}{\barb} + d + \dbmij{m}{\bara}{a} + d + \dbmij{m}{b}{j})}{2} \\
        &= \dbmij{m}{i}{\barb} + d + \dbmij{m}{\bara}{a} + d + \dbmij{m}{b}{j}
      \end{align*}
      Thus this case is redundant.
      
    \item[3-4]  Suppose $\dbmij{m'}{i}{\bari} = \dbmij{m}{i}{\barb} + d + \dbmij{m}{\bara}{a} + d + \dbmij{m}{b}{\bari}$
      and $\dbmij{m'}{\barj}{j} = \dbmij{m}{\barj}{a} + d + \dbmij{m}{b}{\barb} + d + \dbmij{m}{\bara}{j}$. By coherence, strong closure and because $\dbmij{m}{b}{a} + d \geq 0$:
      \begin{align*}
       &\phantom{=} \frac{(\dbmij{m}{i}{\barb} + d + \dbmij{m}{\bara}{a} + d + \dbmij{m}{b}{\bari}) + (\dbmij{m}{\barj}{a} + d + \dbmij{m}{b}{\barb} + d + \dbmij{m}{\bara}{j})}{2} \\
        &=\frac{\dbmij{m}{\bara}{a} + \dbmij{m}{b}{\barb}}{2} + \frac{4d + 2\dbmij{m}{i}{\barb} + 2\dbmij{m}{\bara}{j}}{2} \geq \dbmij{m}{\bara}{\barb} + 2d + \dbmij{m}{i}{\barb} + \dbmij{m}{\bara}{j} \\
        &=  (\dbmij{m}{b}{a} + d) + \dbmij{m}{i}{\barb} + d + \dbmij{m}{\bara}{j} \geq \dbmij{m}{i}{\barb} + d + \dbmij{m}{\bara}{j}
      \end{align*}
      Thus this case is redundant.
      
    \item[4-1]  Suppose $\dbmij{m'}{i}{\bari} = \dbmij{m}{i}{a} + d + \dbmij{m}{b}{\barb} + d + \dbmij{m}{\bara}{\bari}$ and $\dbmij{m'}{\barj}{j} = \dbmij{m}{\barj}{j}$.
Symmetric to case 1-4.
      
          \item[4-2]  Suppose $\dbmij{m'}{i}{\bari} = \dbmij{m}{i}{a} + d + \dbmij{m}{b}{\barb} + d + \dbmij{m}{\bara}{\bari}$ and $\dbmij{m'}{\barj}{j} = \dbmij{m}{\barj}{a} + d + \dbmij{m}{b}{j}$.  Symmetric to case 2-4.
      
    \item[4-3]  Suppose $\dbmij{m'}{i}{\bari} = \dbmij{m}{i}{a} + d + \dbmij{m}{b}{\barb} + d + \dbmij{m}{\bara}{\bari}$ and $\dbmij{m'}{\barj}{j} = \dbmij{m}{\barj}{\barb} + d + \dbmij{m}{\bara}{a} + d + \dbmij{m}{b}{j}$. 
By coherence, strong closure and because $\dbmij{m}{\bara}{\barb} + d \geq 0$:
      \begin{align*}
       &\phantom{=} \frac{(\dbmij{m}{i}{a} + d + \dbmij{m}{b}{\barb} + d + \dbmij{m}{\bara}{\bari})  + (\dbmij{m}{\barj}{\barb} + d + \dbmij{m}{\bara}{a} + d + \dbmij{m}{b}{j})}{2} \\
        &=\frac{\dbmij{m}{\bara}{a} + \dbmij{m}{b}{\barb}}{2} + \frac{4d + 2\dbmij{m}{i}{a} + 2\dbmij{m}{b}{j}}{2} \geq \dbmij{m}{\bara}{\barb} + 2d + \dbmij{m}{i}{a} + \dbmij{m}{b}{j} \\
        &=  (\dbmij{m}{b}{a} + d) + \dbmij{m}{i}{a} + d + \dbmij{m}{b}{j} \geq \dbmij{m}{i}{a} + d + \dbmij{m}{b}{j}
      \end{align*}
      Thus this case is redundant.
      
          \item[4-4]  Suppose $\dbmij{m'}{i}{\bari} = \dbmij{m}{i}{a} + d + \dbmij{m}{b}{\barb} + d + \dbmij{m}{\bara}{\bari}$ and $\dbmij{m'}{\barj}{j} = \dbmij{m}{\barj}{a} + d + \dbmij{m}{b}{\barb} + d + \dbmij{m}{\bara}{j}$. By coherence:
      \begin{align*}
        &\phantom{=} \frac{ (\dbmij{m}{i}{a} + d + \dbmij{m}{b}{\barb} + d + \dbmij{m}{\bara}{\bari}) + (\dbmij{m}{\barj}{a} + d + \dbmij{m}{b}{\barb} + d + \dbmij{m}{\bara}{j})}{2} \\
        &= \frac{ (\dbmij{m}{i}{a} + d + \dbmij{m}{b}{\barb} + d + \dbmij{m}{\bara}{j}) + (\dbmij{m}{i}{a} + d + \dbmij{m}{b}{\barb} + d + \dbmij{m}{\bara}{j})}{2} \\
        & = (\dbmij{m}{i}{a} + d + \dbmij{m}{b}{\barb} + d + \dbmij{m}{\bara}{j}) 
      \end{align*}
      
    \end{itemize}
   
Now suppose that $\dbmij{m}{a}{b} + d < 0$. 
By corollary~\ref{cor:nonzero1} $\IncrementalClosure(\dbm{m},o)$ is not consistent and since $\dbm{m^{*}}  \leq \IncrementalClosure(\dbm{m},o)$ 
and $\dbm{m''}  \leq \IncrementalClosure(\dbm{m},o)$ 
it follows that both $\dbm{m^{*}} $ and $\dbm{m''} $ are not consistent as required.
    \qed
  \end{proof}

%\input{aziem}
  
%%% Local Variables:
%%% mode: latex
%%% TeX-master: "main"
%%% End:

%%% Local Variables:
%%% mode: latex
%%% TeX-master: "main"
%%% End:

% !TEX root = main.tex

\subsection{Proofs for Incremental Tight Closure}

\begin{proof}[for lemma~\ref{lemma:tightlyclosed}]
Suppose $\dbm{m'}$ is consistent.
By lemma~\ref{lemma:closed} it follows that $\dbm{m'}$ is closed.

We will now show that $\dbm{m'}$ is strongly closed i.e
\mbox{$\forall i,j. \dbmij{m'}{i}{j} \leq \dbmij{m'}{i}{\bari}/2 + \dbmij{m'}{\barj}{j} / 2$}.
 \begin{itemize}
\item Suppose $\dbmij{m'}{i}{\bari} = \dbmij{m}{i}{\bari}$ and $\dbmij{m'}{\barj}{j} = \dbmij{m}{\barj}{j}$. Then
  \begin{align*}
    \frac{\dbmij{m'}{i}{\bari}}{2} + \frac{\dbmij{m'}{\barj}{j}}{2} & = \frac{\dbmij{m}{i}{\bari}}{2} + \frac{\dbmij{m}{\barj}{j}}{2} \geq  \floorfrac{\dbmij{m}{i}{\bari}}{2} + \floorfrac{\dbmij{m}{\barj}{j}}{2}  \geq \dbmij{m'}{i}{j} 
  \end{align*}

\item Suppose $\dbmij{m'}{i}{\bari} \neq \dbmij{m}{i}{\bari}$ and $\dbmij{m'}{\barj}{j} = \dbmij{m}{\barj}{j}$.  Then
  \begin{align*}
    \frac{\dbmij{m'}{i}{\bari}}{2} + \frac{\dbmij{m'}{\barj}{j}}{2} & = \frac{\floor{\frac{\dbmij{m}{i}{\bari}}{2}} + \floor{\frac{\dbmij{m}{i}{\bari}}{2}}}{2} + \frac{\dbmij{m}{\barj}{j}}{2} & \\
    & = \floorfrac{\dbmij{m}{i}{\bari}}{2} + \frac{\dbmij{m}{\barj}{j}}{2} 
    \geq \floorfrac{\dbmij{m}{i}{\bari}}{2} + \floorfrac{\dbmij{m}{\barj}{j}}{2} 
    \geq  \dbmij{m}{i}{j} = \dbmij{m'}{i}{j} 
  \end{align*}

\item Suppose $\dbmij{m'}{i}{\bari} = \dbmij{m}{i}{\bari}$ and $\dbmij{m'}{\barj}{j} \neq \dbmij{m}{\barj}{j}$. Symmetric to the previous case.

\item Suppose $\dbmij{m'}{i}{\bari} \neq \dbmij{m}{i}{\bari}$ and $\dbmij{m'}{\barj}{j} \neq \dbmij{m}{\barj}{j}$.  Then

  \begin{align*}
    \frac{\dbmij{m'}{i}{\bari}}{2} + \frac{\dbmij{m'}{\barj}{j}}{2} & = \frac{\floorfrac{\dbmij{m}{i}{\bari}}{2} + \floorfrac{\dbmij{m}{i}{\bari}}{2}}{2} + \frac{\floorfrac{\dbmij{m}{\barj}{j}}{2} + \floorfrac{\dbmij{m}{\barj}{j}}{2}}{2} \\ & = \floorfrac{\dbmij{m}{i}{\bari}}{2} + \floorfrac{\dbmij{m}{\barj}{j}}{2} \geq  \dbmij{m'}{i}{j} 
  \end{align*}

\end{itemize}
Thus, if $\dbm{m'}$ is consistent, it is strongly closed. It remains
to show that $\forall i. \dbmij{m'}{i}{\bari}$ is even. Observe that:
\begin{align*}
   \dbmij{m'}{i}{\bari} &= \min(\dbmij{m}{i}{\bari}, \floorfrac{\dbmij{m}{i}{\bari}}{2}  +  \floorfrac{\dbmij{m}{i}{\bari}}{2} ) =  \min(\dbmij{m}{i}{\bari}, 2\floorfrac{\dbmij{m}{i}{\bari}}{2})
 \end{align*}
 \begin{itemize}
 \item Suppose $\dbmij{m}{i}{\bari}$ is even.
Then $2 \floorfrac{\dbmij{m}{i}{\bari}}{2} = \dbmij{m}{i}{\bari} = \dbmij{m'}{i}{\bari}$ which is even.
   
 \item Suppose $\dbmij{m}{i}{\bari}$ is odd.
Then $2 \floorfrac{\dbmij{m}{i}{\bari}}{2} = \dbmij{m}{i}{\bari} - 1 = \dbmij{m'}{i}{\bari}$ which is even. \qed
 \end{itemize}
 \end{proof}

\begin{proof}[for theorem~\ref{theorem:correct-inc-tight-closure}]
    We prove that $\forall i,j. \dbmij{m}{i}{j} = \dbmij{m'}{i}{j}$.
    Pick some $i,j$.
    \begin{itemize}
    \item Suppose $j = \bari$. Then
      \begin{align*}
        \dbmij{m^*}{i}{\bari} &= \min(\dbmij{m^{\ddag}}{i}{\bari}, \dbmij{m^{\ddag}}{i}{\bari} / 2 + \dbmij{m^{\ddag}}{i}{\bari}/2) = \dbmij{m^{\ddag}}{i}{\bari} = 2 \floor{\dbmij{m^{\dagger}}{i}{\bari}/2} \\
                              &= 2 \left \lfloor \min \left ( \begin{array}{@{}l@{}}
                                   \dbmij{m}{i}{\bari},\\
                                   \dbmij{m}{i}{a}+ d + \dbmij{m}{b}{\bari}, \\
                                   \dbmij{m}{i}{\bar{b}} + d + \dbmij{m}{\bar{a}}{\bari},\\
                                   \dbmij{m}{i}{\bar{b}} + d + \dbmij{m}{\bar{a}}{a} + d + \dbmij{m}{b}{\bari}, \\
                                   \dbmij{m}{i}{a} + d + \dbmij{m}{b}{\bar{b}} + d + \dbmij{m}{\bar{a}}{\bari} \\
                                 \end{array} \right ) / 2 \right \rfloor  = \dbmij{m'}{i}{\bari}
      \end{align*}

    \item Suppose $j \neq \bari$. Then
      \begin{align*}
        \dbmij{m^*}{i}{j} &= \min(\dbmij{m^{\ddag}}{i}{j}, \dbmij{m^{\ddag}}{i}{\bari} / 2 + \dbmij{m^{\ddag}}{\barj}{j}/2)  
                          = \min(\dbmij{m^{\dagger}}{i}{j}, \dbmij{m'}{i}{\bari} / 2 + \dbmij{m'}{\barj}{j}/2)  \\
                          &= \min \left ( \begin{array}{l}
                                            \dbmij{m}{i}{j}, \\
                                            \dbmij{m}{i}{a}+ d + \dbmij{m}{b}{j},\\
                                            \dbmij{m}{i}{\bar{b}} + d + \dbmij{m}{\bar{a}}{j},\\
                                            \dbmij{m}{i}{\bar{b}} + d + \dbmij{m}{\bar{a}}{a} + d + \dbmij{m}{b}{j}, \\
                                            \dbmij{m}{i}{a} + d + \dbmij{m}{b}{\bar{b}} + d + \dbmij{m}{\bar{a}}{j}, \\
                                            (\dbmij{m'}{i}{\bari} + \dbmij{m'}{\barj}{j})/2
                                          \end{array} \right ) = \dbmij{m'}{i}{j}
      \end{align*}
    \end{itemize}
    \qed
  \end{proof}

\begin{proof}[for proposition~\ref{prop:tighten-idempotence}]
  Let $\dbm{m''} = \Tighten(\dbm{m'})$. 
  \begin{itemize}
  \item Suppose $j \neq \bari$.  Then
      $\dbmij{m''}{i}{j} = \dbmij{m'}{i}{j}$.
    
  \item Suppose $j = \bari$.  Then
      $\dbmij{m''}{i}{\bari} = 2 \floorfrac{\dbmij{m'}{i}{\bari}}{2} = 2 \floorfrac{ 2 \floorfrac{\dbmij{m}{i}{\bari}}{2}}{2} =  2 \floorfrac{\dbmij{m}{i}{\bari}}{2}  = \dbmij{m'}{i}{\bari}$
  \end{itemize}
  \qed
\end{proof}

\begin{proof}[for proposition~\ref{prop:intmonotonicity}] \mbox{}
  \begin{itemize}
  \item Suppose $j \neq \bari$.  Then
      $\Tighten(\dbmij{m^2}{i}{j}) = \dbmij{m^2}{i}{j} \geq \dbmij{m^1}{i}{j} = \Tighten(\dbmij{m^1}{i}{j})$.
    
  \item Suppose $j = \bari$.  Then
      $\Tighten(\dbmij{m^2}{i}{\bari}) = 2 \floorfrac{\dbmij{m^2}{i}{\bari}}{2} \geq 2 \floorfrac{\dbmij{m^1}{i}{\bari}}{2} =  \Tighten(\dbmij{m^1}{i}{\bari})$
  \end{itemize}
  \qed
  \end{proof}

\begin{proof}[for proposition~\ref{prop:tightenreductiveness}] \mbox{}
  \begin{itemize}

  \item Suppose $j = \bari$. Then
    $\dbmij{m'}{i}{j} = \dbmij{m'}{i}{\bari} \leq 2
    \floorfrac{\dbmij{m}{i}{\bari}}{2} = \dbmij{m}{i}{\bari} =
    \dbmij{m}{i}{j}$.

  \item Suppose $j \neq \bari$. Then
    $\dbmij{m'}{i}{j} = \dbmij{m}{i}{j}$.
  \end{itemize}
  \qed
\end{proof}

\begin{proof}[for proposition~\ref{prop:tightencoherence}] \mbox{}
  \begin{itemize}
  \item Suppose $j = \bari$. Then
    $\dbmij{m'}{\barj}{\bari} = 2
    \floorfrac{\dbmij{m}{\barj}{\bari}}{2} = 2
    \floorfrac{\dbmij{m}{i}{j}}{2} = \dbmij{m'}{i}{j}$.
    
  \item Suppose $j \neq \bari$. Then
    $\dbmij{m'}{\barj}{\bari} = \dbmij{m}{\barj}{\bari} =
    \dbmij{m}{i}{j} = \dbmij{m'}{i}{j}$.
  \end{itemize}
  \qed
\end{proof}

%\marginnote{What about mixed systems consisting of integer and real variables? This should be simple...?}
%%% Local Variables:
%%% mode: latex
%%% TeX-master: "main"
%%% End:

% !TEX root = main.tex

\subsection{Proofs for In-place Update}

\begin{proof}[for corollary~\ref{corollary:idempotence-facts}]
By Proposition~\ref{lemma-idempotence} it follows
$\dbm{m}' = \IncrementalClosure(\dbm{m}', o)$.  The result then follows
from Theorem~\ref{thm:incrclosureclosed}.
\end{proof}

\begin{proof}[for theorem~\ref{lemma:insitu}] Suppose $\dbm{m'}$ is consistent.

Let $k = 0$.  
It vacuously follows that $\forall 0 \leq \ell < k. \dbm{m^{k}_{\rho^{-1}(\ell)}} = \dbm{m'_{\rho^{-1}(\ell)}}$.
Moreover $\forall k \leq \ell < 4n^2.  \dbm{m^{k}_{\rho^{-1}(\ell)}} = \dbm{m_{\rho^{-1}(\ell)}}$  since $\dbm{m}^{0} = \dbm{m}$.

Now let $k > 0$ and suppose $\rho(i,j) = k$ and consider
\begin{align*}
  \dbmij{m^{k+1}}{i}{j} &= \min \left ( 
                         \begin{array}{l}
                           \dbmij{m^k}{i}{j} \\
                           \dbmij{m^k}{i}{a} + d + \dbmij{m^k}{b}{j} \\
                           \dbmij{m^k}{i}{\barb} + d + \dbmij{m^k}{\bara}{j} \\
                           \dbmij{m^k}{i}{a} + d + \dbmij{m^k}{b}{\barb} + d + \dbmij{m^k}{\bara}{j} \\
                           \dbmij{m^k}{i}{\barb} + d + \dbmij{m^k}{\bara}{a} + d + \dbmij{m^k}{b}{j} 
                         \end{array} \right ) 
\end{align*}
If $\rho^{-1}(i, a) < k$ then $\dbmij{m^{k}}{i}{a} = \dbmij{m'}{i}{a}$ whereas
if $\rho^{-1}(i, a) \geq k$ then $\dbmij{m^{k}}{i}{a} = \dbmij{m}{i}{a} \geq \dbmij{m'}{i}{a}$.
Thus $\dbmij{m^{k}}{i}{a} \geq \dbmij{m'}{i}{a}$ and likewise
$\dbmij{m^{k}}{b}{j} \geq \dbmij{m'}{b}{j}$.
By Corollary~\ref{corollary:idempotence-facts} it follows
$\dbmij{m^k}{i}{a} + d + \dbmij{m^k}{b}{j} \geq \dbmij{m'}{i}{a} + d + \dbmij{m'}{b}{j} \geq  \dbmij{m'}{i}{j}$.
By a similar argument
$\dbmij{m^k}{i}{\barb} + d + \dbmij{m^k}{\bara}{j} \geq  \dbmij{m'}{i}{j}$,
$\dbmij{m^k}{i}{a} + d + \dbmij{m^k}{b}{\barb} + d + \dbmij{m^k}{\bara}{j} \geq  \dbmij{m'}{i}{j}$
and likewise
$\dbmij{m^k}{i}{\barb} + d + \dbmij{m^k}{\bara}{a} + d + \dbmij{m^k}{b}{j} \geq  \dbmij{m'}{i}{j}$.

Since $\dbmij{m^k}{i}{j} = \dbmij{m}{i}{j} \geq \dbmij{m'}{i}{j}$ it follows $\dbmij{m^{k+1}}{i}{j} \geq  \dbmij{m'}{i}{j}$.
But $\dbm{m^k} \leq \dbm{m}$ and by Proposition~\ref{lemma-monotonicity} 
$\dbmij{m^{k+1}}{i}{j} \leq \dbmij{m'}{i}{j}$ hence $\dbmij{m^{k+1}}{i}{j} = \dbmij{m'}{i}{j}$.
Hence it follows
$\forall 0 \leq \ell < k + 1. \dbm{m^{k+1}_{\rho^{-1}(\ell)}} = \dbm{m'_{\rho^{-1}(\ell)}}$.
Moreover $\forall k + 1 \leq \ell < 4n^2  . \dbm{m^{k+1}_{\rho^{-1}(\ell)}} = \dbm{m_{\rho^{-1}(\ell)}}$.

Suppose $\dbm{m'}$ is inconsistent hence
$\dbmij{m'}{i}{i} < 0$. Put $k = \rho(i,i)$. 
But $\dbm{m^k} \leq \dbm{m}$ and by Proposition~\ref{lemma-monotonicity} 
$\dbmij{m^{4n^2}}{i}{i} = \dbmij{m^{k+1}}{i}{i} \leq \dbmij{m'}{i}{i} < 0$ as required.
\qed
\end{proof}

\begin{proof}[for lemma~\ref{lemma-str-close-idempotent}]  Suppose $\dbm{m'}$ is consistent.  By Proposition~\ref{lemma:coherence} $\dbm{m'}$ is coherent.  
%Then it is sufficient to show:
%$\dbmij{m''}{i}{j} \leq \dbmij{m''}{i}{a} + d + \dbmij{m''}{b}{j}$,
%\mbox{$\dbmij{m''}{i}{j} \leq \dbmij{m''}{i}{\bar{b}} + d + \dbmij{m''}{\bar{a}}{j}$},
%\mbox{$\dbmij{m''}{i}{j} \leq \dbmij{m''}{i}{\bar{b}} + d + \dbmij{m''}{\bar{a}}{a} + d + \dbmij{m''}{b}{j}$}
%and
%\mbox{$\dbmij{m''}{i}{j} \leq \dbmij{m''}{i}{a} + d + \dbmij{m''}{b}{\bar{b}} + d + \dbmij{m''}{\bar{a}}{j}$}.
\begin{enumerate}

\item To show $\dbmij{m''}{i}{j} \leq \dbmij{m''}{i}{a} + d + \dbmij{m''}{b}{j}$.
\begin{itemize}

\item Suppose $\dbmij{m''}{i}{a} = \dbmij{m'}{i}{a}$ and
$\dbmij{m''}{b}{j} = \dbmij{m'}{b}{j}$. Because $\dbm{m'}$ is consistent
by Corollary~\ref{corollary:idempotence-facts} it follows:
\[
\dbmij{m''}{i}{a} + d + \dbmij{m''}{b}{j} = \dbmij{m'}{i}{a} + d + \dbmij{m'}{b}{j} \geq \dbmij{m'}{i}{j}  \geq \dbmij{m''}{i}{j}
\]

\item Suppose $\dbmij{m''}{i}{a} = (\dbmij{m'}{i}{\bari} + \dbmij{m'}{\bara}{a})/2$
and $\dbmij{m''}{b}{j} = \dbmij{m'}{b}{j}$. Because $\dbm{m'}$ is consistent
by Corollary~\ref{corollary:idempotence-facts} it follows
$\dbmij{m'}{\bara}{j} \leq \dbmij{m'}{\bara}{a} + d + \dbmij{m'}{b}{j}$
and
$\dbmij{m'}{\barj}{j} \leq \dbmij{m'}{\barj}{a} + d + \dbmij{m'}{b}{j}$.
Hence
\begin{align*}
\dbmij{m''}{i}{a} + d + \dbmij{m''}{b}{j}  &=  (\dbmij{m'}{i}{\bari} + \dbmij{m'}{\bara}{a}+ 2d + 2\dbmij{m'}{b}{j})/2 \\
  & \geq (\dbmij{m'}{i}{\bari} + \dbmij{m'}{\bara}{j} + d + \dbmij{m'}{b}{j} )/2 \\
  & \geq (\dbmij{m'}{i}{\bari} + \dbmij{m'}{\barj}{j})/2 \geq \dbmij{m''}{i}{j}
\end{align*}

\item Suppose $\dbmij{m''}{i}{a} = \dbmij{m'}{i}{a}$
and $\dbmij{m''}{b}{j} = (\dbmij{m'}{b}{\barb} + \dbmij{m'}{\barj}{j})/2$. Symmetric to the previous case.

\item Suppose $\dbmij{m''}{i}{a} = (\dbmij{m'}{i}{\bari} + \dbmij{m'}{\bara}{a})/2$
and $\dbmij{m''}{b}{j} = (\dbmij{m'}{b}{\barb} + \dbmij{m'}{\barj}{j})/2$.
Because $\dbm{m'}$ is consistent
by Corollary~\ref{corollary:idempotence-facts} it follows
$\dbmij{m'}{\bara}{\barb} \leq \dbmij{m'}{\bara}{a} + d + \dbmij{m'}{b}{\barb}$
and
$\dbmij{m'}{i}{a} \leq \dbmij{m'}{i}{a} + d + \dbmij{m'}{b}{a}$
thus
$0 \leq d + \dbmij{m'}{b}{a}$. Hence
\begin{align*}
\dbmij{m''}{i}{a} + d + \dbmij{m''}{b}{j} &=  (\dbmij{m'}{i}{\bari} + \dbmij{m'}{\bara}{a} + 2d + \dbmij{m'}{b}{\barb} + \dbmij{m'}{\barj}{j})/2 \\
&\geq (\dbmij{m'}{i}{\bari} + \dbmij{m'}{\bara}{\barb} + d + \dbmij{m'}{\barj}{j})/2 \\ 
&\geq (\dbmij{m'}{i}{\bari} + \dbmij{m'}{\barj}{j})/2 \geq \dbmij{m'}{i}{j}
\end{align*}
\end{itemize}
\item To show $\dbmij{m''}{i}{j} \leq \dbmij{m''}{i}{\bar{b}} + d + \dbmij{m''}{\bar{a}}{j}$. Analogous to the previous case.

\item To show $\dbmij{m''}{i}{j} \leq \dbmij{m''}{i}{\bar{b}} + d + \dbmij{m''}{\bar{a}}{a} + d + \dbmij{m''}{b}{j}$.
  \begin{itemize}

\item Suppose $\dbmij{m''}{i}{\barb} = \dbmij{m'}{i}{\barb}$ and $\dbmij{m''}{b}{j} = \dbmij{m'}{b}{j}$.
Since $\dbmij{m''}{\bar{a}}{a} = \dbmij{m'}{\bar{a}}{a}$ 
and because $\dbm{m'}$ is consistent
by Corollary~\ref{corollary:idempotence-facts} it follows
\[
\begin{array}{lll}
\multicolumn{3}{l}{\dbmij{m''}{i}{\bar{b}} + d + \dbmij{m''}{\bar{a}}{a} + d + \dbmij{m''}{b}{j}} \\
\qquad & = & 
\dbmij{m'}{i}{\bar{b}} + d + \dbmij{m'}{\bar{a}}{a} + d + \dbmij{m'}{b}{j} \geq \dbmij{m'}{i}{j} \geq \dbmij{m''}{i}{j}
\end{array}
\]

  \item Suppose $\dbmij{m''}{i}{\barb} = (\dbmij{m}{i}{\bari} + \dbmij{m}{b}{\barb})/2$ and $\dbmij{m''}{b}{j}=\dbmij{m'}{b}{j}$.
Because $\dbm{m'}$ is consistent
by Corollary~\ref{corollary:idempotence-facts} it follows
$\dbmij{m'}{\bara}{j} \leq \dbmij{m'}{\bara}{a} + d + \dbmij{m'}{b}{j}$,
\mbox{$\dbmij{m'}{\barj}{j} \leq \dbmij{m'}{\barj}{a} + d + \dbmij{m'}{b}{j}$},
\mbox{$\dbmij{m'}{b}{a} \leq \dbmij{m'}{b}{\barb} + d + \dbmij{m'}{\bara}{a}$}
and
\mbox{$0 \leq d + \dbmij{m'}{b}{a}$}.  Therefore
\[
\begin{array}{lll}
\multicolumn{3}{l}{\dbmij{m''}{i}{\bar{b}} + d + \dbmij{m''}{\bar{a}}{a} + d + \dbmij{m''}{b}{j}} \\
\qquad & = & (\dbmij{m'}{i}{\bari} + \dbmij{m'}{b}{\barb} + 2\dbmij{m'}{\bara}{a} + 4d + 2\dbmij{m'}{b}{j})/2 \\
& \geq & (\dbmij{m'}{i}{\bari} + \dbmij{m'}{b}{\barb} + \dbmij{m'}{\bara}{a} + 3d + \dbmij{m'}{\bara}{j} + \dbmij{m'}{b}{j})/2 \\
& \geq & (\dbmij{m'}{i}{\bari} + \dbmij{m'}{b}{\barb} + \dbmij{m'}{\bara}{a} + 2d + \dbmij{m'}{\barj}{j})/2 \\
& \geq & (\dbmij{m'}{i}{\bari} + \dbmij{m'}{b}{a} + d + \dbmij{m'}{\barj}{j})/2 \\
& \geq & (\dbmij{m'}{i}{\bari} + \dbmij{m'}{\barj}{j})/2 \geq \dbmij{m''}{i}{j}
\end{array}
\]

  \item Suppose $\dbmij{m''}{i}{\barb} = \dbmij{m'}{i}{\barb}$ and $\dbmij{m''}{b}{j}=(\dbmij{m'}{b}{\barb} + \dbmij{m'}{\barj}{j})/2$. Symmetric to the previous case.

  \item Suppose $\dbmij{m''}{i}{\barb} = (\dbmij{m}{i}{\bari} + \dbmij{m}{b}{\barb})/2$ and $\dbmij{m''}{b}{j}=(\dbmij{m'}{b}{\barb} + \dbmij{m'}{\barj}{j})/2$.
Because $\dbm{m'}$ is consistent
by Corollary~\ref{corollary:idempotence-facts} it follows
\mbox{$\dbmij{m'}{b}{a} \leq \dbmij{m'}{b}{\barb} + d + \dbmij{m'}{\bara}{a}$}
and
\mbox{$0 \leq d + \dbmij{m'}{b}{a}$}.  Therefore    
\[
\begin{array}{lll}
\multicolumn{3}{l}{\dbmij{m''}{i}{\bar{b}} + d + \dbmij{m''}{\bar{a}}{a} + d + \dbmij{m''}{b}{j}} \\
\qquad & =      & (\dbmij{m'}{i}{\bari} + \dbmij{m'}{b}{\barb} + 4d + 2\dbmij{m''}{\bara}{a} + 2\dbmij{m'}{b}{\barb} + \dbmij{m'}{\barj}{j})/2 \\
       &\geq & (\dbmij{m'}{i}{\bari} + 2\dbmij{m'}{b}{a} + 2d + \dbmij{m'}{\barj}{j})/2 \\
      &\geq & (\dbmij{m'}{i}{\bari} + \dbmij{m'}{\barj}{j})/2 \geq \dbmij{m''}{i}{j}
\end{array}
\]
  \end{itemize}

\item To show 
$\dbmij{m''}{i}{j} \leq \dbmij{m''}{i}{a} + d + \dbmij{m''}{b}{\bar{b}} + d + \dbmij{m''}{\bar{a}}{j}$. 
Analogous to the previous case.

\end{enumerate}
It therefore follows that $\dbm{m'''} = \dbm{m''}$.
Now suppose $\dbm{m'}$ is not consistent.  Hence $\dbm{m''}$ is not consistent thus $\dbm{m'''}$ is not consistent.
\qed
\end{proof}

\begin{proof}[for theorem~\ref{lemma:stronginsitu}] Suppose $\dbm{m'}$ is consistent.

  Let $k = 0$. It vacuously follows that
  $\forall 0 \leq \ell < k. \dbm{m^{k}_{\rho^{-1}(\ell)}} =
  \dbm{m''_{\rho^{-1}(\ell)}}$. Moreover
  $\forall k \leq \ell < 4n^2. \dbm{m^{k}_{\rho^{-1}(\ell)}} =
  \dbm{m_{\rho^{-1}(\ell)}}$ since $\dbm{m}^{0} = \dbm{m}$.

Suppose $0 < k$ and $\rho(i,j) = k$.  Now suppose $j = \bari$. Then
  \[
    \dbmij{m^{k+1}}{i}{\bari} =  \min \left (\begin{array}{l}
                                           \dbmij{m^{k}}{i}{\bari}, \\
                                           \dbmij{m^{k}}{i}{a} + d + \dbmij{m^{k}}{b}{\bari}, \\
                                           \dbmij{m^{k}}{i}{\bar{b}} + d + \dbmij{m^{k}}{\bar{a}}{\bari},\\
                                           \dbmij{m^{k}}{i}{\bar{b}} + d + \dbmij{m^{k}}{\bar{a}}{a} + d + \dbmij{m^{k}}{b}{\bari}, \\
                                           \dbmij{m^{k}}{i}{a} + d + \dbmij{m^{k}}{b}{\bar{b}} + d + \dbmij{m^{k}}{\bar{a}}{\bari} \\
                                         \end{array} \right ) 
  \]
If $\rho(i,a) < k$ then
  $\dbmij{m^{k}}{i}{j} = \dbmij{m''}{i}{a}$ otherwise
  $\rho(i,a) \geq k$ then
  $\dbmij{m^{k}}{i}{a} = \dbmij{m}{i}{a} \geq \dbmij{m''}{i}{a}$ which
  implies $\dbmij{m^{k}}{i}{a} \geq \dbmij{m''}{i}{a}$. 
  Likewise $\dbmij{m^{k}}{b}{\bari} \geq \dbmij{m''}{b}{\bari}$. 
  By
  Lemma~\ref{lemma-str-close-idempotent} and
  Corollary~\ref{corollary:idempotence-facts} it follows
  $\dbmij{m^k}{i}{a} + d + \dbmij{m^k}{b}{j} \geq \dbmij{m''}{i}{a} +
  d + \dbmij{m''}{b}{j} \geq \dbmij{m''}{i}{j}$. By a similar argument
  $\dbmij{m^k}{i}{\barb} + d + \dbmij{m^k}{\bara}{j} \geq
  \dbmij{m''}{i}{j}$,
  $\dbmij{m^k}{i}{a} + d + \dbmij{m^k}{b}{\barb} + d +
  \dbmij{m^k}{\bara}{j} \geq \dbmij{m''}{i}{j}$ and likewise
  $\dbmij{m^k}{i}{\barb} + d + \dbmij{m^k}{\bara}{a} + d +
  \dbmij{m^k}{b}{j} \geq \dbmij{m''}{i}{j}$. Thus
  $\dbmij{m^{k+1}}{i}{j} \geq \dbmij{m''}{i}{j}$. 
  Now to show $\dbmij{m''}{i}{j} \geq \dbmij{m^{k+1}}{i}{j}$. Observe
\begin{align*}
  \dbmij{m''}{i}{\bari}  = \dbmij{m'}{i}{\bari} 
                    &= \min \left (\begin{array}{l}
                                                  \dbmij{m}{i}{j}, \\
                                                  \dbmij{m}{i}{a} + d + \dbmij{m}{b}{j}, \\
                                                  \dbmij{m}{i}{\bar{b}} + d + \dbmij{m}{\bar{a}}{j},\\
                                                  \dbmij{m}{i}{\bar{b}} + d + \dbmij{m}{\bar{a}}{a} + d + \dbmij{m}{b}{j}, \\
                                                  \dbmij{m}{i}{a} + d + \dbmij{m}{b}{\bar{b}} + d + \dbmij{m}{\bar{a}}{j} \\
                                                \end{array} \right ) \\
                    &\geq \min \left (\begin{array}{l}
                                                  \dbmij{m^{k}}{i}{j}, \\
                                                  \dbmij{m^{k}}{i}{a} + d + \dbmij{m^{k}}{b}{j}, \\
                                                  \dbmij{m^{k}}{i}{\bar{b}} + d + \dbmij{m^{k}}{\bar{a}}{j},\\
                                                  \dbmij{m^{k}}{i}{\bar{b}} + d + \dbmij{m^{k}}{\bar{a}}{a} + d + \dbmij{m^{k}}{b}{j}, \\
                                                  \dbmij{m^{k}}{i}{a} + d + \dbmij{m^{k}}{b}{\bar{b}} + d + \dbmij{m^{k}}{\bar{a}}{j} \\
                                                \end{array} \right ) = \dbmij{m^{k+1}}{i}{\bari}  
\end{align*}
  Hence
  $\forall 0 \leq \ell < k. \dbm{m}^{k}_{\rho^{-1}(\ell)} =
  \dbm{m''}_{\rho^{-1}(\ell)}$. Moreover
  $\forall k + 1 \leq \ell < 4n^2 . \dbm{m^{k+1}_{\rho^{-1}(\ell)}} =
  \dbm{m_{\rho^{-1}(\ell)}}$ follows from the inductive hypothesis and
  the definition of $\dbmij{m^{k+1}}{i}{j}$.

  Now suppose that $j \neq \bari$.  Then  $2n < \rho(i,j)$  and
  consider
\begin{align*}
  \dbmij{m^{k+1}}{i}{j} &= \min \left ( 
                         \begin{array}{l}
                           \dbmij{m^k}{i}{j} \\
                           \dbmij{m^k}{i}{a} + d + \dbmij{m^k}{b}{j} \\
                           \dbmij{m^k}{i}{\barb} + d + \dbmij{m^k}{\bara}{j} \\
                           \dbmij{m^k}{i}{a} + d + \dbmij{m^k}{b}{\barb} + d + \dbmij{m^k}{\bara}{j} \\
                           \dbmij{m^k}{i}{\barb} + d + \dbmij{m^k}{\bara}{a} + d + \dbmij{m^k}{b}{j}, \\
                           (\dbmij{m^k}{i}{\bari} + \dbmij{m^k}{\barj}{j})/2 
                         \end{array} \right ) 
\end{align*}
% Repeating the argument  above it follows 
% $\dbmij{m^k}{i}{a} + d + \dbmij{m^k}{b}{j} \geq \dbmij{m''}{i}{j}$,
% $\dbmij{m^k}{i}{\barb} + d + \dbmij{m^k}{\bara}{j} \geq \dbmij{m''}{i}{j}$,
% $\dbmij{m^k}{i}{a} + d + \dbmij{m^k}{b}{\barb} + d + \dbmij{m^k}{\bara}{j} \geq \dbmij{m''}{i}{j}$ and 
% $\dbmij{m^k}{i}{\barb} + d + \dbmij{m^k}{\bara}{a} + d + \dbmij{m^k}{b}{j} \geq \dbmij{m''}{i}{j}$.
Notice that
$\dbmij{m^{k}}{i}{\bari} + \dbmij{m^{k}}{\barj}{j}/2 =
\dbmij{m''}{i}{\bari} + \dbmij{m''}{\barj}{j}/2$, since
$\rho(i,\bari) < 2n \leq \rho(i,j) = k $ and
$\rho(\barj,j) < \rho(i,j) = k$. By

Lemma~\ref{lemma-stengthen-idempotent},
$\dbmij{m''}{i}{\bari} + \dbmij{m''}{\barj}{j}/2 \geq
\dbmij{m''}{i}{j}$.
Repeating the argument above it follows that $\dbmij{m^{k}}{i}{j} \geq \dbmij{m''}{i}{j}$
Hence
$\forall 0 \leq \ell < k. \dbm{m}^{k}_{\rho^{-1}(\ell)} =
\dbm{m''}_{\rho^{-1}(\ell)}$.
% Moreover
% $(\dbmij{m'}{i}{\bari} + \dbmij{m'}{\barj}{j})/2 \geq
% \min(\dbmij{m'}{i}{j}, (\dbmij{m'}{i}{\bari} +
% \dbmij{m'}{\barj}{j})/2) = \dbmij{m''}{i}{j}$.
Now to show $\dbmij{m''}{i}{j} \geq \dbmij{m^{k+1}}{i}{j}$. Observe that:
\begin{align*}
  \dbmij{m''}{i}{j} &= \min(\dbmij{m'}{i}{j}, \frac{\dbmij{m}{i}{\bari} + \dbmij{m}{\barj}{j}}{2}) \\ 
                    &= \min \left ( \min \left (\begin{array}{l}
                                                  \dbmij{m}{i}{j}, \\
                                                  \dbmij{m}{i}{a} + d + \dbmij{m}{b}{j}, \\
                                                  \dbmij{m}{i}{\bar{b}} + d + \dbmij{m}{\bar{a}}{j},\\
                                                  \dbmij{m}{i}{\bar{b}} + d + \dbmij{m}{\bar{a}}{a} + d + \dbmij{m}{b}{j}, \\
                                                  \dbmij{m}{i}{a} + d + \dbmij{m}{b}{\bar{b}} + d + \dbmij{m}{\bar{a}}{j} \\
                                                \end{array} \right ), \frac{\dbmij{m}{i}{\bari} + \dbmij{m}{\barj}{j}}{2} \right) \\
                    &\geq \min \left ( \min \left (\begin{array}{l}
                                                  \dbmij{m^{k}}{i}{j}, \\
                                                  \dbmij{m^{k}}{i}{a} + d + \dbmij{m^{k}}{b}{j}, \\
                                                  \dbmij{m^{k}}{i}{\bar{b}} + d + \dbmij{m^{k}}{\bar{a}}{j},\\
                                                  \dbmij{m^{k}}{i}{\bar{b}} + d + \dbmij{m^{k}}{\bar{a}}{a} + d + \dbmij{m^{k}}{b}{j}, \\
                                                  \dbmij{m^{k}}{i}{a} + d + \dbmij{m^{k}}{b}{\bar{b}} + d + \dbmij{m^{k}}{\bar{a}}{j} \\
                                                \end{array} \right ), \frac{\dbmij{m'}{i}{\bari} + \dbmij{m'}{\barj}{j}}{2} \right) = \dbmij{m^{k+1}}{i}{j} \\ 
\end{align*}
\noindent Hence it follows
$\forall 0 \leq \ell < k + 1. \dbm{m^{k+1}_{\rho^{-1}(\ell)}} = \dbm{m''_{\rho^{-1}(\ell)}}$. Note
$\forall k + 1 \leq \ell < 4n^2 . \dbm{m^{k+1}_{\rho^{-1}(\ell)}} =
\dbm{m_{\rho^{-1}(\ell)}}$ follows from the inductive hypothesis and the definition of $\dbmij{m^{k+1}}{i}{j}$. 

Suppose $\dbm{m'}$ is inconsistent hence
$\dbmij{m'}{i}{i} < 0$. Put $k = \rho(i,i)$. 
But $\dbm{m^k} \leq \dbm{m}$ and by Proposition~\ref{lemma-monotonicity} 
$\dbmij{m^{4n^2}}{i}{i} = \dbmij{m^{k+1}}{i}{i} \leq \dbmij{m'}{i}{i} < 0$ as required.
\qed
\end{proof}

  \begin{proof}[for lemma~\ref{lemma:tightencloseidempotent}]
    Suppose $\dbm{m'''}$ is consistent.     
    By Proposition \ref{prop:strongreductive} $\dbm{m}''' \leq \dbm{m}''$
    and by Proposition~\ref{prop:tightenreductiveness}  $\dbm{m}'' \leq \dbm{m}'$ thus $\dbm{m}'$
    is consistent. By Theorem~\ref{thm:incrclosureclosed} $\dbm{m}'$ is closed
    hence $\dbmij{m'}{a}{a} = \dbmij{m'}{b}{b} = \dbmij{m'}{\bara}{\bara} = \dbmij{m'}{\barb}{\barb} = 0$.
    By Corollary~\ref{corollary:idempotence-facts} it follows that
    $\dbmij{m'}{a}{b} \leq \dbmij{m'}{a}{a} + d + \dbmij{m'}{b}{b} = d$
    and 
    $\dbmij{m'}{\barb}{\bara} \leq \dbmij{m'}{\barb}{\barb} + d + \dbmij{m'}{\bara}{\bara} = d$
    therefore
    $\dbmij{m'''}{a}{b} \leq d$
    and
    $\dbmij{m'''}{\barb}{\bara} \leq d$.
    By Proposition~\ref{lemma:coherence} $\dbm{m'}$ is coherent hence $\dbm{m'''}$ is closed by Lemma~\ref{thm:strongclosurestrengthen}.
    \begin{itemize}
    \item To show $\dbmij{m'''}{i}{a} + d + \dbmij{m'''}{b}{j} \geq \dbmij{m'''}{i}{j}$. Since $\dbm{m'''}$ is closed it follows
      \begin{align*}
        \dbmij{m'''}{i}{a} + d + \dbmij{m'''}{b}{j} & \geq \dbmij{m'''}{i}{a} + \dbmij{m'''}{a}{b} + \dbmij{m'''}{b}{j} 
                                                      \geq \dbmij{m'''}{i}{b} + \dbmij{m'''}{b}{j} 
                                                      \geq \dbmij{m'''}{i}{j} 
      \end{align*}
      
    \item To show $\dbmij{m'''}{i}{\barb} + d + \dbmij{m'''}{\bara}{j} \geq \dbmij{m'''}{i}{j}$.
      Since $\dbm{m'''}$ is closed it follows
      \begin{align*}
        \dbmij{m'''}{i}{\barb} + d + \dbmij{m'''}{\bara}{j} 
        & \geq \dbmij{m'''}{i}{\barb} + \dbmij{m'''}{\barb}{\bara} + \dbmij{m'''}{\bara}{j} 
          \geq \dbmij{m'''}{i}{\bara} + \dbmij{m'''}{\bara}{j}  \geq \dbmij{m'''}{i}{j} 
      \end{align*}
      
    \item To show $\dbmij{m'''}{i}{a} + d + \dbmij{m'''}{b}{\barb} + d + \dbmij{m'''}{\bara}{j} \geq \dbmij{m'''}{i}{j}$.
      Since $\dbm{m'''}$ is closed 
      \begin{align*}
        \dbmij{m'''}{i}{a} + d + \dbmij{m'''}{b}{\barb} + d + \dbmij{m'''}{\bara}{j} 
        & \geq \dbmij{m'''}{i}{a} + \dbmij{m'''}{a}{b} + \dbmij{m'''}{b}{\barb} + \dbmij{m'''}{\barb}{\bara} + \dbmij{m'''}{\bara}{j} \\
        & \geq \dbmij{m'''}{i}{b} + \dbmij{m'''}{b}{\barb} + \dbmij{m'''}{\barb}{\bara} + \dbmij{m'''}{\bara}{j} \\
        & \geq \dbmij{m'''}{i}{\barb} + \dbmij{m'''}{\barb}{\bara} + \dbmij{m'''}{\bara}{j} \\
        & \geq \dbmij{m'''}{i}{\bara} + \dbmij{m'''}{\bara}{j} \geq \dbmij{m'''}{i}{j} 
      \end{align*}
      
    \item To show $\dbmij{m'''}{i}{\barb} + d + \dbmij{m'''}{\bara}{a} + d + \dbmij{m'''}{b}{j} \geq \dbmij{m'''}{i}{j}$.
      Since $\dbm{m'''}$ is closed 
      \begin{align*}
        \dbmij{m'''}{i}{\barb} + d + \dbmij{m'''}{\bara}{a} + d + \dbmij{m'''}{b}{j} 
        & \geq \dbmij{m'''}{i}{\barb} + \dbmij{m'''}{\barb}{\bara} + \dbmij{m'''}{\bara}{a} + \dbmij{m'''}{a}{b} + \dbmij{m'''}{b}{j} \\
        & \geq \dbmij{m'''}{i}{\bara} + \dbmij{m'''}{\bara}{a} + \dbmij{m'''}{a}{b} + \dbmij{m'''}{b}{j} \\
        & \geq \dbmij{m'''}{i}{a} + \dbmij{m'''}{a}{b} + \dbmij{m'''}{b}{j} \\
        & \geq \dbmij{m'''}{i}{b} + \dbmij{m'''}{b}{j} \geq \dbmij{m'''}{i}{j}
      \end{align*}

    \end{itemize}
    By Proposition~\ref{lemma-idempotence} it follows that $\dbm{m^*} = \dbm{m'''}$. \qed
\end{proof}

\begin{proof}[for theorem~\ref{lemma:tightinsitu}] Suppose $\dbm{m'}$ is consistent.

Let $k = 0$.  
It vacuously follows that $\forall 0 \leq \ell < k. \dbm{m^{k}_{\rho^{-1}(\ell)}} = \dbm{m''_{\rho^{-1}(\ell)}}$.
Moreover $\forall k \leq \ell < 4n^2.  \dbm{m^{k}_{\rho^{-1}(\ell)}} = \dbm{m_{\rho^{-1}(\ell)}}$  since $\dbm{m}^{0} = \dbm{m}$.
Now let $k > 0$ and suppose $\rho(i,j) = k$.  Now suppose that $j = \bari$.  Then
\[
  \dbmij{m^{k+1}}{i}{j} = 2 \left\lfloor \min \left (
      \begin{array}{l}
        \dbmij{m^{k}}{i}{\bari}, \\
        \dbmij{m^{k}}{i}{a} + d + \dbmij{m^{k}}{b}{\bari}, \\
        \dbmij{m^{k}}{i}{\bar{b}} + d + \dbmij{m^{k}}{\bar{a}}{\bari},\\
        \dbmij{m^{k}}{i}{\bar{b}} + d + \dbmij{m^{k}}{\bar{a}}{a} + d + \dbmij{m^{k}}{b}{\bari}, \\
        \dbmij{m^{k}}{i}{a} + d + \dbmij{m^{k}}{b}{\bar{b}} + d + \dbmij{m^{k}}{\bar{a}}{\bari} \\
      \end{array}
    \right )/2 \right\rfloor  
\]
If $\rho^{-1}(i,a) < k$ then
$\dbmij{m^{k}}{i}{a} = \dbmij{m'''}{i}{a}$ whereas if
$\rho^{-1}(i,a) \geq k$ then
$\dbmij{m^{k}}{i}{a} = \dbmij{m}{i}{a} \geq \dbmij{m'''}{i}{a}$: this
implies that $\dbmij{m^{k}}{i}{a} \geq \dbmij{m'''}{i}{a}$ and
likewise $\dbmij{m^{k}}{b}{j} \geq \dbmij{m}{b}{j}$. By
Lemma~\ref{lemma:tightencloseidempotent} and
Corollary~\ref{corollary:idempotence-facts} it follows that
$\dbmij{m^{k}}{i}{a} + d + \dbmij{m^{k}}{b}{j} \geq \dbmij{m'''}{i}{a}
+ d + \dbmij{m'''}{b}{j}$. By a similar argument
$\dbmij{m^k}{i}{\barb} + d + \dbmij{m^k}{\bara}{j} \geq
\dbmij{m'''}{i}{j}$,
$\dbmij{m^k}{i}{a} + d + \dbmij{m^k}{b}{\barb} + d +
\dbmij{m^k}{\bara}{j} \geq \dbmij{m'''}{i}{j}$ and likewise
$\dbmij{m^k}{i}{\barb} + d + \dbmij{m^k}{\bara}{a} + d +
\dbmij{m^k}{b}{j} \geq \dbmij{m'''}{i}{j}$. Moreover
$(\dbmij{m''}{i}{\bari} + \dbmij{m''}{\barj}{j})/2 \geq
\min(\dbmij{m''}{i}{j}, (\dbmij{m''}{i}{\bari} +
\dbmij{m''}{\barj}{j})/2) = \dbmij{m'''}{i}{j}$. Thus
$\dbmij{m^{k}}{i}{j} \geq \dbmij{m'''}{i}{j}$.
Now to show $\dbmij{m'''}{i}{j} \geq \dbmij{m^{k+1}}{i}{j}$.
\begin{align*}
  \dbmij{m'''}{i}{\bari} = \dbmij{m''}{i}{\bari} = 2 \left \lfloor \dbmij{m'}{i}{\bari}/2 \right \rfloor &=
                                                                                                           2 \left \lfloor 
                                                                                                           \min \left ( \begin{array}{l}
                                                                                                                          \dbmij{m}{i}{\bari}, \\
                                                                                                                          \dbmij{m}{i}{a} + d + \dbmij{m}{b}{\bari}, \\
                                                                                                                          \dbmij{m}{i}{\bar{b}} + d + \dbmij{m}{\bar{a}}{\bari},\\
                                                                                                                          \dbmij{m}{i}{\bar{b}} + d + \dbmij{m}{\bar{a}}{a} + d + \dbmij{m}{b}{\bari}, \\
                                                                                                                          \dbmij{m}{i}{a} + d + \dbmij{m}{b}{\bar{b}} + d + \dbmij{m}{\bar{a}}{\bari} \\
                                                                                                                        \end{array} \right )/2 \right \rfloor  \\
                                                                                                         & \geq  2 \left \lfloor 
                                                                                                           \min \left ( \begin{array}{l}
                                                                                                                          \dbmij{m^{k}}{i}{\bari}, \\
                                                                                                                          \dbmij{m^{k}}{i}{a} + d + \dbmij{m^{k}}{b}{\bari}, \\
                                                                                                                          \dbmij{m^{k}}{i}{\bar{b}} + d + \dbmij{m^{k}}{\bar{a}}{\bari},\\
                                                                                                                          \dbmij{m^{k}}{i}{\bar{b}} + d + \dbmij{m^{k}}{\bar{a}}{a} + d + \dbmij{m^{k}}{b}{\bari}, \\
                                                                                                                          \dbmij{m^{k}}{i}{a} + d + \dbmij{m^{k}}{b}{\bar{b}} + d + \dbmij{m^{k}}{\bar{a}}{\bari} \\
                                                                                                                        \end{array} \right )/2 \right \rfloor  \\ &= \dbmij{m^{k+1}}{i}{\bari}
\end{align*}
Hence it follows
$\forall 0 \leq \ell < k + 1. \dbm{m^{k+1}_{\rho^{-1}(\ell)}} =
\dbm{m''_{\rho^{-1}(\ell)}}$. Moreover
$\forall k + 1 \leq \ell < 4n^2 . \dbm{m^{k+1}_{\rho^{-1}(\ell)}} =
\dbm{m_{\rho^{-1}(\ell)}}$ follows from the inductive hypothesis and
the definition of $\dbmij{m^{k+1}}{i}{j}$.

Now suppose that
$j \neq \bari$ and consider
\begin{align*}
  \dbmij{m^{k+1}}{i}{j} &= \min \left ( 
                         \begin{array}{l}
                           \dbmij{m^k}{i}{j} \\
                           \dbmij{m^k}{i}{a} + d + \dbmij{m^k}{b}{j} \\
                           \dbmij{m^k}{i}{\barb} + d + \dbmij{m^k}{\bara}{j} \\
                           \dbmij{m^k}{i}{a} + d + \dbmij{m^k}{b}{\barb} + d + \dbmij{m^k}{\bara}{j} \\
                           \dbmij{m^k}{i}{\barb} + d + \dbmij{m^k}{\bara}{a} + d + \dbmij{m^k}{b}{j}, \\
                           (\dbmij{m^k}{i}{\bari} + \dbmij{m^k}{\barj}{j})/2 
                         \end{array} \right ) 
\end{align*}
Notice that
$(\dbmij{m^{k}}{i}{\bari} + \dbmij{m^{k}}{\barj}{j})/2 \geq
(\dbmij{m'''}{i}{\bari} + \dbmij{m'''}{\barj}{j})/2$ since
$\rho(i,\bari) < 2n \leq \rho(i,j) = k$ and similarly
$\rho(\barj,j) < \rho(i,j) = k$. By
Lemma~\ref{lemma:tightencloseidempotent}
$\dbmij{m'''}{i}{\bari} + \dbmij{m'''}{\barj}{j}/2 \geq
\dbmij{m'''}{i}{j}$ and thus
$(\dbmij{m^{k}}{i}{\bari} + \dbmij{m^{k}}{\barj}{j})/2 \geq
\dbmij{m'''}{i}{j}$. Repeating the argument above it follows that $\dbmij{m^{k+1}}{i}{j} \geq \dbmij{m'''}{i}{j}$.
Now to show $\dbmij{m'''}{i}{j} \geq \dbmij{m^{k+1}}{i}{j}$ observe:
\begin{align*}
  \dbmij{m'''}{i}{j} &= \dbmij{m''}{i}{j} = \min \left (\dbmij{m'}{i}{j}, \frac{\dbmij{m'}{i}{\bari} + \dbmij{m'}{\barj}{j}}{2} \right ) \\
                     &= \min \left ( \min \left (\begin{array}{l}
                                                  \dbmij{m}{i}{j}, \\
                                                  \dbmij{m}{i}{a} + d + \dbmij{m}{b}{j}, \\
                                                  \dbmij{m}{i}{\bar{b}} + d + \dbmij{m}{\bar{a}}{j},\\
                                                  \dbmij{m}{i}{\bar{b}} + d + \dbmij{m}{\bar{a}}{a} + d + \dbmij{m}{b}{j}, \\
                                                  \dbmij{m}{i}{a} + d + \dbmij{m}{b}{\bar{b}} + d + \dbmij{m}{\bar{a}}{j} \\
                                                 \end{array} \right ), \frac{\dbmij{m'}{i}{\bari} + \dbmij{m'}{\barj}{j}}{2} \right) \\
                      &= \min \left ( \min \left (\begin{array}{l}
                                                  \dbmij{m^{k}}{i}{j}, \\
                                                  \dbmij{m^{k}}{i}{a} + d + \dbmij{m^{k}}{b}{j}, \\
                                                  \dbmij{m^{k}}{i}{\bar{b}} + d + \dbmij{m^{k}}{\bar{a}}{j},\\
                                                  \dbmij{m^{k}}{i}{\bar{b}} + d + \dbmij{m^{k}}{\bar{a}}{a} + d + \dbmij{m^{k}}{b}{j}, \\
                                                  \dbmij{m^{k}}{i}{a} + d + \dbmij{m^{k}}{b}{\bar{b}} + d + \dbmij{m^{k}}{\bar{a}}{j} \\
                                                 \end{array} \right ), \frac{\dbmij{m'}{i}{\bari} + \dbmij{m'}{\barj}{j}}{2} \right)  = \dbmij{m^{k+1}}{i}{j}\\
\end{align*}
Hence it follows
$\forall 0 \leq \ell < k + 1. \dbm{m^{k+1}_{\rho^{-1}(\ell)}} =
\dbm{m''_{\rho^{-1}(\ell)}}$. Note
$\forall k + 1 \leq \ell < 4n^2 . \dbm{m^{k+1}_{\rho^{-1}(\ell)}} =
\dbm{m_{\rho^{-1}(\ell)}}$ follows by inductive hypothesis and definition of $\dbmij{m^{k+1}}{i}{j}$.

Suppose $\dbm{m'}$ is inconsistent hence $\dbmij{m'}{i}{i} < 0$. Put
$k = \rho(i,i)$. But $\dbm{m^k} \leq \dbm{m}$ and by
Proposition~\ref{lemma-monotonicity}
$\dbmij{m^{4n^2}}{i}{i} = \dbmij{m^{k+1}}{i}{i} \leq \dbmij{m'}{i}{i}
< 0$ as required. \qed
\end{proof}

%%% Local Variables:
%%% mode: latex
%%% TeX-master: "main"
%%% End:

\end{document}